\newtheorem{theorem}{Theorem}
\newtheorem{lemma}{Lemma}
\newtheorem{definition}{Definition}
\newcommand{\R}{\mathbb{R}}
\newcommand{\Rn}[1][n]{\mathbb{R}^{#1}}
\newcommand{\K}{\mathcal{K}}
\renewcommand{\L}{{\mathcal{L}}}
\newcommand{\M}{M^n}
\renewcommand{\S}{S^n}
\newcommand{\Co}{C^n}
\renewcommand{\P}{P^n}
\newcommand{\Core}{\mathcal{C}}
\newcommand{\Kc}{{\mathcal{K}^c}}
\newcommand{\valfa}{{v_\alpha}}
\newcommand{\vx}{{v_x}}
\newcommand{\vsigmax}{{v_{\sigma,x}}}
\title{Player-centered incomplete cooperative games}
\author{Martin \v{C}ern\'y, Michel Grabisch}
\begin{document}

\maketitle

\begin{abstract}
    The computation of a solution concept of a cooperative game usually employs values of all coalitions. However, in some applications, the values of some of the coalitions might be unknown due to high costs associated with their determination or simply because it is not possible to determine them exactly. We introduce a method to approximate standard solution concepts based only on partial characteristic function of the cooperative game. In this paper, we build on our previous results and generalise the results of our methods to a significantly larger class of structures of incomplete information.
    
\end{abstract}
	
	
	\section{Introduction}
	Cooperative game theory, as introduced by von Neumann and Morgenstern
	\cite{Neumann1947} in order to model the strength and power of coalitions, has been
	largely developed in the last fifty years (see, e.g., the monograph of Peleg and
	Sudh\"olter \cite{Peleg2007}), especially the branch of cooperative games with
	transferable utility (the so-called TU-games), where players can transfer freely
	money/utility inside a given coalition. Applications of cooperative games are
	numerous, mainly in economics (fair division of a benefit/cost, bankruptcy
	problems, etc.), social choice (voting games), but also more recently in
	artificial intelligence and machine learning (e.g., the SHAP method
	\cite{Lindberg2017}, based on the well-known Shapley value \cite{Shapley1953}).
	
	Mathematically speaking, TU-games are set functions on a finite universe,
	vanishing on the empty set. They are therefore closely related to discrete
	mathematics in general, in particular combinatorial optimization (see, e.g., the
	monograph of Fujishige \cite{Fujishige2005}), hypergraphs, Boolean and pseudo-Boolean
	functions (see Crama and Hammer \cite{Crama2011}), and some notions of
	combinatorics like the M\"obius function \cite{Rota1964}.
	
	\medskip
	
	Considering a TU-game $v$ on a set of $n$ players $N$, its definition requires
	to know $v(S)$ for any coalition $S\subseteq N$. Usually, in an economic
	context $v(S)$ is interpreted as the maximal benefit the coalition $S$ can
	achieve without the help of the players in $N\setminus S$, and is called the
	{\it worth} of $S$. In a voting context, $v(S)$ is perceived as the power of
	coalition $S$ in an election process. As $n$ grows, it becomes more and more
	unlikely that in practice, $v(S)$ can be known for {\it every} coalition $S$.
	{\it Incomplete games}, which is the topic of the present paper, precisely deals
	with this situation. An incomplete game $v$ is known only on a subcollection
	$\K$ of $2^N$. In reality, $v(S)$ for $S\not\in\K$ is defined but remains
	unknown (this is in opposition with the so-called {\it games with restricted
		cooperation} (see, e.g., \cite{Faigle1989}), where coalitions outside a given
	subcollection $\K$ are ``forbidden'', and $v$ is not defined on these
	coalitions). 
		
	Incomplete cooperative games were introduced by Willson~\cite{Willson1993} in 1993. After introducing the concept, he focused on a value of incomplete games based on the definition of the Shapley value. More than two decades later, Inuiguchi and Masuya renewed the research~\cite{Masuya2016}, focusing on superadditivity of extensions, with Masuya continuing in the research of the Shapley value~\cite{Masuya2021}. Further, Bok et al.~\cite{Bok2020} studied convexity and positivity, Bok and Cerny~\cite{Bok2021} studied 1-convexity, Yu~\cite{Xiaohui2021} extended incomplete games to games with coalition structure, focusing on proportional Owen value, and \v{C}ern\'{y}~\cite{Cerny2022} was dealing with approximations of solution concepts such as the core, the (pre-)kernel and others on \emph{minimal incomplete games}.

	\medskip
	
	There are two natural questions with incomplete games. The first question is:
	What could be the unknown values $v(S)$, $S\not\in\K$? A particular assignment
	for these values defines an {\it extension} of the incomplete game.
	The question makes sense
	only in the case where one imposes some restriction on the game $v$, otherwise,
	any value for $v(S)$ can be taken. Common restrictions are: the game should be
	monotonic (i.e., nondecreasing with respect to set inclusion), superadditive or
	subadditive, convex or concave
	(a.k.a. super/submodular), positive (i.e., with nonnegative M\"obius transform),
	etc. Each of these restrictions corresponds to classical and useful families of
	games. Monotonic games correspond to capacities \cite{Choquet1953}, and are central in
	decision theory. Superadditivity is considered as a basic requirement for a
	TU-game $v$ when $v(S)$ represents a benefit. Convex games are special
	superadditive games, having remarkable properties, both in game theory and in
	decision theory. Lastly, positive games are special convex games which form the
	basis of Dempster-Shafer theory \cite{Shafer1976} for modelling uncertainty. 
	
	The second question is: Since usual solution concepts of TU-games (the core, the
	Shapley value, etc.) cannot be calculated on incomplete games, considering all
	possible extensions in a given family of games, what would be these solutions
	for each extension? If a solution is set-valued like the core, we may consider
	the intersection over all extensions, or the union of the solutions. If a solution is
	point-valued like the Shapley value, we may obtain an interval when considering
	all extensions.
	
	The paper addresses both questions in full detail for a specific family of incomplete
	games, which we call {\it player-centered}. In a player-centered incomplete
	game, the subcollection $\K$ is the set of all coalitions containing a given
	player, say $i$. Such games represent the situation where the information on
	$v(S)$ is gathered by player $i$, who ignores the worth of coalitions where he
	is not present. Put differently, the game is viewed through the eyes of player
	$i$. This kind of situation arises every time players reveal their strategy only
	to members of the coalition they belong to, as it could be the case for example
	if players are competing firms, or more generally in any competitive
	context.
	
	\medskip
	
	The paper is organized as follows. Section 2 presents the basic notions of
	classical TU-games, solution concepts, and introduces incomplete games. Section 3
	presents the player-centered incomplete games and studies their
	extensions. Three types of extensions are considered: positive extensions (with
	positive games), convex and superadditive extensions, and monotone
	extensions. In each case, the geometric structure of the set of extensions is
	studied (extreme points, extreme rays). In Section 4 we study how the main
	solutions concepts (the core, the Shapley value, the $\tau$-value) can be
	approximated, depending on the kind of extension which is considered.  
	
	\section{Preliminaries}
        Throughout the paper, $N$ is a finite set of $n$ elements (the set of
        players). The set of permutations on $N$ is denoted by $\Sigma_n$.  We
        denote by $2^N$ the power set of $N$. The \emph{lattice closure}
        $LC(\K)$ of $\K \subseteq 2^N$ in the partially ordered set
        $(2^N,\subseteq)$ is the inclusion-minimal subset of $2^N$ that contains
        $\K$ and that is closed under the operations of union and intersection of
        sets. To denote the bounds of a real closed interval $a = \left[\underline{a},\overline{a}\right]$, we use the lower and upper bar, respectively.
	
	\begin{definition}\cite{Soltan2015}\label{def:extreme-point}
		Let $P \subseteq \R^n$ be a polyhedron. We say that $e \in P$ is an \emph{extreme point} (of $P$) if for every $x \in \R^n$, we have $(e+x) \in P \wedge (e-x) \in P \implies x = 0$.
	\end{definition}
	
	\begin{theorem}\label{thm:extreme-point-char}
		A nonzero element $x$ of a polyhedron $P \subseteq \R^n$ is an extreme point if and only if there are $n$ linearly independent constraints binding at $x$.
	\end{theorem}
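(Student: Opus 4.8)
The plan is to fix once and for all a finite linear description $P=\{x\in\R^n : a_j^{\top}x\le b_j,\ j=1,\dots,m\}$ of the polyhedron (writing any equality constraint as a pair of opposite inequalities), and to write $I(x)=\{j : a_j^{\top}x=b_j\}$ for the set of constraints binding at a point $x\in P$. The statement then reads: $x$ is an extreme point of $P$ if and only if $\{a_j : j\in I(x)\}$ contains $n$ linearly independent vectors, i.e.\ spans $\R^n$. Both implications are short, and I would establish them as follows.

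For the ``if'' direction, suppose $\{a_j : j\in I(x)\}$ spans $\R^n$ and let $d\in\R^n$ satisfy $x+d\in P$ and $x-d\in P$. For every $j\in I(x)$ the inequalities $a_j^{\top}(x+d)\le b_j=a_j^{\top}x$ and $a_j^{\top}(x-d)\le b_j=a_j^{\top}x$ together force $a_j^{\top}d=0$. Hence $d$ is orthogonal to a spanning set of $\R^n$, so $d=0$, and by Definition~\ref{def:extreme-point} $x$ is an extreme point. (This direction does not use $x\neq 0$.)

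For the ``only if'' direction I would argue by contraposition: assume $\{a_j : j\in I(x)\}$ does not span $\R^n$, so there is $d\neq 0$ with $a_j^{\top}d=0$ for all $j\in I(x)$. For the non-binding constraints $j\notin I(x)$ there is strict slack $\delta_j:=b_j-a_j^{\top}x>0$, so setting $\varepsilon:=\min\{\delta_j/|a_j^{\top}d| : j\notin I(x),\ a_j^{\top}d\neq 0\}$ (and $\varepsilon:=1$ if that set is empty) yields $\varepsilon>0$ with $a_j^{\top}(x\pm\varepsilon d)\le b_j$ for every $j$: trivially for $j\in I(x)$ since $a_j^{\top}d=0$, and by the choice of $\varepsilon$ for $j\notin I(x)$. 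Thus $x+\varepsilon d\in P$ and $x-\varepsilon d\in P$ with $\varepsilon d\neq 0$, so $x$ is not an extreme point.

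The argument is essentially bookkeeping and I do not anticipate a genuine obstacle; the points needing care are (i) committing to a fixed finite inequality description, so that ``binding constraints'' and their linear independence are well defined, (ii) the degenerate case $I(x)=\emptyset$ (an interior point is never extreme, consistent with the fact that $0<n$ constraints bind there), and (iii) ensuring the scalar $\varepsilon$ in the second part is strictly positive, which is why the minimum is taken only over indices with $a_j^{\top}d\neq 0$ and the remaining (finitely many) case is treated separately. The hypothesis $x\neq 0$ is not actually required for either implication; it is kept because that is the form in which the result is later used.
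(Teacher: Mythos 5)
The paper states this theorem without proof --- it is the classical algebraic characterization of vertices of a polyhedron, quoted as a known fact (cf.\ the reference to Soltan's \emph{Lectures on Convex Sets} accompanying Definition~\ref{def:extreme-point}). Your argument is the standard textbook proof and is correct: the ``if'' direction via orthogonality of any feasible perturbation $d$ to a spanning set of active constraint normals, and the ``only if'' direction by contraposition, perturbing along a nonzero $d$ in the orthogonal complement of the active normals with an $\varepsilon$ chosen from the slacks of the inactive constraints. Your handling of the degenerate cases ($I(x)=\emptyset$, constraints with $a_j^{\top}d=0$) is careful, and your observation that the hypothesis $x\neq 0$ plays no role is accurate --- that qualifier in the statement appears to be a carry-over from the companion characterization of extreme rays, where nonzeroness does matter.
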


	\subsection{Classical cooperative games}
	For more on cooperative games outside the scope of this text, see, e.g.,
        \cite{Grabisch2016,Peleg2007,Peters2008}. 
	\begin{definition}
		A \emph{cooperative game} is an ordered pair $(N,v)$, where $N$
                is the set of players and $v\colon 2^N \to
		\mathbb{R}$ is its characteristic function. Further, $v(\emptyset) = 0$.
	\end{definition}
	
	We denote the set of $n$-person cooperative games by $\Gamma^n$. Sets $S
        \subseteq N$ are called \emph{coalitions}, while $v(S)$ is called the {\it
          worth} of $S$. We often replace $\left\{i\right\}$ with $i$. To denote the size of coalitions e.g. $N,S,T$, we use $n,s,t$, respectively. We often write $v$ instead of $(N,v)$ whenever there
	is no confusion over the player set and associate the characteristic functions $v\colon 2^N\to\mathbb{R}$ with vectors $v\in\mathbb{R}^{2^{\lvert N \rvert}-1}$. For $x \in \Rn$ and $S \subseteq N$, $x(S) \coloneqq \sum_{i \in S} x_i$. 
	\begin{definition}\label{def:classes-of-games}
		Let $(N,v)$ be a cooperative game. The game is
		\begin{enumerate}
			\item {\it monotone} if it satisfies $v(S) \le v(T)$, \hspace{11ex} $S\subseteq T \subseteq N$;
			\item {\it superadditive} if $v(S) + v(T) \le v(T\cup S)$, \hspace{7ex} $S, T \subseteq N, S\cap T=\emptyset$;
			\item {\it convex} if $v(S) + v(T) \le v(T\cup S) + v(T\cap S)$, \hspace{1,5ex} $S, T \subseteq N$.
		\end{enumerate}
		We denote the set of all monotonic, superadditive and convex $n$-person games by $\M$, $\S$ and $\Co$, respectively.
	\end{definition}
	
	The \emph{Möbius transform} of $(N,v)$ is defined for every $T \subseteq N$ by
	\[
	m^v(T) \coloneqq \sum_{S \subseteq T}(-1)^{\lvert T \rvert - \lvert S \rvert}v(S).
	\]
	\begin{definition}
		A cooperative game $(N,v)$ is \emph{positive} if $m^v(T) \geq 0$ for every $T \subseteq N$. We denote the set of all positive cooperative $n$-person games by $\P$.
	\end{definition}
	Shapley~\cite{Shapley1953} showed that every $v \in \Gamma^n$ can be uniquely expressed as a linear combination $v= \sum_{\emptyset \neq T \subseteq N}m^v(T)u_T$, where $(N,u_T)$ are \emph{unanimity games} defined as 
	\[
	u_T(S) \coloneqq 
	\begin{cases}
		1 & \text{if }T \subseteq S,\\
		0 & \text{otherwise.}\\
	\end{cases}
	\]
	Equivalently, $v(S) = \sum_{\emptyset \neq T \subseteq S}m^v(T)$. Unanimity games are themselves positive games and positive games are convex. Also, a cooperative game $(N,v)$ is convex if and only if for every $A,B \subseteq N, A \subseteq B$ such that $\lvert A \rvert = 2$, it satisfies
	\begin{equation}\label{eq:convex-mobius-char}
		\sum_{T \in [A:B]}m^v(T) \geq 0
	\end{equation}
	where $[A:B] = \{ T \subseteq N \mid A \subseteq T \subseteq B\}$.
	
	\subsection{Solution concepts and payoff vectors}
	The main task of cooperative game theory is to distribute $v(N)$ among
        the players. \emph{Payoff vectors} are vectors $x \in \Rn$, where $x_i$
        represents payoff of player $i$, $\mathbb{X}(v) \coloneqq \{x \in
        \Rn \mid x(N) = v(N)\}$ is the set of \emph{preimputations} and $\mathbb{I}(v) \coloneqq \{ x \in \mathbb{X}(v) \mid x_i \geq v(i) \text{ for every } i \in N\}$ is the set of \emph{imputations}. Special payoff vectors are the \emph{upper vector} $b^v \in \mathbb{R}^n$ defined by $b^v_i\coloneqq  v(N) - v(N\setminus i)$ and the \emph{lower vector} $a^v$ defined as $a^v_i = \max_{S: i \in S} v(S) - b^v(S)$. We also make use of the \emph{gap function} $g^v \colon 2^N \to \R$, defined as $g^v(S) \coloneqq b^v(S) - v(S)$.	
	\begin{definition}\label{def:core}
		The \emph{core} $\Core(v)$ of a cooperative game $(N,v)$ is defined as
		\[\Core(v) \coloneqq \{x \in \mathbb{X}(v) \mid x(S) \geq v(S)
                \text{ for every } S \subseteq N\}.\]
	\end{definition}
	The property $x(S) \geq v(S)$ for every $S \subseteq N$ is called \emph{coalitional rationality}. The core of a superadditive game might be empty, but it is always nonempty for convex games.
	
	\begin{definition}\label{def:shapley}
		The \emph{Shapley value} $\phi \colon \Gamma^n \to \mathbb{R}^n$ is defined as
		\[\phi_i(v) \coloneqq \sum\limits_{S\subseteq N \setminus i}\frac{s!(n-s-1)!}{n!}(v(S\cup i) - v(S)).\]
	\end{definition}
	We often denote $\gamma_S = \frac{s!(n-s-1)!}{n!}$. The Shapley value is \emph{linear}, meaning
	\begin{equation}\label{eq:shapley-linearity}
		\phi(\alpha v+\beta w) = \alpha\phi(v) + \beta\phi(w),\quad
                \alpha,\beta\in \mathbb{R},v,w\in\Gamma^n.
	\end{equation}
	In terms of Möbius transform, it can be expressed as
	\begin{equation}
		\phi_i(v) = \sum_{S \subseteq N, i \in S}\frac{m^v(S)}{\lvert S \rvert}.
	\end{equation}
	\begin{definition}
		The $\tau$-value $\tau(v)$ of a cooperative game $(N,v)$ is the unique convex combination of $a^v$ and $b^v$ satisfying $\sum_{i \in N}\tau_i(v)=v(N)$.
	\end{definition}
	The $\tau$-value does not exist for a cooperative game in general, however, for convex games, it always exists and can by expressed by an explicit formula.
	\begin{theorem}\cite{Driessen1985-thesis}\label{thm:convex-tau}
		For a convex $(N,v)$, the $\tau$-value can be expressed as
		\begin{equation*}
			\tau_i(v) = b^v_i - \frac{g^v(N)}{\sum_{i \in N}g^v(i)}g^v(i).
		\end{equation*}
	\end{theorem}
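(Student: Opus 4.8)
The plan is to obtain the formula directly from the definition of the $\tau$-value as a convex combination of $a^v$ and $b^v$. Write $\tau(v)=\lambda a^v+(1-\lambda)b^v$ with $\lambda\in[0,1]$; the normalisation $\sum_{i\in N}\tau_i(v)=v(N)$ forces $\lambda a^v(N)+(1-\lambda)b^v(N)=v(N)$, hence $\lambda=\frac{b^v(N)-v(N)}{b^v(N)-a^v(N)}$ whenever $b^v(N)\neq a^v(N)$, and then $\tau_i(v)=b^v_i-\lambda\bigl(b^v_i-a^v_i\bigr)$. So the theorem reduces to three things: (i) the identity $a^v_i=v(i)$ for convex $v$, which turns $b^v_i-a^v_i$ into $b^v_i-v(i)=g^v(i)$ and $b^v(N)-a^v(N)$ into $\sum_{j\in N}g^v(j)$; (ii) well-posedness, i.e. $a^v(N)\le v(N)\le b^v(N)$, so that $\lambda\in[0,1]$; and (iii) the bookkeeping $b^v(N)-v(N)=g^v(N)$, which is immediate.

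The key step is (i), $a^v_i=v(i)$. From the definition of the lower vector one has $v(S)-b^v(S\setminus i)=b^v_i-g^v(S)$, so $a^v_i=b^v_i-\min_{S\ni i}g^v(S)$, and it suffices to show this minimum is attained at the singleton $S=\{i\}$. Fix $S\ni i$, enumerate $S\setminus i=\{j_1,\dots,j_k\}$, and put $S_0=\{i\}$, $S_\ell=S_{\ell-1}\cup\{j_\ell\}$. Then $g^v(S)-g^v(i)=b^v(S\setminus i)-\bigl(v(S)-v(i)\bigr)=\sum_{\ell=1}^k\bigl[\,(v(N)-v(N\setminus j_\ell))-(v(S_\ell)-v(S_{\ell-1}))\,\bigr]$, and each summand is nonnegative: $S_{\ell-1}\subseteq N\setminus j_\ell$, so convexity (in the equivalent ``increasing marginal contributions'' form) gives $v(S_{\ell-1}\cup j_\ell)-v(S_{\ell-1})\le v(N)-v(N\setminus j_\ell)$. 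Hence $g^v(S)\ge g^v(i)$, so $a^v_i=b^v_i-g^v(i)=v(i)$.

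For (ii) I would use that convex games have a nonempty core. Pick $x\in\Core(v)$; then $x(N)=v(N)$ and coalitional rationality $x(N\setminus i)\ge v(N\setminus i)$ give $x_i\le v(N)-v(N\setminus i)=b^v_i$, while $x$ being an imputation gives $x_i\ge v(i)=a^v_i$. Summing over $i\in N$ yields $a^v(N)\le v(N)\le b^v(N)$, so $\lambda\in[0,1]$, and substituting the expressions from (i) gives $\lambda=g^v(N)/\sum_{j\in N}g^v(j)$, whence $\tau_i(v)=b^v_i-\frac{g^v(N)}{\sum_{j\in N}g^v(j)}\,g^v(i)$.

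I expect step (i) to be the only real obstacle; the rest is routine. One should also dispose of the degenerate case $\sum_{j}g^v(j)=0$ separately: since each $g^v(j)=b^v_j-v(j)\ge0$ by superadditivity, they must all vanish, so $a^v=b^v$, $v$ is additive with $v(N)=\sum_j v(j)$, and then $\tau_i(v)=v(i)=b^v_i$ --- consistent with the stated formula read with the convention that the (vanishing) fraction multiplied by $g^v(i)=0$ equals $0$.
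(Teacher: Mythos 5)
The paper states this theorem only as a citation to Driessen's thesis and supplies no proof, so there is nothing to compare line by line; your argument is the standard derivation and it is correct and complete. The decomposition into (i) $a^v_i=v(i)$ for convex $v$, (ii) $a^v(N)\le v(N)\le b^v(N)$, and (iii) the bookkeeping $b^v(N)-v(N)=g^v(N)$ is exactly what is needed, and the key step (i) --- telescoping $v(S)-v(i)$ along a chain $S_0=\{i\}\subseteq S_1\subseteq\dots\subseteq S_k=S$ and dominating each increment $v(S_\ell)-v(S_{\ell-1})$ by $v(N)-v(N\setminus j_\ell)$ via the increasing-marginal-contributions form of convexity --- is sound, as is the use of core nonemptiness for (ii). Two small remarks. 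First, you are implicitly reading the lower vector as $a^v_i=\max_{S\ni i}\bigl(v(S)-b^v(S\setminus i)\bigr)$, the standard Tijs minimal-right vector; the paper's literal text $\max_{S\ni i}\bigl(v(S)-b^v(S)\bigr)$ differs from this by $b^v_i$ and would make the stated formula false, so your reading is the one under which the theorem holds --- this is a typo in the paper, not a gap in your proof, but it deserves an explicit sentence. Second, your handling of the degenerate case $\sum_j g^v(j)=0$ is right, and it also quietly repairs the well-posedness of the definition itself: since each $g^v(j)\ge 0$, they all vanish, so $a^v=b^v$ and the ``unique convex combination'' is unique as a point even though the coefficient $\lambda$ is not.
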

	In terms of Möbius transform, this can be rewritten as
	\begin{equation}\label{eq:tau-value-mobious}
		\tau_i(v) = m^v(i) + \frac{\sum_{S \subseteq N, \lvert S \rvert > 1}m^v(S)}{\sum_{S \subseteq N, \lvert S \rvert > 1}s\cdot m^v(S)}\sum_{S \subseteq N, \lvert S \rvert > 1, i \in S}m^v(S).
	\end{equation}
	\subsection{Incomplete cooperative games}
	\begin{definition}\emph{(Incomplete game)}
		An incomplete game is a tuple $(N,\K,v)$ where $N$ is the set of
                players, $\K \subseteq 2^N$ is the set of coalitions with known values and $v\colon \K \to \R$ is the characteristic function. Further, $\emptyset \in \K$ and $v(\emptyset)=0$.
	\end{definition}
	
	We denote the set of $n$-person incomplete games with $\K$ by $\Gamma^n(\K)$. Further, $\Kc \coloneqq 2^N \setminus \K$. Incomplete games represent partial information about cooperation of players. This means that the formation of $S \in \Kc$ is possible even though we do not know its value. This distinguishes the approach from the theory of \emph{games with restricted cooperation}, where the formation of $S \in \Kc$ is impossible. For more on restricted games, see~\cite{Grabisch2016}. 
	
	Since the formation of $S \in \Kc$ is possible in our approach, we would
        like to consider $S$ when determining the payoff distribution. One way
        to do this is to impose bounds on the worth of $S$, based on the properties of the game. Imagine our knowledge about an underlying complete game $(N,v)$ is represented by $(N,\K,v)$ and we also know that $(N,v)$ satisfies further properties, e.g., convexity. Based on this, we can consider the set of possible candidates for $(N,v)$. The idea is formally captured in the following definition.
	
	\begin{definition}\emph{($C$-extension)}\label{def:extension}
		Let $C\subseteq \Gamma^n$ be a class of $n$-person cooperative
                games. The game $(N,w) \in C$ is a \emph{$C$-extension} of an incomplete game $(N,\K,v)$ if $w(S)=v(S)$ for every $S \in \mathcal K$. 
	\end{definition}
	
	By $C(\K,v)$ or simply $C(v)$ we denote the set of $C$-extensions of an incomplete game
        $(N,\K,v)$. We write \emph{$C(v)$-extension} when we want to emphasize
        $(N,\K,v)$. We say that $(N,\K,v)$ is \emph{$C$-extendable} if $C(v)$ is nonempty. The set of all $C$-extendable incomplete games with fixed $\K$ is denoted by $C(\K)$.
	
	From the inclusion relations between $\M,\S,\Co$ and $\P$, it holds $\M(v) \supseteq \P(v)$ and $\S(v) \supseteq \Co(v) \supseteq \P(v)$. Further, sets $\M(v)$, $\S(v)$, $\Co(v)$ and $\P(v)$ are described by systems of linear equalities and inequalities, therefore they form polyhedra. Polyhedra can be also described by their extreme points and extreme rays (see~\cite{Soltan2015} on fundamentals of convex sets). We focus on finding this alternative description in the next section. In the rest of this subsection, we collect previous results on $\Co$-extendability and properties of the sets of $\P$-extensions. Following is a modification of Theorem 7 from~\cite{Bhaskar2018}.
	
	\begin{theorem}\cite{Bhaskar2018} \label{thm:BK2018}
		Let $(N,\K,v)$ be an incomplete cooperative game and 
		\[\mathcal{F} \coloneqq LC(\K) \cap \{S \subseteq N \mid
                \underline{S},\overline{S} \in\K \text{ s.t. } \underline{S}
                \subseteq S \subseteq \overline{S}\}.\] Then $(N,\K,v)$ is $\Co$-extendable if and only if there is convex $w\colon \mathcal{F} \to \mathbb{R}$ such that $w(S)=v(S)$ for $S \in \K$.
	\end{theorem}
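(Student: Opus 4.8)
The plan is to prove the two implications separately, with almost all the work in the converse. For the ``only if'' direction I would simply restrict. First note $\K\subseteq\mathcal{F}$ (each $S\in\K$ lies in $LC(\K)$ and is sandwiched by $S\subseteq S\in\K$), so if $w$ is a $\Co$-extension then $w|_{\mathcal{F}}$ agrees with $v$ on $\K$ and still satisfies $w(S)+w(T)\le w(S\cup T)+w(S\cap T)$ for all $S,T\in\mathcal{F}$ with $S\cup T,S\cap T\in\mathcal{F}$, which is the natural meaning of ``convex $w\colon\mathcal{F}\to\R$''. Before the converse I would record some structure of $\mathcal{F}$: since $\emptyset\in\K$ the lower bound of the sandwich is automatic, so $\mathcal{F}=\{S\in LC(\K)\mid S\subseteq B\text{ for some }B\in\K\}$; hence $\mathcal{F}$ is closed under intersection and is a down-set of the lattice $LC(\K)$, and so $\mathcal{U}:=LC(\K)\setminus\mathcal{F}$ is an up-set of $LC(\K)$ whose elements are exactly the members of $LC(\K)$ contained in no coalition of $\K$.

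For the converse, given a convex $w\colon\mathcal{F}\to\R$ with $w|_\K=v$, the plan is to extend in two stages: $\mathcal{F}\rightsquigarrow LC(\K)\rightsquigarrow 2^N$. In Stage~1 I would set $w':=w$ on $\mathcal{F}$ and then process the elements of $\mathcal{U}$ in an order refining $\subseteq$ (say, by cardinality), putting, on reaching $S$, $w'(S)$ equal to the maximum of $0$ and of all numbers $w'(A)+w'(B)-w'(A\cap B)$ with $A,B\in LC(\K)$, $A\cup B=S$, $A,B\subsetneq S$ --- each such term already being defined, as $A,B,A\cap B$ are proper subsets of $S$ in $LC(\K)$. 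Then $w'$ should be convex on $LC(\K)$: for incomparable $P,Q$, either $P\cup Q\in\mathcal{F}$ (hence $P,Q,P\cap Q\in\mathcal{F}$ and the inequality is part of the hypothesis on $w$) or $P\cup Q\in\mathcal{U}$, in which case $(P,Q)$ is one of the decompositions used to define $w'(P\cup Q)$. The only point requiring care is that processing never imposes a conflicting upper bound on $w'(S)$; this holds because such a bound would come from a convexity inequality with $S$ as a summand and its join-side $S\cup T\supsetneq S$, which lies in $\mathcal{U}$ and has larger cardinality, hence is processed later.

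Stage~2, extending the convex $w'$ from the lattice $LC(\K)$ to all of $2^N$, is where I expect the real difficulty, and here I would try an explicit formula. For $S\subseteq N$ let $S^{\downarrow}:=\bigcup\{L\in LC(\K)\mid L\subseteq S\}\in LC(\K)$ and put $\hat w(S):=w'(S^{\downarrow})-\lambda\,(|S|-|S^{\downarrow}|)$ for a constant $\lambda>0$ to be chosen. This gives $\hat w(\emptyset)=0$ and $\hat w=w'$ on $LC(\K)$ (so $\hat w|_\K=v$). For convexity I would use the two facts $(P\cap Q)^{\downarrow}=P^{\downarrow}\cap Q^{\downarrow}$ and $(P\cup Q)^{\downarrow}\supseteq P^{\downarrow}\cup Q^{\downarrow}$, which together with modularity of cardinality yield $\hat w(P)+\hat w(Q)-\hat w(P\cup Q)-\hat w(P\cap Q)=W-\lambda D$, where $D:=|(P\cup Q)^{\downarrow}|-|P^{\downarrow}\cup Q^{\downarrow}|\ge 0$ and $W:=w'(P^{\downarrow})+w'(Q^{\downarrow})-w'((P\cup Q)^{\downarrow})-w'((P\cap Q)^{\downarrow})$. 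If $D\ge 1$, picking $\lambda$ larger than $4\max_{L\in LC(\K)}|w'(L)|$ makes this negative; if $D=0$, then $(P\cup Q)^{\downarrow}=P^{\downarrow}\cup Q^{\downarrow}$ and $W\le 0$ is precisely convexity of $w'$ on $LC(\K)$ at $P^{\downarrow},Q^{\downarrow}$. So $\hat w$ is a convex extension of $v$, completing the converse.

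The load-bearing step is Stage~2: producing an extension from a sublattice to the whole Boolean cube that preserves convexity for \emph{every} pair of coalitions, including pairs lying entirely outside $LC(\K)$, where one has no direct control. The observation that makes it work is that the ``largest lattice element below $S$'' map is $\cap$-preserving but only subcommutes with $\cup$, forcing $|S^{\downarrow}|$ to be supermodular, so that a steep linear penalty on $|S|-|S^{\downarrow}|$ dominates every convexity constraint except the ones that already hold because $w'$ is convex on $LC(\K)$. Should one prefer to avoid reproving it, Stage~2 can alternatively be cited as the known fact that a convex game on a sublattice of $2^N$ extends to a convex game on $2^N$; then the only genuinely new content is the down-set reduction of Stage~1. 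The forward direction and Stage~1 are essentially bookkeeping with the down-set/up-set structure of $\mathcal{F}$ and $\mathcal{U}$.
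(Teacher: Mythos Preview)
The paper does not supply its own proof of this theorem; it is quoted from \cite{Bhaskar2018} and used only as a black box (to derive $\Co$-extendability of player-centered games). So there is no in-paper argument to compare your proposal against.

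That said, your two-stage plan is sound. The forward direction and the down-set/up-set description of $\mathcal{F}$ inside $LC(\K)$ are routine. Stage~1 works precisely because $\mathcal{U}=LC(\K)\setminus\mathcal{F}$ is an up-set of $LC(\K)$: every convexity constraint that could impose an \emph{upper} bound on $w'(S)$ for $S\in\mathcal{U}$ has its join side $S\cup T$ also in $\mathcal{U}$ with strictly larger cardinality, so the greedy maximum is never contradicted later; and every \emph{lower}-bound constraint is one of the terms in your maximum. (The ``$0$'' in the max is harmless but unnecessary: any $S\in\mathcal{U}$, being in $LC(\K)$ but contained in no member of $\K$, is a join of at least two strictly smaller elements of $LC(\K)$, since each $\K$-intersection lies in $\mathcal{F}$.) Stage~2 is the real content, and your closure-plus-penalty construction $\hat w(S)=w'(S^{\downarrow})-\lambda(|S|-|S^{\downarrow}|)$ is a clean way to push convexity from a sublattice to $2^N$: the identities $(P\cap Q)^{\downarrow}=P^{\downarrow}\cap Q^{\downarrow}$ and $(P\cup Q)^{\downarrow}\supseteq P^{\downarrow}\cup Q^{\downarrow}$ make $S\mapsto|S^{\downarrow}|$ supermodular, so for large $\lambda$ the penalty kills every constraint except those with $D=0$, which reduce to convexity of $w'$ on $LC(\K)$. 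The computation $W-\lambda D$ and the choice $\lambda>4\max_{L}|w'(L)|$ are correct as written.
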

	
	The last two results concern boundedness of the set of $\P$-extensions
        and a characterisation of extreme games of $\P(v)$. For a game $(N,w)$, denote by $E(w)\coloneqq\{T \subseteq N \mid m^w(T)=0\}$ the set of coalitions with zero Möbius transform. 
	\begin{theorem}\label{thm:vertex-positive}\cite{Bok2020}
		Let  $(N,\K,v)$ be a $\P$-extendable incomplete game. Then $(N,e)$ is an extreme game of the set of $\P$-extensions if and only if there is no $\P$-extension $(N,w)$ such that $E(e) \subsetneq E(w)$.
	\end{theorem}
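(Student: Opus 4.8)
The plan is to pass to M\"obius coordinates and read $\P(v)$ as an explicit polyhedron. Identifying a game $(N,w)$ with the vector $\bigl(m^w(T)\bigr)_{\emptyset\neq T\subseteq N}\in\mathbb{R}^{2^n-1}$ — a linear bijection — the set of $\P$-extensions is the polyhedron cut out by the nonnegativity constraints $m^w(T)\ge 0$, one for each nonempty $T\subseteq N$, together with the affine equalities $\sum_{\emptyset\neq T\subseteq S}m^w(T)=v(S)$ for $S\in\K$. The observation that drives everything is that the nonnegativity constraint indexed by $T$ is binding at $e$ exactly when $T\in E(e)$; hence ``$E(e)\subsetneq E(w)$'' says precisely that $w$ is a positive extension activating strictly more of these constraints than $e$ does. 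Both implications will be proved by perturbing $e$ along a game $d$ that keeps active every constraint that is active at $e$, i.e.\ with $d(S)=0$ for all $S\in\K$ and $m^d(T)=0$ for all $T\in E(e)$. Note also that positivity of $e$ forces $m^e(T)>0$ (strictly) for every $T\notin E(e)$; this is what makes the perturbation parameters below strictly positive.

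First I would handle the implication: if some $\P$-extension $w$ satisfies $E(e)\subsetneq E(w)$, then $e$ is not extreme. Take $d\coloneqq w-e$. It is nonzero, it vanishes on $\K$ (both $w$ and $e$ are extensions), and $m^d=m^w-m^e$ vanishes on $E(e)$ since $E(e)\subseteq E(w)$. One then checks that $e-\lambda d$ is again a $\P$-extension for every sufficiently small $\lambda>0$: the $\K$-equalities are preserved, the coordinates indexed by $E(e)$ remain $0$, and for $T\notin E(e)$ the only binding requirement, $\lambda\le m^e(T)/m^d(T)$ when $m^d(T)>0$, defines a strictly positive threshold because $m^e(T)>0$. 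Fixing such a $\lambda$, the identity $e=\tfrac{\lambda}{1+\lambda}\,w+\tfrac{1}{1+\lambda}\,(e-\lambda d)$ exhibits $e$ as a proper convex combination of the distinct points $w$ and $e-\lambda d$ of $\P(v)$, so $e$ is not an extreme game by Definition~\ref{def:extreme-point}.

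For the converse — if $e$ is not extreme, then some $\P$-extension $w$ satisfies $E(e)\subsetneq E(w)$ — I would start from the negation of Definition~\ref{def:extreme-point}: there is a game $d\neq 0$ with $e+d,\,e-d\in\P(v)$. At any $T\in E(e)$ the two nonnegativity constraints give $m^e(T)\pm m^d(T)\ge 0$, hence $m^d(T)=0$; and $d$ vanishes on $\K$. Since $m^d\not\equiv 0$ and is supported outside $E(e)$, and since the hypotheses are symmetric in $\pm d$, after possibly replacing $d$ by $-d$ there is a coalition $T_0\notin E(e)$ with $m^d(T_0)<0$. Travelling along $e+td$, $t\ge 0$, the coordinates indexed by $E(e)$ stay $0$, and the stopping time $t^\ast\coloneqq\min\{m^e(T)/(-m^d(T)):T\notin E(e),\,m^d(T)<0\}$ is finite and strictly positive (again using $m^e(T)>0$). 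Then $w\coloneqq e+t^\ast d$ lies in $\P(v)$ — the $\K$-equalities hold since $d$ vanishes on $\K$, and $t^\ast$ was chosen exactly so that all nonnegativity constraints still hold — it satisfies $E(e)\subseteq E(w)$, and the coalition realizing $t^\ast$ belongs to $E(w)\setminus E(e)$, so $E(e)\subsetneq E(w)$.

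I expect the only real subtlety — and the step I would be most careful about — to be that $\P(v)$ is in general \emph{unbounded}, so one cannot simply ``walk to a vertex''; instead one walks only until the first new nonnegativity constraint becomes tight, which occurs in finite time precisely because, after the sign flip, the chosen direction strictly decreases some coordinate $m^e(T)$ with $T\notin E(e)$. The other point to keep in mind is the strict inequality $m^e(T)>0$ for $T\notin E(e)$, which is what guarantees that $\lambda$ and $t^\ast$ are genuinely positive, hence that the convex combination in the first part is proper and that the new coalition in the second part is really new. The remaining verifications — preservation of the $\K$-equalities and the arithmetic of the convex combination — are routine once the M\"obius coordinates are set up.
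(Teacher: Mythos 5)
Your proof is correct. Note that the paper itself gives no argument for this statement: Theorem~\ref{thm:vertex-positive} is imported from \cite{Bok2020} and used as a black box, so there is no internal proof to compare against. What you wrote is the natural (and, as far as I can tell, essentially the original) polyhedral argument: in M\"obius coordinates $\P(v)$ is the intersection of the nonnegative orthant with the affine subspace cut out by the $\K$-equalities, the nonnegativity constraint indexed by $T$ is tight at $e$ exactly when $T\in E(e)$, and extreme points of such a polyhedron are precisely the points whose set of tight nonnegativity constraints is maximal among feasible points. Both directions are sound: the perturbation $e-\lambda d$ with $d=w-e$ and the threshold $\lambda\le m^e(T)/m^d(T)$ in one direction, and the finite, strictly positive stopping time $t^\ast$ (after the sign normalization of $d$) in the other, are exactly the right devices, and you correctly isolate the role of the strict inequality $m^e(T)>0$ for $T\notin E(e)$ and the unboundedness caveat. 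The only cosmetic remark is that in the first direction you conclude via a proper convex combination rather than directly via Definition~\ref{def:extreme-point}; the translation (take $x=\mu d$ for $\mu=\min(\lambda,1)>0$, so that $e\pm x\in\P(v)$ with $x\neq 0$) is immediate and not a gap.
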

	\begin{theorem}\label{prop:positive-bound}\cite{Bok2020}
		Let $(N,\K,v)$ be a $\P$-extendable incomplete game. The set of positive extensions $\P(v)$ is bounded if and only if $N \in \K$.
	\end{theorem}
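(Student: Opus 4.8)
The plan is to prove both implications directly, working in M\"obius coordinates, since positivity of a game is just nonnegativity of its M\"obius transform, and the M\"obius transform is a linear bijection on $\Gamma^n$.

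First I would treat the direction ``$N\in\K\implies\P(v)$ is bounded''. For an arbitrary $\P$-extension $(N,w)$ we have $w(N)=v(N)$, and since $v(N)=\sum_{\emptyset\neq T\subseteq N}m^w(T)$ is a sum of nonnegative terms, each $m^w(T)$ lies in $[0,v(N)]$ (in particular $v(N)\geq 0$). Applying $w(S)=\sum_{\emptyset\neq T\subseteq S}m^w(T)$ once more, every coordinate $w(S)$ also lies in $[0,v(N)]$. Hence $\P(v)$ is contained in the box $[0,v(N)]^{2^n-1}$ and is therefore bounded.

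For the converse I would argue by contraposition: assume $N\notin\K$ and exhibit an unbounded ray inside $\P(v)$. Pick any $\P$-extension $(N,w)$ (one exists by $\P$-extendability) and set $w_\lambda\coloneqq w+\lambda u_N$ for $\lambda\geq 0$. Because $u_N$ vanishes on every proper subset of $N$, and every $S\in\K$ is a proper subset of $N$ (as $N\notin\K$), the game $w_\lambda$ still agrees with $v$ on $\K$. Moreover $m^{w_\lambda}$ coincides with $m^w$ except that $m^{w_\lambda}(N)=m^w(N)+\lambda\geq 0$, so $w_\lambda$ is again positive. Thus $w_\lambda\in\P(v)$ for every $\lambda\geq 0$, while $w_\lambda(N)=w(N)+\lambda\to\infty$, so $\P(v)$ is unbounded.

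I do not expect a genuine obstacle here; the only point requiring care is that the perturbation direction must be invisible on $\K$, which is exactly what the hypothesis $N\notin\K$ guarantees (for instance, perturbing by $u_S$ with $S\in\K$ would not work). As a byproduct, this argument shows that $u_N$ generates an extreme ray of $\P(v)$ whenever $N\notin\K$, which is consistent with Theorem~\ref{thm:vertex-positive}, since adding $\lambda u_N$ strictly enlarges $E(\,\cdot\,)$ only at $\lambda=0$.
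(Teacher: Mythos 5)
Your proof is correct. Note that the paper itself gives no proof of this statement---it is quoted from \cite{Bok2020} as a known result---so there is no in-paper argument to compare against; your write-up is a valid self-contained justification. Both directions are sound: when $N\in\K$, the identity $v(N)=\sum_{\emptyset\neq T\subseteq N}m^w(T)$ with all terms nonnegative indeed traps every $m^w(T)$, hence every $w(S)$, in $[0,v(N)]$; and when $N\notin\K$, the ray $w+\lambda u_N$ stays in $\P(v)$ because $u_N$ vanishes on all proper subsets of $N$ and contributes $+\lambda$ only to the (nonnegative) M\"obius coefficient at $N$. The one place where you overreach slightly is the closing ``byproduct'': your argument only shows that $u_N$ lies in the recession cone of $\P(v)$, not that it spans an \emph{extreme} ray (that is also true, since in M\"obius coordinates the recession cone is the nonnegative orthant restricted to coalitions not contained in any member of $\K$, but it needs that extra observation), and the clause about $E(\cdot)$ being ``strictly enlarged only at $\lambda=0$'' is not an argument. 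Since that remark is inessential to the theorem, the proof stands as is.
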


	\section{Player-centered incomplete games}
	In this section, we restrict to player-centered incomplete games. We say
        that an incomplete game $(N,\K,v)$ is {\it player centered} if $\K$ has
        the form $\K = \{S \subseteq N \mid i \in S\}\cup\{\emptyset\}$ for some
        player $i$. To stress the role of player $i$, we may refer to the game as
        $i$-centered. Unless otherwise specified, all
          player-centered games in the following will be considered as $i$-centered.

        The structure of $\K$ allows player-centered incomplete
        games to inherit properties of their complete counterparts. Namely,
        since $\K$ is closed under finite intersections and unions, we call
        $(N,\K,v)$ \textit{convex} if $v(S) + v(T) \leq v(S \cap T) + v(S \cup
        T)$ for every $S,T \in \K$. In a similar manner, one can define
        \textit{superadditivity} and \textit{monotonicity} of $(N,\K,v)$. The
        closedness of $\K$ on subsets allows to define the \textit{Möbius
          transform} $m^v$ of a player-centered game $(N,\K,v)$ as $m^v(S)
        \coloneqq\sum_{T \subseteq S, T\in \K}(-1)^{\lvert S \setminus T
          \rvert}v(T)$ for any $S\in\K$. We call $(N,\K,v)$ \textit{positive}, if $m^v(S) \geq 0$ for all $S \in \K$.
	
	We investigate different sets of extensions, namely positive, convex,
        superadditive and monotonic ones. The dimension of any of the sets of
        $C$-extensions that we study is bounded from above by $\lvert \Kc \rvert
        = 2^{n-1}-1$. This is because any extension can be represented as a
        vector $w \in \R^{2^n-1}$ where $2^{n-1}$ values are
        fixed. Equivalently, we can view the game as the vector of its Möbius
        transform $m^w \in 2^{n}-1$, where again, the Möbius transform of $2^{n-1}$ coalitions has to be fixed. This view will be important when applying Theorem~\ref{thm:extreme-point-char} to extreme points of sets.
	
		We will see in this section that if an incomplete game lies in a class
        $C$, which is one of the mentioned classes, it is also
        $C$-extendable.	Throughout this section, we employ extensions $(N,v_0)$ and $(N,v_1)$ of $(N,\K,v)$, which are defined as
	\begin{equation}\label{eq:min-max-vertices}
		v_0(S) \coloneqq \begin{cases}
			v(S) & \text{if } S \in \K,\\
			v(S \cup i) & \text{if } S \in \Kc,\\
		\end{cases}
		\hspace{0.2in} \text{and} \hspace{0.2in}
		v_1(S) \coloneqq \begin{cases}
			v(S) & \text{if } S \in \K,\\
			0 & \text{if } S \in \Kc.\\
		\end{cases}
	\end{equation}
	or in terms of Möbius transform, we have $m^{v_0}(i)=m^{v_1}(i)=v(i)$ and for $S \neq \{i\}$, we have
	\begin{equation}
		m^{v_0}(S) \coloneqq \begin{cases}
			0& \text{if } S \in \K,\\
			m^{v}(S\cup i) & \text{if } S \notin \K,\\
		\end{cases}
		\hspace{0.4in}
		m^{v_1}(S) \coloneqq \begin{cases}
			m^v(S) & \text{if } S \in \K,\\
			0 & \text{if } S \notin \K.\\
		\end{cases}
	\end{equation}
	
	These extensions reflect player $i$'s marginal contribution to the game. For every coalition $S \subseteq N \setminus i$, $v_0(S \cup i) - v_0(S) = 0$ and $v_1(S \cup i) - v_1(S) = v(S\cup i)$. Thus, in case of $(N,v_0)$ player $i$ does not contribute to the game as opposed to $(N,v_1)$, where his contribution to $S \cup i$ is in a sense maximal because it is equal to the value of the coalition.
	
	\subsection{Positive extensions}
	In this section, we show that $(N,\K,v)$ is $\P$-extendable if and only
        if it is positive, and we describe the extreme points of the set of $\P$-extensions.
	
	\begin{theorem}
		A player-centered incomplete game $(N,\K,v)$ is $\P$-extendable if and only if it is positive.
	\end{theorem}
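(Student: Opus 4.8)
The plan is to prove both implications through Möbius transforms. For the ``if'' direction I would exhibit the extension $(N,v_1)$ of \eqref{eq:min-max-vertices} as a positive extension; for the ``only if'' direction I would extract from an \emph{arbitrary} $\P$-extension an explicit, manifestly nonnegative expression for each $m^v(S)$, $S\in\K$.

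The ``if'' direction is the routine one. Assume $(N,\K,v)$ is positive, i.e.\ $m^v(S)\ge 0$ for every $S\in\K$. Since $v_1(S)=v(S)$ for all $S\in\K$ by definition, $(N,v_1)$ already extends $(N,\K,v)$, so it remains only to check $(N,v_1)\in\P$. For this I would compute the Möbius transform of the \emph{complete} game $v_1$: if $T\notin\K$ then $v_1$ vanishes on every subset of $T$, hence $m^{v_1}(T)=0$; if $T\in\K$ then in $\sum_{U\subseteq T}(-1)^{\lvert T\setminus U\rvert}v_1(U)$ only the subsets $U\in\K$ contribute and one recovers exactly $\sum_{U\subseteq T,\,U\in\K}(-1)^{\lvert T\setminus U\rvert}v(U)=m^v(T)$. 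Thus $m^{v_1}(T)\ge 0$ for every $T\subseteq N$, so $v_1$ is a positive game and $(N,\K,v)$ is $\P$-extendable.

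The substance is in the converse. Let $(N,w)\in\P$ with $w(S)=v(S)$ for $S\in\K$, and fix $S\in\K$, so $i\in S$. Using the definition of $m^v$ on $\K$, then $w=v$ on $\K$, then $w(A)=\sum_{\emptyset\neq T\subseteq A}m^w(T)$, and finally exchanging the order of summation, I would write
\[
  m^v(S)=\sum_{i\in A\subseteq S}(-1)^{\lvert S\setminus A\rvert}v(A)
        =\sum_{\emptyset\neq T\subseteq S}m^w(T)\Bigl(\sum_{\substack{i\in A\subseteq S\\ T\subseteq A}}(-1)^{\lvert S\setminus A\rvert}\Bigr).
\]
For a fixed $T$ the inner sum ranges over all $A$ with $T\cup i\subseteq A\subseteq S$; substituting $B=A\setminus(T\cup i)$ it equals $(-1)^{\lvert S\setminus(T\cup i)\rvert}\sum_{B\subseteq S\setminus(T\cup i)}(-1)^{\lvert B\rvert}$, which is $0$ unless $T\cup i=S$ and equals $1$ otherwise (using $\sum_{B\subseteq X}(-1)^{\lvert B\rvert}=0$ for $X\neq\emptyset$). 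The equality $T\cup i=S$ holds precisely for $T=S$ and for $T=S\setminus i$, so $m^v(S)=m^w(S)+m^w(S\setminus i)$, with the convention $m^w(\emptyset)=0$ covering the case $S=\{i\}$. Both summands are nonnegative because $w\in\P$, hence $m^v(S)\ge 0$; since $S\in\K$ was arbitrary, $(N,\K,v)$ is positive.

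The only genuinely delicate point is the double-sum manipulation together with the evaluation of the alternating inner sum; everything else is bookkeeping. I would also record the by-product identity $m^v(S)=m^w(S)+m^w(S\setminus i)$, valid for every $\P$-extension $w$ and every $S\in\K$, since it gives a direct link between $m^v$ and the Möbius transform of extensions that should be convenient for describing the extreme points of $\P(v)$ afterwards.
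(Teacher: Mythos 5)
Your proof is correct and takes essentially the same approach as the paper: both directions hinge on exhibiting $(N,v_1)$ as a positive extension and on the identity $m^v(S)=m^w(S)+m^w(S\setminus i)$ for every $\P$-extension $w$ and every $S\in\K$. The only difference is cosmetic — you establish that identity by a direct double-sum/Möbius-inversion computation, whereas the paper derives it by induction on $\lvert S\rvert$; both are valid and yield the same by-product used later for the extreme points of $\P(v)$.
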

	
	\begin{proof}
		If $(N,\K,v)$ is positive, so is extension $(N,v_1)$ defined in~\eqref{eq:min-max-vertices}, thus $(N,\K,v)$ is $\P$-extendable. If $(N,\K,v)$ is $\P$-extendable, let $(N,p)$ be its $\P(v)$-extension. By induction on $\lvert S \rvert$, we show for $S \in \K$ that $m^v(S) = m^p(S) + m^p(S \setminus i) \geq 0$. For $S = \{i\}$, we have
		\[
		v(i) = m^v(i) = m^p(i) + m^p(\emptyset) \geq 0.
		\]
		Further fix $S \in \K$ and suppose for all its subsets $T \in \K, T \subsetneq S$ that it holds $m^v(T) = m^p(T) + m^p(T \setminus i)\geq 0$. From $v(S)=p(S)$, it follows $\sum_{T \subseteq S, T \in \K}m^v(T) = \sum_{T \subseteq S}m^p(T)$ which equals
		\[
		\sum_{T \subsetneq S, T \in \K}\left(m^p(T) + m^p(T \setminus i)\right) + m^p(S) + m^p(S \setminus i).
		\]
		By induction hypothesis,  $m^v(S) = m^p(S) + m^p(S \setminus i) \geq 0$ follows.
	\end{proof}
	
	For every $\P$-extension $(N,p)$, it holds $m^v(S \cup i) = m^p(S \cup i) + m^p(S) \geq 0$ for every $S \in \Kc$. Following are $\P$-extensions for which exactly one of the values $m^p(S \cup i)$, $m^p(S)$, for every $S \in \Kc$, is non-zero. For $x \in \{0,1\}^{\Kc}$, we define $(N,v_x)$ via its Möbius transform for every $S \in \Kc$ as
	\begin{equation}\label{eq:def-vx}
		\hspace{-0.04in}m^{v_x}(S)\coloneqq\begin{cases}
			m^v(S \cup i) & \text{if } x_S=0,\\
			0 & \text{if } x_S =1,\\
		\end{cases}\hspace{0.15in}
		m^{v_x}(S\cup i ) \coloneqq \begin{cases}
			0 & \text{if } x_S=0,\\
			m^v(S \cup i) & \text{if } x_S=1,\\
		\end{cases}\hspace{-0.15in}
	\end{equation}
	with $m^{v_x}(i)\coloneqq v(i)$ and $m^{v_x}(\emptyset)=0$. Each entry $x_S$ of the vector $x$ distinguishes which one of coalitions $S$ and $S \cup i$ has surplus equal to $m^v(S \cup i)$ and which one is equal to $0$. In other words, we decide if the surplus of the coalition is due to player $i$ or if he is useless for this coalition. Notice, $v_0 = v_x$ where $x=0$ and $v_1=v_x$ where $x=1$. In the next theorem, we show that these games form the vertices of the set of $\P$-extensions.
	\begin{theorem}
		For a $\P$-extendable $i$-centered incomplete game $(N,\K,v)$, games $(N,v_x)$ defined in~\eqref{eq:def-vx} are the only extreme points of the set of $\P$-extensions.
	\end{theorem}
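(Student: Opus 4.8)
The plan is to show that, in suitable coordinates, the polyhedron $\P(v)$ is nothing but a product of compact intervals, whose vertex set is exactly $\{(N,v_x)\mid x\in\{0,1\}^{\Kc}\}$. First I would parametrize the affine hull of $\P(v)$. An extension $(N,w)$ of $(N,\K,v)$ is equivalent to its Möbius transform $m^w$, and by the identity $m^v(S\cup i)=m^w(S\cup i)+m^w(S)$ recorded just before~\eqref{eq:def-vx}, together with $m^w(i)=v(i)$ and $m^w(\emptyset)=0$, the vector $m^w$ — hence $w$ itself — is uniquely recovered from the free coordinates $y_S\coloneqq m^w(S)$, $S\in\Kc$, via $m^w(S\cup i)=m^v(S\cup i)-y_S$. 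I would check that $w\mapsto(y_S)_{S\in\Kc}$ is an affine bijection between the set of extensions of $(N,\K,v)$ and $\R^{\Kc}$; since $w\leftrightarrow m^w$ is linear both ways, the only thing left to verify is that an arbitrary $(y_S)_{S\in\Kc}$ yields a genuine extension, i.e.\ the reconstructed $w$ satisfies $w(S)=v(S)$ for all $S\in\K$, which is a direct computation using $v(S\cup i)=v(i)+\sum_{\emptyset\neq R\subseteq S}m^v(R\cup i)$.

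Next I would rewrite the defining inequalities of $\P(v)$ in these coordinates. Positivity of $w$ means $m^w(T)\geq 0$ for every nonempty $T\subseteq N$: for $T=\{i\}$ this reads $v(i)\geq 0$, which holds automatically since a $\P$-extendable game is positive; for $T=S$ with $S\in\Kc$ it reads $y_S\geq 0$; and for $T=S\cup i$ it reads $y_S\leq m^v(S\cup i)$. Hence, under the identification above, $\P(v)$ is exactly the box $\prod_{S\in\Kc}\left[0,\,m^v(S\cup i)\right]$, each factor being a nonempty interval, degenerate precisely when $m^v(S\cup i)=0$.

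Finally, the extreme points of a product of compact intervals are exactly the points every coordinate of which is an endpoint, i.e.\ $y_S\in\{0,m^v(S\cup i)\}$ for each $S\in\Kc$; as affine bijections carry extreme points to extreme points (Definition~\ref{def:extreme-point}), the extreme points of $\P(v)$ are precisely the games whose Möbius transform sends each $S\in\Kc$ to one of the pairs $(m^w(S),m^w(S\cup i))=(m^v(S\cup i),0)$ or $(0,m^v(S\cup i))$ — that is, exactly the $(N,v_x)$ of~\eqref{eq:def-vx}. I expect the main obstacle to be the first step, namely making this coordinate change rigorous: checking that the free Möbius coordinates $(y_S)_{S\in\Kc}$ really parametrize all extensions and that the correspondence is affine in both directions; once the box description is in hand the theorem is immediate. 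As an alternative one could prove the two inclusions directly via Theorem~\ref{thm:vertex-positive}: no $\P$-extension $w$ satisfies $E(v_x)\subsetneq E(w)$ because any coalition dropped from $E(v_x)$ is forced to have nonzero Möbius value by $m^v(S\cup i)=m^w(S\cup i)+m^w(S)$, while if an extreme $e$ had $m^e(S)>0$ and $m^e(S\cup i)>0$ for some $S\in\Kc$, replacing $m^e(S)$ by $0$ would give a $\P$-extension $w$ with $E(e)\subsetneq E(w)$.
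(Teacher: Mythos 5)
Your proposal is correct, and its main line of argument is genuinely different from the paper's. You change coordinates to the free M\"obius entries $y_S=m^w(S)$, $S\in\Kc$, observe that the extension constraints $w(S)=v(S)$, $S\in\K$, are equivalent (by M\"obius inversion on $2^{S\setminus i}$) to $m^w(T)+m^w(T\cup i)=m^v(T\cup i)$ for all $T\subseteq N\setminus i$, and conclude that $\P(v)$ is affinely isomorphic to the box $\prod_{S\in\Kc}\left[0,m^v(S\cup i)\right]$, whose vertices are exactly the $(N,v_x)$; this yields both inclusions at once and also explains for free the paper's remark that the number of vertices is at most $2^{\lvert\Kc\rvert}$ (degenerate factors when $m^v(S\cup i)=0$). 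The paper instead argues in two separate steps: it verifies directly that each $(N,v_x)$ is an extension and counts $2^{n-1}-1$ vanishing M\"obius coordinates to invoke the binding-constraints criterion of Theorem~\ref{thm:extreme-point-char}, and then rules out any other extreme point by the $E(\cdot)$-maximality characterisation of Theorem~\ref{thm:vertex-positive}, merging $m^e(S)>0$ and $m^e(S\cup i)>0$ into a single coordinate to enlarge the zero set. Your secondary ``alternative'' sketch at the end is essentially the paper's second step. The only part of your main argument that deserves to be written out carefully is the surjectivity of the parametrisation (that every choice of $(y_S)_{S\in\Kc}$ reconstructs a genuine extension); the computation you indicate does close it, so there is no gap.
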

	\begin{proof}
		For $S \in \K$, we have
		\[
		v_x(S) = \sum_{T \subseteq S,}m^{v_x}(T) = m^{v}(i) + \sum_{T \subseteq S, T \in \Kc} \left[m^{v_x}(T) + m^{v_x}(T\cup i)\right].
		\]
		As $m^{v_x}(T) + m^{v_x}(T \cup i)=m^v(T \cup i)$ for every $S \in \Kc$, it holds
		\[
		v_x(S) = m^v(i) + \sum_{T \subseteq S, S \in \Kc}m^v(T \cup i) = \sum_{T \subseteq S, T \in \K}m^v(T) = v(S).
		\]
		therefore $(N,v_x)$ are extensions of $(N,\K,v)$. Further, because $m^{v_x}(T)=0$ for $2^{n-1}-1$ coalitions, by Theorem~\ref{thm:extreme-point-char}, games $(N,v_x)$ are extreme points of the set of $\P$-extensions.
		
		To show there is no other extreme point, we employ Theorem~\ref{thm:vertex-positive}. For a contradiction, suppose there is an extreme point $(N,e)$ which is different from all $(N,v_x)$. This means there is $S \in \Kc$ such that both
		\[
		m^e(S) > 0 \hspace{0.2in}\text{and} \hspace{0.2in}m^e(S \cup i) > 0.
		\]
		We can define $(N,e')$ by its Möbius transform as
		\[
		m^{e'}(T) \coloneqq \begin{cases}
			m^{e}(S) + m^{e}(S\cup i) & \text{if }T = S,\\
			0 & \text{if }T = S \cup i,\\
			m^e(T) &\text{otherwise.}\\
		\end{cases}
		\]
		Game $(N,e')$ is clearly positive. Also, it is an extension of $(N,\K,v)$, because for $T \in \K$,
		\[
		e'(T) = \sum_{T' \subseteq T}m^{e'}(T') = \sum_{T' \subseteq T}m^e(T') = e(T).
		\]
		But because $E(e') \subsetneq E(e)$, by Theorem~\ref{thm:vertex-positive}, $(N,e)$ is not an extreme game.
	\end{proof}
	
	We note that the number of vertices of the set of $\P$-extensions is at most $2^{\lvert\Kc\rvert}$, depending on values of $(N,\K,v)$. Also, if $(N,w)$ is additive, there is only one $\P$-extension.
	
	We present another way to express the set of $\P$-extensions. Let $(N,\valfa)$ be a $\P$-extension defined as $\valfa=\sum_{x \in \{0,1\}^\Kc}\alpha_x v_x$, where $\alpha_x \geq 0$ for every $x$ and $\sum_{x \in \{0,1\}^\Kc}\alpha_x=1$. It can be expressed in terms of Möbius transform for every $S \in \Kc$ as
	\begin{equation}\label{eq:positive-extensions-1}
		m^\valfa(S) = m^v(S\cup i)\sum_{x:x_S=0}\alpha_x,\hspace{0.4in}m^\valfa(S \cup i ) = m^v(S\cup i)\sum_{x:x_S=1}\alpha_x,
	\end{equation}
	$m^\valfa(\emptyset) = 0$, and $m^\valfa(i)=v(i)$.
	We may denote $\beta_S \coloneqq \sum_{x:x_S=0}\alpha_x$, which reduces~\eqref{eq:positive-extensions-1} to
	\[
	m^\valfa(S) = \beta_S m^v(S \cup i)\hspace{0.2in}\text{and}\hspace{0.2in}m^\valfa(S \cup i) = (1-\beta_S) m^v(S \cup i).
	\]
	Therefore, we have
	\[
	\valfa = \sum_{\emptyset \neq S \subseteq N}m^\valfa(S)u_S = m^v(i)u_{\{i\}}+\sum_{S \in \Kc}m^v(S\cup i)\left[\beta_Su_S + (1-\beta_S)u_{S \cup i}\right].
	\]
	We will show that choosing $\beta_S \in \left[0,1\right]$ for every $S \in \Kc$ yields a $\P$-extension.
	\begin{theorem}
		For a $\P$-extendable $i$-centered incomplete game $(N,\K,v)$, every $\P$-extension $(N,p)$ can be expressed as
		\begin{equation}\label{eq:positive-extensions-alternative}
			p=m^v(i)u_{\{i\}}+\sum_{S \in \Kc} m^v(S \cup i)\left[\beta_Su_S + (1-\beta_S)u_{S \cup i}\right] 
		\end{equation}
		where $\beta_S \in \left[0,1\right]$ for every $S \in \Kc$.
	\end{theorem}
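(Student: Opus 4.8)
The plan is to prove the equivalence in two directions. First I would show that for any choice of $\beta_S\in[0,1]$, $S\in\Kc$, the game $p$ defined by~\eqref{eq:positive-extensions-alternative} really is a $\P$-extension; then I would show conversely that every $\P$-extension can be written in this form. Both directions rely only on bookkeeping with the Möbius transform together with structural facts already established in the excerpt.

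For the first (sufficiency) direction, since the $u_S$ are unanimity games the Möbius transform of $p$ can be read off directly from~\eqref{eq:positive-extensions-alternative}: $m^p(i)=v(i)$, $m^p(S)=\beta_S m^v(S\cup i)$ and $m^p(S\cup i)=(1-\beta_S)m^v(S\cup i)$ for $S\in\Kc$, and $m^p(\emptyset)=0$. Positivity of $p$ is then immediate, because $\beta_S,1-\beta_S\in[0,1]$ and, by positivity of $(N,\K,v)$, $v(i)\ge 0$ and $m^v(S\cup i)\ge 0$. For the extension property I would fix $T\in\K$ (so $i\in T$) and sum the Möbius transform over subsets of $T$: for $S\in\Kc$ the conditions $S\subseteq T$ and $S\cup i\subseteq T$ are both equivalent to $\emptyset\ne S\subseteq T\setminus i$, so the contributions $m^p(S)$ and $m^p(S\cup i)$ pair up with $m^p(S)+m^p(S\cup i)=m^v(S\cup i)$, yielding $p(T)=v(i)+\sum_{\emptyset\ne S\subseteq T\setminus i}m^v(S\cup i)=\sum_{T'\in\K,\,T'\subseteq T}m^v(T')=v(T)$ by Möbius inversion for the player-centered game.

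For the converse (necessity) direction, I would first note that an $i$-centered incomplete game always satisfies $N\in\K$, so Theorem~\ref{prop:positive-bound} guarantees that the polyhedron $\P(v)$ is bounded, hence a polytope. Therefore every $\P$-extension $(N,p)$ is a convex combination $p=\sum_{x\in\{0,1\}^\Kc}\alpha_x v_x$ of the extreme points $(N,v_x)$ of~\eqref{eq:def-vx} identified in the preceding theorem, with $\alpha_x\ge 0$ and $\sum_x\alpha_x=1$. Setting $\beta_S\coloneqq\sum_{x:\,x_S=0}\alpha_x$ gives $\beta_S\in[0,1]$, and the computation running from~\eqref{eq:positive-extensions-1} to the unanimity decomposition displayed just before the statement shows that this convex combination is precisely the right-hand side of~\eqref{eq:positive-extensions-alternative}.

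I do not expect a genuine obstacle here: the argument is routine manipulation of Möbius transforms plus two facts already available, namely boundedness of $\P(v)$ (Theorem~\ref{prop:positive-bound}, applicable exactly because $N\in\K$) and the vertex description of $\P(v)$. The only point that needs care is the extension computation in the sufficiency direction: one must check that the index sets of the two sums $\sum_S\beta_S m^v(S\cup i)$ and $\sum_S(1-\beta_S)m^v(S\cup i)$ coincide for every $T\in\K$, so that the $\beta_S$'s cancel regardless of their values; this hinges on the simple observation that $i\in T$ forces $S\subseteq T\iff S\cup i\subseteq T$ for $S\in\Kc$.
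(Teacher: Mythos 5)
Your proof is correct and follows essentially the same route as the paper: the sufficiency direction is the same Möbius-transform bookkeeping showing $p(T)=v(T)$ for $T\in\K$, and the necessity direction is the rewriting of a convex combination $\sum_x\alpha_x v_x$ of the extreme games with $\beta_S=\sum_{x:\,x_S=0}\alpha_x$, exactly as in the discussion preceding the theorem. If anything, you are more careful than the paper in the necessity step, since you explicitly invoke $N\in\K$ and Theorem~\ref{prop:positive-bound} to justify that $\P(v)$ is a polytope and hence the convex hull of the games $(N,v_x)$, a point the paper leaves implicit.
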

	\begin{proof}
		Denote by $P$ the set of games from~\eqref{eq:positive-extensions-alternative}. We already showed $\P(v) \subseteq P$ and it is immediate that $p \in P$ is positive. It remains to show that $p$ is an extension of $(N,\K,v)$. For $S \in \K$, we have
		\[
		p(S) = \sum_{T \subseteq S}m^p(S) = m^v(i) +\sum_{T \subseteq S, T \in \Kc}\left( m^p(T) + m^p(T\cup i) \right)\]
		which is equal to
		\[m^v(i)+\sum_{T \subseteq S, T \in \Kc}\left[\beta_T m^v(T \cup i) + (1 -\beta_T)m^v(T \cup i)\right]
		\]
		or to
		\[
		m^v(i)+\sum_{T \subseteq S, T \in \Kc}m^v(T \cup i) = v(S).\]
	\end{proof}
	
	\subsection{Convex and superadditive extensions}
	In this section, we show that $(N,\K,v)$ is always $\S$-extendable and we also show that $\Co$-extendability is equivalent to convexity of $(N,\K,v)$. Further, if nonempty, both sets are always unbounded. It even holds for two different $i$-centered games that the sets of extreme rays of the set of $\S$-extensions (as well as $\Co$-extensions) are the same. By restricting to monotonicity of both sets of extensions, both sets become bounded and we obtain a nice hierarchy with the sets of $\P$-extensions and $\M$-extensions. 
	
	\begin{theorem}
		A player-centered incomplete game $(N,\K,v)$ is $\Co$-extendable if and only if it is convex.
	\end{theorem}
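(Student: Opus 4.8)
The plan is to prove both implications. The direction \emph{$\Co$-extendable $\Rightarrow$ convex} is immediate: if $(N,w)$ is a $\Co$-extension, then since $\K=\{S\subseteq N\mid i\in S\}\cup\{\emptyset\}$ is closed under union and intersection, for any $S,T\in\K$ we have $S\cup T,S\cap T\in\K$, hence $v(S)+v(T)=w(S)+w(T)\le w(S\cup T)+w(S\cap T)=v(S\cup T)+v(S\cap T)$, so $(N,\K,v)$ is convex.

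For the converse, assuming $(N,\K,v)$ convex, I would exhibit an explicit $\Co$-extension $(N,w)$ defined by $w(S):=v(S)$ for $S\in\K$ and $w(S):=v(S\cup i)-v(i)$ for $S\in\Kc$ (in the latter case $S\cup i\in\K$, so this is well defined, and $w(\emptyset)=0$). To verify that $w$ is convex I would use the local characterization of convexity, equivalent to~\eqref{eq:convex-mobius-char}, namely $w(T\cup j)+w(T\cup k)\le w(T)+w(T\cup\{j,k\})$ for every $T\subseteq N$ and distinct $j,k\notin T$, splitting into three cases by the position of $i$. If $i\in T$, all four coalitions lie in $\K$ and the inequality is exactly convexity of $(N,\K,v)$ applied to $T\cup j,T\cup k$. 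If $i\notin T\cup\{j,k\}$, substituting the formula for $w$ on $\Kc$ cancels all the $-v(i)$ terms and leaves convexity of $(N,\K,v)$ for $T\cup\{j,i\},T\cup\{k,i\}$ (and, when $T=\emptyset$, for $\{i,j\},\{i,k\}$). If exactly one of $j,k$ equals $i$ (which forces $i\notin T$), a direct substitution shows the two sides coincide, so the inequality holds with equality. Hence $w$ is convex, so $(N,\K,v)$ is $\Co$-extendable.

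A shorter but less constructive route goes through Theorem~\ref{thm:BK2018}: $\K$ is a sublattice of $(2^N,\subseteq)$, so $LC(\K)=\K$, and since each $S\in\K$ is sandwiched by $\underline S=\overline S=S\in\K$, the family $\mathcal F$ there equals $\K$; the theorem then states that $(N,\K,v)$ is $\Co$-extendable iff there is a convex function on $\K$ extending $v$, i.e.\ iff $v$ itself is convex on $\K$, which is precisely the definition of convexity of $(N,\K,v)$.

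I expect the only genuine work to be the case analysis in the explicit construction --- in particular checking that the ``mixed'' cases (exactly one of $j,k$ equal to $i$, and the degenerate $T=\emptyset$ subcases) really do reduce either to the restricted convexity inequality for coalitions in $\K$ or to an identity. The argument via Theorem~\ref{thm:BK2018} sidesteps this entirely, at the cost of not producing the extension itself, which is why I would favour including the explicit $w$ (it is also reused when analysing extreme points and rays of the extension sets).
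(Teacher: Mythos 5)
Your second route is exactly the paper's proof: the paper disposes of this theorem with the single line ``Follows from Theorem~\ref{thm:BK2018}'', relying implicitly on the observations you spell out, namely that $\K$ is a sublattice of $(2^N,\subseteq)$, so $LC(\K)=\K$ and $\mathcal F=\K$, whence $\Co$-extendability reduces to convexity of $v$ on $\K$. Your first, constructive route is a genuine addition and it checks out: the forward implication is the trivial one, and the extension $w(S)=v(S\cup i)-v(i)$ for $S\in\Kc$ does pass the local supermodularity test in all three cases (in particular the $-v(i)$ shift is exactly what is needed in the case $i\notin T\cup\{j,k\}$ with $T=\emptyset$, where the inequality becomes $v(\{i,j\})+v(\{i,k\})\le v(\{i\})+v(\{i,j,k\})$). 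It is worth noting that neither of the paper's canonical extensions does this job in general: $(N,v_1)$ is convex only under the additional assumption of monotonicity (this is Conditions~\eqref{eq:C-S-extend2} in the paper's later theorem on $\Co_+$-extendability), and $(N,v_0)$ fails convexity unless $v(i)\le 0$; so your shifted game is a cleaner witness of $\Co$-extendability than anything the paper exhibits at this point. What the explicit construction buys is an extension in hand (and independence from the black-box Theorem~\ref{thm:BK2018}); what the paper's route buys is brevity. Both arguments are correct.
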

	\begin{proof}
		Follows from Theorem~\ref{thm:BK2018}.
	\end{proof}
	\begin{theorem}
		A player-centered incomplete game $(N,\K,v)$ is $\S$-extendable.
	\end{theorem}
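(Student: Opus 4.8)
The plan is to construct an explicit superadditive extension by exploiting two structural facts: every coalition not containing the special player $i$ is free, and no two disjoint coalitions can both contain $i$. Concretely, I would define $w\colon 2^N\to\R$ by $w(S)\coloneqq v(S)$ for $S\in\K$ and $w(S)\coloneqq -M$ for $S\in\Kc\setminus\{\emptyset\}$, where $M$ is a sufficiently large nonnegative constant; for instance $M\coloneqq\max_{R\in\K}v(R)-\min_{R\in\K}v(R)$ works, as the verification below shows (this maximum and minimum exist since $\K$ is finite and nonempty, and $M\ge 0$ because $v(\emptyset)=0$). By construction $w$ restricted to $\K$ equals $v$, so $w$ is an extension; it remains to check $w\in\S$, i.e. $w(S)+w(T)\le w(S\cup T)$ for all disjoint $S,T\subseteq N$.

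The verification splits by how player $i$ is distributed between $S$ and $T$. If $S=\emptyset$ or $T=\emptyset$, the inequality is trivial since $w(\emptyset)=0$; so assume $S,T\neq\emptyset$. As $S\cap T=\emptyset$, at most one of $S,T$ contains $i$. If neither does, then $i\notin S\cup T$ as well, so $S,T,S\cup T\in\Kc\setminus\{\emptyset\}$ and the inequality reads $-2M\le -M$, which holds since $M\ge 0$. If exactly one of them contains $i$, say $i\in S$ and $i\notin T$, then $S,S\cup T\in\K$ while $T\in\Kc\setminus\{\emptyset\}$, so the inequality reads $v(S)-M\le v(S\cup T)$, i.e. $M\ge v(S)-v(S\cup T)$; this holds by the choice of $M$, since $v(S)\le\max_{R\in\K}v(R)$ and $v(S\cup T)\ge\min_{R\in\K}v(R)$. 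Hence $w$ is superadditive, and $(N,\K,v)$ is $\S$-extendable.

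There is essentially no genuine obstacle here beyond this bookkeeping: the shape of $\K$ guarantees that the only superadditivity constraints involving the known values are those of the last case, and a single sufficiently large penalty on the free coalitions dominates all of them simultaneously while the constraints among free coalitions are automatically satisfied. One could alternatively hope that $v_0$ or $v_1$ from~\eqref{eq:min-max-vertices} is already superadditive, but neither is in general — superadditivity of $v_1$ would force $v$ to be monotone along $\K$, and superadditivity of $v_0$ would force $v$ restricted to $\K$ to be superadditive — so a construction that pushes the free coalitions far down (as above) is the natural level of generality.
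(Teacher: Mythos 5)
Your proof is correct and follows essentially the same strategy as the paper's: observe that no two nonempty coalitions of $\K$ are disjoint, so the only superadditivity constraints involving known values are those pairing one coalition from $\K$ with one from $\Kc$, and all constraints are satisfied by pushing the values on $\Kc$ sufficiently far down. The only difference is that the paper assigns an additive (size-proportional) negative game on $\Kc$ where you use the flat constant $-M$; your uniform choice is, if anything, slightly cleaner to verify.
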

	\begin{proof}
		According to Definitions~\ref{def:classes-of-games} and~\ref{def:extension}, $\S(v)$-extension $(N,w)$ has to satisfy tree types of conditions:
		\begin{equation}\label{eq:sextensions1}
			w(S) + w(T) \leq w(S \cup T)\text{ for }S,T \in \Kc\text{ such that }S \cap T = \emptyset,
		\end{equation}
		\begin{equation}\label{eq:sextensions2}
			w(S) \leq v(S \cup T) - v(T)\text{ for }S \in \Kc, T \in \K\setminus \{\emptyset\}\text{ such that }S \cap T = \emptyset,
		\end{equation}
		\begin{equation}\label{eq:sextensions3}
			w(S) = v(S)\text{ for }S \in \K.
		\end{equation}
		Notice that for $S,T \in \K$, no conditions have to be satisfied, because $S\cap T \neq \emptyset$. We can express $\Kc \cup \{\emptyset\} = 2^{N \setminus i}$, thus Conditions~\eqref{eq:sextensions1} are satisfied by any additive game $\beta \in \R^{n-1}$. To satisfy Conditions~\eqref{eq:sextensions2}, we define
		\[
		\beta_k \coloneqq \min_{\substack{S \in \Kc, T \in \K \setminus \{\emptyset\},\\ S \cap T = \emptyset}}\frac{v(S \cup T) - v(T)}{n}
		\]
		and denote by $S^*,T^*$ coalitions for which the minimum is attained. For disjoint $S \in \Kc, T \in \K \setminus \{\emptyset\}$, it holds
		\[\beta(S^*) = \frac{s}{n}\left(v(S^* \cup T^*) - v(T^*)\right) \leq \frac{s}{n}\left(v(S \cup T) - v(S)\right) \leq v(S \cup T) - v(T).\]
		Finally, we define $(N,w)$ as
		\[
		w(S) \coloneqq \begin{cases}
			v(S)&\text{if } S \in \K,\\
			\beta(S)& \text{if } S \in \Kc.\\
		\end{cases}
		\]
		From its construction, it follows $(N,w)$ is $\S(v)$-extension.
	\end{proof}
	
	We proceed with the analysis of the recession cone of the set of
        $\Co$-extensions. Let $\K$ be fixed, $i$-centered. In this section we
        use the notation $C^n(\K,v)$ for the set of
	convex extensions of $(N,\K,v)$. This set is defined by the following set of
	inequalities:
	\begin{align}
		w(S) + w(T) - w(S\cap T) - w(S\cup T) & \leq 0, \quad S,T\subseteq N\label{eq:1}\\
		w(S) & = v(S), \quad S\in \K.
	\end{align}
	As $v$ is given, observe that this implies that the dimension of $C^n(\K,v)$ is
	at most $|\Kc|=2^{n-1}-1$.
	
	The recession cone of $C^n(\K,v)$ is simply $C^n(\K,0)$. Taking advantage that
	$w(S)=0$ for all $S\in\K$, we can project $C^n(\K,0)$ into $\R^{\Kc}$. Then
	(\ref{eq:1}) reduces to
	\begin{align*}
		w(S) + w(T) - w(S\cap T) - w(S\cup T) & \leq 0, \quad S,T\in\Kc\\
		w(S) - w(S\cap T) & \leq 0, \quad S\in\Kc,T\in\K.
	\end{align*}
	As $T\in\K,T\neq\emptyset,\{i\}$ is equivalent to $T=T'\cup \{i\}$ with $T'\in\Kc$,
	and since $T=\emptyset$, $T=\{i\}$ entail $w(S\cap T)=w(\emptyset)=0$, we  get
	\begin{align*}
		w(S) + w(T) - w(S\cap T) - w(S\cup T) & \leq 0, \quad S,T\in\Kc\\
		w(S)-w(S')& \leq  0, \quad S,S'\in \Kc,S'\subseteq S\\
		w(S) & \leq 0, \quad S\in \Kc. 
	\end{align*}
	Getting rid of redundant inequalities, we finally get
	\begin{align}
		w(S\setminus \{j\})+ w(S\setminus \{k\}) -w(S\setminus\{j,k\}) - w(S) & \leq 0,\quad
		S\in\Kc, j,k\in S, |S|> 2\label{eq:2}\\
		w(\{j\}) + w(\{k\}) - w(\{j,k\}) & \leq 0, \quad j,k\neq i\label{eq:3}\\
		w(\{j\}) & \leq 0, \quad j\neq i\label{eq:4}\\
		w(S)-w(S\setminus \{j\}) & \leq 0, \quad S\in \Kc,|S|>1, j\in S.\label{eq:5a}
	\end{align}
	(for the first set of inequalities, see Grabisch and Kroupa~\cite{Grabisch2019})
	
	\begin{lemma}\label{lem:1}
		$C^n(\K,0)$ is pointed (i.e., it contains no line).
	\end{lemma}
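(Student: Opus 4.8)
The plan is to show that the lineality space of $C^n(\K,0)$ is trivial, which for a polyhedral cone is equivalent to containing no line. Since every $w\in C^n(\K,0)$ satisfies $w(S)=0$ for all $S\in\K$, the projection of $\R^{2^n-1}$ onto the coordinates indexed by $\Kc$ is injective on $C^n(\K,0)$; hence it suffices to prove that the projected cone — the one cut out in $\R^{\Kc}$ by the inequalities \eqref{eq:2}--\eqref{eq:5a} — contains no line, i.e. that $w=0$ whenever both $w$ and $-w$ satisfy \eqref{eq:2}--\eqref{eq:5a}.

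So I would take $w\in\R^{\Kc}$ with $w$ and $-w$ both in the projected cone and argue that inequalities \eqref{eq:4} and \eqref{eq:5a} alone already force $w(S)\le 0$ for every $S\in\Kc$: for a singleton $\{j\}$ (necessarily $j\neq i$) this is exactly \eqref{eq:4}, and for $|S|>1$ one peels off elements one at a time using \eqref{eq:5a}, getting $w(S)\le w(S\setminus\{j\})\le\cdots\le w(\{j_1\})\le 0$, each intermediate set still lying in $\Kc$ because it avoids $i$. Running the same chain of inequalities for $-w$ yields $-w(S)\le 0$, i.e. $w(S)\ge 0$, for every $S\in\Kc$. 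Combining the two gives $w(S)=0$ for all $S\in\Kc$; together with $w(S)=0$ for $S\in\K$ this means $w=0$, so $C^n(\K,0)$ is pointed.

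I do not expect a real obstacle: the substantive work — reducing the defining inequalities of $C^n(\K,0)$ to the explicit non-redundant list \eqref{eq:2}--\eqref{eq:5a} — has already been done before the lemma, and only the ``downward'' inequalities \eqref{eq:5a} and the singleton sign constraints \eqref{eq:4} are used. The one point worth stating carefully is that the projection $\R^{2^n-1}\to\R^{\Kc}$ is injective on $C^n(\K,0)$ (the $\K$-coordinates are pinned to $0$), so pointedness of the projected cone transfers back to $C^n(\K,0)$ itself.
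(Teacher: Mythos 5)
Your proposal is correct and matches the paper's argument: both reduce pointedness to showing the lineality space is trivial and then use only the singleton constraints \eqref{eq:4} and the monotonicity-type constraints \eqref{eq:5a}, inducting on coalition size. The paper phrases this as ``$Ax=0$ has only the zero solution'' and derives $w(\{j\})=0$, then $w(\{j,k\})=0$, etc., directly from the equalities, while you run the two one-sided chains for $w$ and $-w$; these are the same computation.
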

	\begin{proof}
		Recall that a cone defined by $Ax\leq 0$ is pointed iff $Ax=0$ has 0 as unique solution. From
		(\ref{eq:4}) we get $w(\{j\})=0$ for all $\{j\}\in\Kc$. Then from (\ref{eq:5a}),
		we obtain $w(\{j,k\})=0$ for all $\{j,k\}\in\Kc$. Reusing (\ref{eq:5a}), we
		finally obtain that $w(S)=0$ for all $S\in\Kc$.
	\end{proof}
	
	We define $(N,e_{S_0})$, $S_0\in\Kc$, as $e_{S_0}(T)\coloneqq0$
	for all $T\in \K$, and for $T\in \Kc$:
	\begin{equation}\label{eq:5}
		m^{e_{S_0}}(T)\coloneqq\begin{cases}
			(-1)^{|T|}, & \text{ if } T\subseteq S_0\\
			0, & \text{ otherwise}.
		\end{cases}
	\end{equation}
	
	\begin{lemma}
		For $S_0 \in \Kc$, $e_{S_0}$ belong to $\Co(\K,0)$.
	\end{lemma}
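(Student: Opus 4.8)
The plan is not to verify convexity of the full set function $e_{S_0}$ on $2^N$ directly, but to use the reduced description of the recession cone derived above. Since $e_{S_0}(T)=0$ for every $T\in\K$ by construction, $e_{S_0}$ lies in $\Co(\K,0)$ if and only if its restriction to $\Kc$ satisfies the four families of inequalities \eqref{eq:2}--\eqref{eq:5a}. So the first step is to put $e_{S_0}$ on $\Kc$ into closed form. If $S\in\Kc$ then $i\notin S$, hence every $T\subseteq S$ lies in $\Kc\cup\{\emptyset\}$, and Möbius inversion together with \eqref{eq:5} gives
\[
e_{S_0}(S)=\sum_{\emptyset\neq T\subseteq S}m^{e_{S_0}}(T)=\sum_{\emptyset\neq T\subseteq S\cap S_0}(-1)^{|T|}=
\begin{cases}
0,&S\cap S_0=\emptyset,\\
-1,&S\cap S_0\neq\emptyset,
\end{cases}
\]
the last step using $\sum_{T\subseteq A}(-1)^{|T|}=0$ for $A\neq\emptyset$. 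Thus $e_{S_0}(S)=-g(S)$ on $\Kc$, where $g(S)\coloneqq 1$ if $S\cap S_0\neq\emptyset$ and $g(S)\coloneqq 0$ otherwise.

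The second step isolates the properties of $g$ that matter: $g$ is nonnegative, $g(\emptyset)=0$, $g$ is monotone nondecreasing, and $g$ is submodular. Nonnegativity and $g(\emptyset)=0$ are immediate, monotonicity is clear, and submodularity is a diminishing-returns statement: adjoining $j$ to $S$ raises $g$ (from $0$ to $1$) only when $j\in S_0$ and $S\cap S_0=\emptyset$, a condition that is only harder to meet for larger $S$; alternatively, $g(S)=\min\{1,|S\cap S_0|\}$ is a concave nondecreasing transform of a modular function. Equivalently, $w\coloneqq -g$ is nonpositive, monotone nonincreasing, and supermodular.

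The third step is the routine check that $w$ satisfies \eqref{eq:2}--\eqref{eq:5a}. Inequality \eqref{eq:4} is $-g(\{j\})\leq0$; inequality \eqref{eq:5a} is $g(S\setminus\{j\})\leq g(S)$, i.e., monotonicity; inequality \eqref{eq:3}, using $g(\emptyset)=0$, reads $g(\{j,k\})\leq g(\{j\})+g(\{k\})$, a special case of submodularity; and inequality \eqref{eq:2}, since $(S\setminus\{j\})\cup(S\setminus\{k\})=S$ and $(S\setminus\{j\})\cap(S\setminus\{k\})=S\setminus\{j,k\}$ for $j\neq k$, is precisely submodularity of $g$ applied to $S\setminus\{j\}$ and $S\setminus\{k\}$ after rearranging. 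Hence the restriction of $e_{S_0}$ to $\Kc$ lies in the cone cut out by \eqref{eq:2}--\eqref{eq:5a}, and therefore $e_{S_0}\in\Co(\K,0)$.

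I do not anticipate a genuine obstacle; the content of the proof is the closed-form evaluation in Step 1 and the recognition in Step 3 that \eqref{eq:2}--\eqref{eq:5a} are nothing but the monotonicity, nonpositivity, and submodularity constraints on $-g$. It is worth noting why we route through the reduced system rather than proving convexity of $e_{S_0}$ on $2^N$ directly: one can compute $e_{S_0}=\sum_{\emptyset\neq T\subseteq S_0}(-1)^{|T|}(u_T-u_{T\cup i})$, but the summands are not convex games and do not enter with a common sign, so a direct verification would have to handle the alternating-sign structure case by case, which \eqref{eq:2}--\eqref{eq:5a} makes unnecessary.
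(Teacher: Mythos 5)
Your proof is correct, but it follows a genuinely different route from the paper's. The paper works in the M\"obius domain: it verifies convexity through the characterization \eqref{eq:convex-mobius-char}, computing $\sum_{T\in[A:B]}m^{e_{S_0}}(T)$ by binomial cancellation, and then separately shows by a parity-counting argument that $e_{S_0}$ vanishes on $\K$. You work in the primal domain: you evaluate $e_{S_0}$ in closed form on $\Kc$, obtaining $e_{S_0}(S)=-1$ if $S\cap S_0\neq\emptyset$ and $0$ otherwise, recognize this as the negative of the monotone submodular function $g(S)=\min\{1,\lvert S\cap S_0\rvert\}$, and check the reduced system \eqref{eq:2}--\eqref{eq:5a} derived just before Lemma~\ref{lem:1}; since $e_{S_0}$ is zero on $\K$ by definition, this suffices. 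Both of your computations check out (the identity $\sum_{\emptyset\neq T\subseteq S\cap S_0}(-1)^{\lvert T\rvert}\in\{0,-1\}$, and submodularity of $g$ as a concave nondecreasing transform of a modular function). What your route buys: it is shorter, it produces an explicit and memorable description of $e_{S_0}$ as a game, and it sidesteps the case analysis over $[A:B]$ --- which in the paper is in fact slightly overstated, since $B\not\subseteq S_0\cup i$ does not force every $m^{e_{S_0}}(T)$, $T\in[A:B]$, to vanish when $A\subseteq S_0$ (the interval sum is still nonnegative, but not term by term). What it costs: you lean on the paper's reduction of $\Co(\K,0)$ to \eqref{eq:2}--\eqref{eq:5a}, so your argument is not self-contained the way the M\"obius verification is; that reduction is, however, already established and needed for the pointedness and extremality lemmas, so nothing circular is involved.
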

	\begin{proof}
		The recession cone of the set of $\Co$-extensions of $(N,\K,v)$ can be expressed as
		\[
		\{e \in \R^{2^n-1}\mid Ce \leq 0\}
		\]
		where rows of $Ce \leq 0$ correspond to
		\begin{equation}\label{eq:C-extreme-ray1}
			e(S) + e(T) \leq e(S \cap T) + e(S \cup T)
		\end{equation}
		for $S,T \subseteq N$ and
		\begin{equation}\label{eq:C-extreme-ray2}
			e(S)=0
		\end{equation}
		for $S \in \K$. To prove $(N,e_{S_0})$ is in the recession cone, we have to prove it is convex (which is equivalent to conditions~\eqref{eq:C-extreme-ray1} and conditions~\eqref{eq:C-extreme-ray2}).
		By~\eqref{eq:convex-mobius-char}, the game $(N,e_{S_0})$ is convex if and only if for all $A,B \subseteq N$ satisfying $A \subseteq B$, $\lvert A \rvert = 2$, it holds
		\[
		\sum_{T \in [A:B]}m^{e_{S_0}}(T) \geq 0.
		\]
		First, if $B \not\subseteq S_0 \cup i$, then from~\eqref{eq:5}, it follows that $m^{e_{S_0}}(T)=0$ for every $T \in [A:B]$, thus
		\[
		\sum_{T \in [A:B]}m^{e_{S_0}}(T) = 0.
		\]
		Second, if $B \subseteq S_0 \cup i$, then $m^{e_{S_0}}(T) = (-1)^T$ for every $T \in \left[A:B\right]$. Denote by $C = B \setminus A$. Then
		\[
		\sum_{T \in \left[A:B\right]}m^{e_{S_0}}(T) = \sum_{A \subseteq T \subseteq B}(-1)^{\lvert T \rvert} = \sum_{t=0}^{\lvert C \rvert}(-1)^t\genfrac(){0pt}{2}{\lvert C \rvert}{t} = 0.
		\]
		Game $(N,e_{S_0})$ is therefore convex. Further, for $T \in \Kc$, we have
		\[
		e_{S_0}(T \cup i) = \sum_{X \subseteq T \cup i}m^{e_{S_0}}(X).
		\]
		From the definition of $(N,e_{S_0})$, when $X \not\subseteq S \cup i$, then $m^{e_{S_0}}(X) = 0$, thus for $A=(S \cap T)\cup i$, we have
		\[
		e_S(T \cup i)  = \sum_{X \subseteq A}m^{e_S}(X).
		\]
		This expression is almost equal to counting the difference between the number of even and odd subsets. The difference is that we do not add $1$ for the empty coalition $\emptyset$ and also for $\{i\}$, $m^{e_{S_0}}(i)=0$, therefore we also do not add $-1$, thus $\sum_{X \subseteq A}m^{e_{S_0}}(X) = 0$, which concludes the proof.
	\end{proof}
	
	\begin{lemma}
		For any $S_0\in\Kc$, $e_{S_0}$ is an extreme ray of $\Co(\K,0)$.
	\end{lemma}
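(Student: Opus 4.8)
The plan is to argue directly from the standard characterization of extreme rays of a pointed cone. Since $\Co(\K,0)$ is pointed (Lemma~\ref{lem:1}), it suffices to show that $e_{S_0}\neq 0$ and that every decomposition $e_{S_0}=w_1+w_2$ with $w_1,w_2\in\Co(\K,0)$ forces $w_1$ and $w_2$ to be nonnegative multiples of $e_{S_0}$. Throughout I identify $\Co(\K,0)$ with its projection to $\R^{\Kc}$, on which it is cut out by the inequalities \eqref{eq:2}--\eqref{eq:5a}. The first step is to evaluate $e_{S_0}$ explicitly from \eqref{eq:5}: for $S\in\Kc$ one gets $e_{S_0}(S)=\sum_{\emptyset\neq T\subseteq S\cap S_0}(-1)^{|T|}$, which equals $-1$ when $S\cap S_0\neq\emptyset$ and $0$ otherwise (here it matters that $m^{e_{S_0}}(\emptyset)=0$ because $\emptyset\in\K$, which is exactly why the value is $-1$ and not $0$ on coalitions meeting $S_0$). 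In particular $e_{S_0}(S)=e_{S_0}(S\cap S_0)$ for all $S\in\Kc$, and $e_{S_0}(S_0)=-1$, so $e_{S_0}\neq 0$.

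Fix such a decomposition. The only facts about $w_1,w_2$ I will need are \eqref{eq:4} and \eqref{eq:5a}: iterating \eqref{eq:5a} along a chain gives $w_\ell(U)\leq w_\ell(T)$ whenever $T\subseteq U$ with $T,U\in\Kc$, and combined with \eqref{eq:4} this yields $w_\ell\leq 0$ on all of $\Kc$. The crux is the following collapse principle: if $T\subseteq U$ in $\Kc$ with $e_{S_0}(T)=e_{S_0}(U)$, then $w_1(T)=w_1(U)$ and $w_2(T)=w_2(U)$; indeed $0=e_{S_0}(T)-e_{S_0}(U)=\bigl(w_1(T)-w_1(U)\bigr)+\bigl(w_2(T)-w_2(U)\bigr)$ is a sum of two nonnegative numbers, so both vanish.

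From here the argument is essentially bookkeeping. For $S\in\Kc$ with $S\cap S_0=\emptyset$ we have $e_{S_0}(S)=0$ and $w_\ell(S)\leq 0$ with $w_1(S)+w_2(S)=0$, hence $w_\ell(S)=0$. For $S\in\Kc$ with $S\cap S_0\neq\emptyset$, the collapse principle applied to $S\cap S_0\subseteq S$ (both values $-1$) gives $w_\ell(S)=w_\ell(S\cap S_0)$. Finally, for any two nonempty $T_1,T_2\subseteq S_0$ the collapse principle applied to $T_1\subseteq T_1\cup T_2$ and to $T_2\subseteq T_1\cup T_2$ (all three values $-1$) shows $w_\ell$ is constant on $\{T\in\Kc : T\subseteq S_0\}$, with common value $c_\ell:=w_\ell(\{j\})\leq 0$ for $j\in S_0$. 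Combining the three observations, $w_\ell(S)=-c_\ell\,e_{S_0}(S)$ for every $S\in\Kc$, i.e.\ $w_\ell=\lambda_\ell e_{S_0}$ with $\lambda_\ell=-c_\ell\geq 0$ (and $\lambda_1+\lambda_2=1$ by evaluating $w_1+w_2=e_{S_0}$ at $S_0$), so the ray spanned by $e_{S_0}$ is extreme.

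I expect the only delicate point to be the explicit evaluation of $e_{S_0}$ and the observation that the supermodularity inequalities \eqref{eq:2}--\eqref{eq:3} play no role — only the ``monotone removal'' inequalities \eqref{eq:5a} and the nonpositivity \eqref{eq:4} are used, together with the additive split $e_{S_0}=w_1+w_2$. A more pedestrian alternative would be to exhibit $|\Kc|-1=2^{n-1}-2$ linearly independent constraints among \eqref{eq:2}--\eqref{eq:5a} binding at $e_{S_0}$, but organizing and verifying independence of that list is appreciably more work, so I would resort to it only if the decomposition argument hit an unexpected snag.
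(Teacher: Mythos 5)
Your argument is correct, and it takes a genuinely different route from the paper. The paper proves extremality by the rank criterion: it rewrites the defining system of $C^n(\K,0)$ in terms of the M\"obius transform (inequalities \eqref{eq:2m}--\eqref{eq:5m}), determines case by case exactly which of these are tight at $e_{S_0}$, and then shows by a propagation argument that the solution space of the tight subsystem is one-dimensional (every $m^w(T)$ is forced to be a fixed multiple of $m^w(\{1\})$). You instead use the face characterization of extreme rays of a pointed cone and argue directly on the game values, exploiting the explicit evaluation $e_{S_0}(S)=-1$ if $S\cap S_0\neq\emptyset$ and $0$ otherwise (which is correct, including the role of the excluded empty set), the nonpositivity coming from \eqref{eq:4} together with iterated \eqref{eq:5a}, and the ``collapse'' observation that a sum of two nonnegative slacks equal to zero forces both to vanish. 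Your decomposition argument is shorter and avoids both the tightness case analysis and the linear-algebra bookkeeping; it is also pleasant that the supermodularity constraints \eqref{eq:2}--\eqref{eq:3} are never needed, only monotone removal and nonpositivity of singletons. What the paper's computation buys in exchange is an explicit identification of the binding constraints at $e_{S_0}$, which is reusable information about the facial structure of the cone; your proof yields only the extremality statement itself. One small point worth making explicit if you write this up: the projection of $\Co(\K,0)$ onto $\R^{\Kc}$ is injective because all coordinates indexed by $\K$ are pinned to zero, so a decomposition $e_{S_0}=w_1+w_2$ in the full space is equivalent to one in the projection, and membership of $e_{S_0}$ in the cone (needed before speaking of extreme rays) is supplied by the preceding lemma.
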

	\begin{proof}
		It remains to show that $e_{S_0}$ is extreme. This amounts to
		showing that the set of solutions to $A^=x=0$ is a 1-dim vector space, where $A^=$
		is the set of tight inequalities for $e_{S_0}$.
		
		It is convenient to rewrite the system defining $C^n(\K,0)$ in terms of the
		M\"obius transform of $w$, denoted by $m^w$. We obtain:
		\begin{align}
			-\sum_{\substack{T\subseteq S\\ T\ni j,k}}m^w(T) & \leq 0, \quad S\in \Kc,
			j,k\in S, |S|>2\label{eq:2m}\\
			-m^w(\{j,k\}) & \leq 0, \quad \{j,k\}\in\Kc\label{eq:3m}\\
			m^w(\{j\}) & \leq 0, \quad \{j\}\in\Kc\label{eq:4m}\\
			\sum_{\substack{T\subseteq S\\T\ni j}} m^w(T) & \leq 0, \quad S\in
			\Kc,|S|>1,j\in S.\label{eq:5m}
		\end{align}
		Let us check which inequalities are tight.
		\begin{enumerate}
			\item Ineq. (\ref{eq:2m}): If $\{j,k\}\not\subseteq S_0$, then no $T\subseteq S_0$ can
			contain $j,k$, hence all terms in the sum are zero, and the inequality is
			tight.
			Now, if $\{j,k\}\subseteq S_0$, then
			\[
			\sum_{\substack{T\subseteq S\\T\ni
					j,k}}m^{e_{S_0}}(T)=\sum_{\substack{T\subseteq S\cap S_0\\T\ni j,k}}(-1)^{|T|}=0,
			\]
			except when $S\cap S_0=\{j,k\}$.
			In summary, (\ref{eq:2m}) is always tight, except when $S\cap S_0=\{j,k\}$.
			\item Clearly, (\ref{eq:3m}) is tight iff $\{j,k\}\not\subseteq S_0$.
			\item (\ref{eq:4m}) is tight iff $j\not\in S_0$.
			\item Ineq. (\ref{eq:5m}): If $j\not\in S_0$, then all terms of the sum are zero
			and the inequality is tight. If $j\in S_0$, then we obtain
			\[
			\sum_{\substack{T\subseteq S\\T\ni j}}m^{e_{S_0}}(T) =
			\sum_{\substack{T\subseteq S\cap S_0\\T\ni j}}(-1)^{|T|} = 0.
			\]
			except when $|S\cap S_0|=1$.
			Hence, in all cases except when $|S\cap S_0|=1$, the inequality is tight.
		\end{enumerate}
		
		In summary, the system of tight inequalities is formed by all equalities
		(\ref{eq:2m}) except when $S\cap S_0=\{j,k\}$, all equalities (\ref{eq:5m})
		except when $S\cap S_0=\{j\}$, and the system
		\begin{align}
			m^w(\{j\}) & = 0,\quad j\not\in S_0\label{eq:3m1}\\
			m^w(\{j,k\}) & = 0,\quad \{j,k\}\not\subseteq S_0\label{eq:4m1}.
		\end{align}
		Consider first the case $|S_0|=1$, say, $S_0=\{1\}$. Then by (\ref{eq:3m1}) and
		(\ref{eq:4m1}), $m^w(T)$ is determined (=0) for $|T|=1,2$, except
		$m^w(\{1\})$. Observe that no other equation contains $m^w(\{1\})$, so that it
		remains undetermined. Now, consider $T$ s.t. $|T|=3$. Using (\ref{eq:2m}) with
		$S=T$ yields $m^w(T)=0$. Hence we obtain $m^w(T)=0$ for all $T$ such that
		$|T|=3$. Iterating the process, we finally get that $m^w(T)=0$ for all $T$ such
		that $|T|\geq 2$. As a conclusion, the set of solutions has dimension 1.
		
		\medskip
		
		Consider then that $|S_0|>1$, w.l.o.g., $S_0=\{1,\ldots, p\}$. We show that all
		$m^w(T)$ can be expressed in terms of $m^w(\{1\})$. By (\ref{eq:3m1}) and
		(\ref{eq:4m1}), $m^w(T)$ is determined (=0) for all $T$ s.t. $|T|=1,2$, except
		when $T\subseteq S_0$. Observe that no equation (\ref{eq:2m}) contains
		$m^w(\{1\})$ and no equation (\ref{eq:5m}) contains $m^w(\{1\})$ alone.
		Consider $S=\{1,k\}\subseteq S_0$ and $j=1$. Applying
		(\ref{eq:5m}) yields $m^w(\{1,k\})=-m^w(\{1\})$. Taking now $j=k$ and
		$S=\{1,k\}$ in
		(\ref{eq:5m}) yields $m^w(\{k\})=m^w(\{1\})$. This in turn permits to
		determine all $m^w(\{k,l\})$, $\{k,l\}\subseteq S_0$ by taking $S=\{k,l\}$ and
		$j=k$ in (\ref{eq:5m}). In this way, we can determine all values of $m^w(T)$ for
		pairs and singletons in $S_0$. 
		By repeated use of
		(\ref{eq:5m}), one can then determine all values of $m^w(T)$ with $T\subseteq
		S_0$, proceeding with $|T|=3$ first, then $|T|=4$, etc. 
		
		It remains to consider the case $T\not\subseteq S_0$. Suppose $|T\cap
		S_0|\leq 1$. Starting from $|T|=3$ and letting $S=T$ and choosing arbitrary
		$j,k$ in (\ref{eq:2m}) yields
		$m^w(T)=0$. Repeating the process with step by step higher cardinalities yields
		$m^w(T)=0$ for all such $T$. Suppose now $|T\cap S_0|>1$. Then start with
		$T$ such that $|T\setminus S_0|=1$, and use (\ref{eq:5m}) with $S=T$ and $j\in T\cap S_0$. As $m^w$ is
		determined for all subsets of $S_0$, this determines the value of $m^w(T)$. Then
		$m^w$ is determined for all $T$ such that $|T\setminus S_0|=1$. Repeating the
		process with higher cardinalities of $T\setminus S_0$ determines the value of
		$m^w(T)$ for all $T$.  
		
	\end{proof}
	
	Finally, we turn our attention towards bounding the set $\Co(\K,v)$. The following lemma shows that the set is bounded if and only if the values of the singletons are bounded. 
	
	\begin{lemma}\label{lem:extreme-rays-convex}
		Let $(N,e)$ be an extreme ray of the recession cone $\Co(\K,0)$. Then there is $\{k\} \in \Kc$ such that $e(\{k\}) < 0$.
	\end{lemma}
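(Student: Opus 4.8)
The plan is to argue directly from the inequality description (\ref{eq:2})--(\ref{eq:5a}) of the recession cone $\Co(\K,0)$, in fact establishing the slightly stronger statement that \emph{every} nonzero $e \in \Co(\K,0)$ satisfies $e(\{k\}) < 0$ for some $\{k\} \in \Kc$; the lemma then follows at once, since an extreme ray is by definition nonzero. By the inequalities (\ref{eq:4}) we already know $e(\{j\}) \leq 0$ for every $\{j\} \in \Kc$, so it suffices to exclude the possibility that $e(\{j\}) = 0$ for all $\{j\} \in \Kc$ simultaneously.

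So I would assume, for contradiction, that $e(\{j\}) = 0$ for every $j \neq i$, and show by induction on $s = |S|$ that $e(S) = 0$ for all $S \in \Kc$. The case $s = 1$ is the assumption. For $s = 2$, write $S = \{j,k\}$ (so $j,k \neq i$ and $j \neq k$): inequality (\ref{eq:3}) gives $e(\{j,k\}) \geq e(\{j\}) + e(\{k\}) = 0$, while (\ref{eq:5a}) applied to $S$ gives $e(\{j,k\}) \leq e(\{k\}) = 0$; hence $e(\{j,k\}) = 0$. For the step $s \geq 3$, assume $e(T) = 0$ for all $T \in \Kc$ with $|T| < s$. Then (\ref{eq:5a}) with any $j \in S$ gives $e(S) \leq e(S\setminus\{j\}) = 0$, while (\ref{eq:2}), available precisely because $|S| > 2$, applied with any two distinct $j,k \in S$ gives $e(S) \geq e(S\setminus\{j\}) + e(S\setminus\{k\}) - e(S\setminus\{j,k\}) = 0$, all three coalitions on the right having cardinality strictly below $s$ and lying in $\Kc$. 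Thus $e(S) = 0$, completing the induction. Since also $e(S) = 0$ for every $S \in \K$ by the definition of $\Co(\K,0)$, we obtain $e \equiv 0$, contradicting $e \neq 0$.

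The argument is essentially mechanical once the inequality system (\ref{eq:2})--(\ref{eq:5a}) is in hand, so I do not expect a genuine obstacle; the one point requiring care is that the cardinality-two case must be handled separately, since (\ref{eq:2}) is stated only for $|S| > 2$ and for pairs one must instead invoke (\ref{eq:3}) to obtain the lower bound $e(\{j,k\}) \geq 0$. It is also worth noting the degenerate small cases: for $n \leq 1$ the cone $\Co(\K,0)$ is $\{0\}$ and has no extreme ray, while for $n = 2$ one has $\Kc = \{\{j\}\}$ and the unique extreme ray is the $e$ with $e(\{j\}) = -1$, consistent with the claim.
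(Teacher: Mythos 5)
Your proof is correct and takes essentially the same route as the paper's: assume for contradiction that $e(\{k\})=0$ for every $\{k\}\in\Kc$ and conclude $e\equiv 0$, which is impossible for an extreme ray. The only difference is in execution --- the paper closes the argument in one line via superadditivity ($0=\sum_{k\in S}e(\{k\})\leq e(S)<0$ for some $S\in\Kc$ with $e(S)<0$), whereas you reach $e\equiv 0$ by an induction on $|S|$ through the inequality system (\ref{eq:2})--(\ref{eq:5a}), correctly handling the $|S|=2$ case with (\ref{eq:3}); both versions are sound.
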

	\begin{proof}
		Suppose $e(\{k\}) = 0$ for all $\{k\} \in \Kc$. There is $S \in \Kc$ such that $e(S) < 0$, otherwise $(N,e)$ is not an extreme ray. However, from convexity (actually, superadditivity is enough), this leads to a contradiction, because
		\[
		0 = \sum_{k \in S}e(\{k\}) \leq e(S) < 0.
		\]
	\end{proof}

	In a similar manner, we can show that every $i$-centered incomplete game has the same recession cone of $\S$-extensions, denoted by $\S(\K,0)$. It is pointed and nonempty because it contains $(N,-u_k)$ where $u_k$ is a unanimity game. These actually form a subset of extreme rays, however we omit the proof.
	
	\begin{lemma}\label{lem:extreme-rays-super}
		Let $(N,e)$ be an extreme ray of the recession cone $\S(\K,0)$. Then there is $\{k\} \in \Kc$ such that $e(\{k\}) < 0$.
	\end{lemma}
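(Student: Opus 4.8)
The plan is to mirror the argument of Lemma~\ref{lem:extreme-rays-convex}, but working with the inequalities that define $\S(\K,0)$ rather than $\Co(\K,0)$. First I would write down the recession cone of the set of $\S$-extensions in projected form: starting from the superadditivity conditions \eqref{eq:sextensions1}--\eqref{eq:sextensions3} with $v\equiv 0$, the equalities \eqref{eq:sextensions3} let us project onto $\R^{\Kc}$, \eqref{eq:sextensions2} becomes $e(S)\le 0$ for $S\in\Kc$ (taking $T\in\K$ disjoint from $S$, whose value is $0$), and \eqref{eq:sextensions1} becomes $e(S)+e(T)\le e(S\cup T)$ for disjoint $S,T\in\Kc$. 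In particular, for any $S\in\Kc$ with $s\ge 2$, iterating the additivity inequality on a partition of $S$ into singletons gives $\sum_{k\in S}e(\{k\})\le e(S)$.

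Next, I would argue by contradiction exactly as in the convex case. Suppose $(N,e)$ is an extreme ray of $\S(\K,0)$ with $e(\{k\})=0$ for every singleton $\{k\}\in\Kc$, i.e.\ for every $k\neq i$. Since an extreme ray is nonzero, and since all values $e(\{k\})$ are already pinned to $0$, there must be some $S\in\Kc$ with $|S|>1$ and $e(S)\neq 0$; combined with $e(S)\le 0$ from the projected constraints this gives $e(S)<0$. But then superadditivity applied to the singleton partition of $S$ yields
\[
0=\sum_{k\in S}e(\{k\})\le e(S)<0,
\]
a contradiction. Hence some $\{k\}\in\Kc$ satisfies $e(\{k\})<0$.

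The only genuinely new bookkeeping relative to Lemma~\ref{lem:extreme-rays-convex} is confirming that the projected inequality system for $\S(\K,0)$ really does force $e(S)\le 0$ for all $S\in\Kc$ and that the partition argument is available, i.e.\ that for each $S\in\Kc$ with $|S|\ge 2$ the pair $(S,N\setminus S)$ or, more directly, the successive disjoint unions of singletons, are all legitimate instances of \eqref{eq:sextensions1}; this is immediate since every proper subset of $S$ lies in $\Kc$ (it omits $i$). The step I expect to need the most care is simply making explicit that the recession cone of the $\S$-extensions is indeed independent of the particular $i$-centered game and equals $\S(\K,0)$ as just described — but this is already asserted in the text preceding the lemma, so it may be invoked directly, and the rest is the same two-line contradiction as before.
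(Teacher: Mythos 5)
Your proposal is correct and is essentially the paper's own argument: the paper proves the convex case by assuming all singletons in $\Kc$ vanish, producing some $S\in\Kc$ with $e(S)<0$, and deriving the contradiction $0=\sum_{k\in S}e(\{k\})\le e(S)<0$ (explicitly noting that superadditivity suffices), and then states that the superadditive case follows by the same reasoning. The extra bookkeeping you supply — that the projected constraints force $e(S)\le 0$ on $\Kc$ and that the singleton partition is a legitimate instance of the superadditivity inequalities — is a faithful filling-in of details the paper leaves implicit.
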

	\begin{proof}
		Follows from the proof of Lemma~\ref{lem:extreme-rays-convex}.
	\end{proof}
	
	For both $\S$-extensions and $\Co$-extensions, games $(N,e_{S_0})$ and $(N,-u_k)$ are not their only extreme rays. In Table 1, we give an overview of the number of extreme rays of sets of $\S$-extensions and $\Co$-extensions. These results were achieved numerically.
	\medskip
	\begin{table}[h]
		\label{table1}
		\caption{Number of extreme rays of $\S(v)$ and $\Co(v)$ in comparison with extreme rays of form $(N,-u_k)$ and $(N,e_S)$.}
		\begin{center}
			\begin{tabular}{ |c|c|c|c|c|c| }
				\hline
				$n$ & $1$ & $2$ & $3$ & $4$ & $5$ \\
				\hline
				$rays(\S)$ & 0 & 1 & 4 & 22 & 3120 \\
				\hline
				$(N,-u_{k})$ & 0 & 1 & 2 & 3 & 4\\
				\hline
				$rays(\Co)$ & 0 & 1 & 3 & 8 & 41\\
				\hline
				$(N,e_{S_0})$ & 0 & 1 & 3 & 7 & 15\\
				\hline
			\end{tabular}
		\end{center}
	\end{table}
	
	Unboundedness of both sets pose possible problems for approximations of solution concepts. By Lemmata~\ref{lem:extreme-rays-convex},\ref{lem:extreme-rays-super}, both sets are bounded if and only if values of singletons are bounded. One way to solve this problem is to impose zero-normalisation to the extensions, which in combination with superadditivity implies monotonicity. We will proceed with a slightly more general case, where we restrict only to monotonicity. We denote by $\S_+$ and $\Co_+$ the sets of monotonic $\S$-extensions, respectively monotonic $\Co$-extensions. Finally, we discuss the question of $\S_+$-extendability and $\Co_+$-extendability. Of course, both extendabilities imply monotonicity of $(N,\K,v)$ together with superadditivity and convexity, respectively. The opposite direction is summarised in the following theorem.
	\begin{theorem}
		If a player-centered incomplete game $(N,\K,v)$ is monotonic, then it is $\S_+$-extendable. If it is also convex, then it is $\Co_+$-extendable.
	\end{theorem}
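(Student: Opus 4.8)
The plan is to show that the extension $(N,v_1)$ already defined in~\eqref{eq:min-max-vertices} serves in both cases: if $(N,\K,v)$ is monotonic, then $(N,v_1)$ is a monotonic superadditive extension (establishing $\S_+$-extendability), and if $(N,\K,v)$ is moreover convex, then $(N,v_1)$ is in addition convex (establishing $\Co_+$-extendability). Since $v_1(S)=v(S)$ for all $S\in\K$, $(N,v_1)$ is an extension by construction, so all that remains is to verify monotonicity, superadditivity, and, in the convex case, convexity of the \emph{complete} game $v_1$; for each I would proceed by a short case analysis according to whether the coalitions involved lie in $\K$ or in $\Kc$.

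The one fact to isolate first is that monotonicity of $(N,\K,v)$ forces $v(S)\ge 0$ for every $S\in\K$, since $\emptyset\in\K$, $v(\emptyset)=0$ and $\emptyset\subseteq S$. Granting this, monotonicity of $v_1$ is immediate: for $S\subseteq T$, if both lie in $\K$ it is monotonicity of $(N,\K,v)$; if both lie in $\Kc$ both sides equal $0$; the case $S\in\Kc$, $T\in\K$ reads $0\le v(T)$, which is the fact just noted; and the case $S\in\K$, $T\in\Kc$ cannot occur, as $i\in S\subseteq T$ would force $T\in\K$.

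For superadditivity, let $S\cap T=\emptyset$. If both are in $\Kc$, then $S\cup T\in\Kc$ and all three values vanish; two nonempty coalitions of $\K$ cannot be disjoint since both contain $i$; and if exactly one of $S,T$, say $T$, lies in $\K\setminus\{\emptyset\}$, then $S\cup T\in\K$ and the required inequality is $v(T)\le v(S\cup T)$, which holds by monotonicity since $T\subseteq S\cup T$. The convex case I would handle by checking $v_1(S)+v_1(T)\le v_1(S\cap T)+v_1(S\cup T)$ directly: if $S,T\in\K$ then also $S\cap T,S\cup T\in\K$ and the inequality is exactly convexity of $(N,\K,v)$; if $S,T\in\Kc$ every term vanishes; and if, say, $S\in\K$ and $T\in\Kc$, then $S\cup T\in\K$ while $S\cap T\in\Kc\cup\{\emptyset\}$, so the inequality degenerates to $v(S)\le v(S\cup T)$, again true by monotonicity. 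In every case the only nontrivial content reduces to a monotonicity inequality among coalitions of $\K$, which is precisely why the single game $v_1$ can play the role of a monotonic, superadditive, and — when $(N,\K,v)$ is convex — convex extension simultaneously.

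I do not expect a genuine obstacle here; the argument is essentially bookkeeping. The step needing care is the monotonicity of $v_1$: zeroing out all unknown values would be incompatible with monotonicity in general, and it works only because monotonicity of the incomplete game automatically makes all known values nonnegative — this is the observation the whole proof hinges on. As an alternative to the direct supermodularity check, one could instead verify convexity of $v_1$ through the Möbius characterization~\eqref{eq:convex-mobius-char}, using that $m^{v_1}$ is supported on $\K$ and that each resulting Möbius sum (over an interval all of whose coalitions contain $i$) collapses, after cancellation, into a monotonicity or convexity inequality of $(N,\K,v)$; this route also works but is more computational.
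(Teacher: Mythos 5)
Your proposal is correct and follows essentially the same route as the paper: both take the extension $(N,v_1)$ and verify the inequality $v_1(S)+v_1(T)\le v_1(S\cap T)+v_1(S\cup T)$ by a case analysis on membership of $S,T$ in $\K$ versus $\Kc$, with the mixed case reducing to monotonicity of $(N,\K,v)$ and the all-$\K$ case to its convexity. Your write-up is in fact slightly more explicit than the paper's, since you also verify monotonicity of $v_1$ directly (via $v(S)\ge v(\emptyset)=0$ for $S\in\K$), a point the paper leaves to a later remark.
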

	\begin{proof}
		Consider $(N,v_1)$ defined in~\eqref{eq:min-max-vertices}. For $S,T \subseteq N$, condition
		\[
		v_1(S) + v_1(T) \leq v_1(S \cap T) + v_1(S \cup T)
		\]
		is equal to
		\begin{equation}\label{eq:C-S-extend1}
			v(S) + v(T) \leq v(S \cap T) + v(S \cup T)\text{ for }S,T \in \K,
		\end{equation}
		\begin{equation}\label{eq:C-S-extend2}
			v(S) \leq v(S \cup T)\text{ for }S \in \K, T \in \Kc\text{, and}
		\end{equation}
		\begin{equation}\label{eq:C-S-extend3}
			0 \leq 0\text{ for }S \in \Kc, T \in \Kc.
		\end{equation}
		Conditions~\eqref{eq:C-S-extend3} are always satisfied. Conditions~\eqref{eq:C-S-extend2} are satisfied if $(N,\K,v)$ is monotonic (means $(N,v_1)$ is $\S_+$-extension) and Conditions~\eqref{eq:C-S-extend1} are satisfied if $(N,\K,v)$ is convex.
	\end{proof}
	
	\subsection{Monotonic extensions}
	To initiate the analysis of monotonic extensions, we recall extensions $(N,v_0)$ and $(N,v_1)$ being defined as
	\begin{equation}
		v_0(S) = \begin{cases}
			v(S) & \text{if } S \in \K,\\
			v(S \cup i) & \text{if } S \notin \K,\\
		\end{cases}
		\hspace{0.2in} \text{and} \hspace{0.2in}
		v_1(S) = \begin{cases}
			v(S) & \text{if } S \in \K,\\
			0 & \text{if } S \notin \K.\\
		\end{cases}
	\end{equation}
	It is not difficult to see that if for $S,T \in \K$, $S \subseteq T$ it holds $v(S) \leq v(T)$, then both extensions are monotonic.
	\begin{theorem}
		A player-centered incomplete game $(N,\K,v)$ is $\M$-extendable if and only if it is monotonic.
	\end{theorem}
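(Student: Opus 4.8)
The plan is to prove the two implications separately, the forward one being immediate and the backward one reducing to a short case analysis on one of the explicit extensions already constructed.

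For the ``only if'' direction, suppose $(N,\K,v)$ admits an $\M$-extension $(N,w)$. Given $S,T\in\K$ with $S\subseteq T$, monotonicity of $w$ yields $v(S)=w(S)\le w(T)=v(T)$; in particular, since $\emptyset\in\K$ and $v(\emptyset)=0$, this records that $v(S)\ge 0$ for every $S\in\K$. Hence $(N,\K,v)$ is monotonic in the sense defined for player-centered games.

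For the converse, assume $(N,\K,v)$ is monotonic. It suffices to produce one monotone extension, and the natural candidate is $(N,v_1)$ (equivalently $(N,v_0)$) from~\eqref{eq:min-max-vertices}, as already announced in the sentence preceding the theorem. To check that $v_1$ is monotone, fix $A\subseteq B\subseteq N$ and split into cases according to whether each of $A,B$ lies in $\K$ or in $\Kc$. Recall $\Kc=\{S\subseteq N\mid i\notin S\}$ and $\emptyset\in\K$, so that $A\in\K$, $B\in\Kc$, $A\subseteq B$ forces $A=\emptyset$, and $A\in\Kc$ forces $A\neq\emptyset$. If $A,B\in\K$, then $v_1(A)=v(A)\le v(B)=v_1(B)$ by monotonicity of $v$ (when $A=\emptyset$, using $v_1(\emptyset)=0\le v(B)$); if $A,B\in\Kc$, then $v_1(A)=0=v_1(B)$; in the mixed case $A=\emptyset\in\K$, $B\in\Kc$ we have $v_1(A)=0\le 0=v_1(B)$; and in the mixed case $A\in\Kc$, $B\in\K$ we have $v_1(A)=0\le v(B)=v_1(B)$ because $v(B)\ge v(\emptyset)=0$. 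Thus $v_1$ is monotone and $(N,\K,v)$ is $\M$-extendable.

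I expect no genuine obstacle here; the argument is a routine case check. The only point worth isolating is that monotonicity of $(N,\K,v)$ together with $\emptyset\in\K$ forces nonnegativity of all known values, which is exactly what makes the comparison of the known values against the zeros assigned on $\Kc$ by $v_1$ (respectively, against the ``lifted'' values $v(S\cup i)$ used by $v_0$) go through.
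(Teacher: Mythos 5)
Your proof is correct and takes essentially the same route as the paper: the forward direction by restricting a monotone extension to $\K$, and the converse by exhibiting $(N,v_1)$ from~\eqref{eq:min-max-vertices} as a monotone extension (the paper asserts monotonicity of both $v_0$ and $v_1$ without the case analysis you supply, so your write-up simply fills in that routine check). The only cosmetic slip is your description of $\Kc$ as $\{S\subseteq N\mid i\notin S\}$, which should exclude $\emptyset$ since $\emptyset\in\K$; your subsequent case analysis already treats $\emptyset$ correctly, so nothing is affected.
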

	\begin{proof}
		The forward implication is immediate. The opposite implication follows from monotonicity of extensions $(N,v_0)$ and $(N,v_1)$, which is satisfied when $(N,\K,v)$ is monotonic.
	\end{proof}
	From monotonicity, it must hold for every $\M$-extension $(N,w)$ and every $S \in \Kc$ that $0 \leq w(S) \leq w(S \cup i)$, or in terms of $(N,v_0)$, $(N,v_1)$ as
	\[
	v_0(S) \leq w(S) \leq v_1(S).
	\]
	Immediately, it follows that these games satisfy the definition of extreme games. 
	In the following theorem, we characterise all the vertices of
        $\M(v)$. The idea is as follows. Choose an ordering $\sigma$ of
        coalitions from $\Kc$ and denote them according to the ordering as
        $S_1,S_2,\dots,S_{\lvert\Kc\rvert}$. Now, in each step, decide if for
        coalition $S_i$, we set its worth to the minimal or the maximal possible
        value. The worth can be easily determined by taking either the maximum over all already defined worths of subsets or the minimum of the worths of already defined supersets. Intuitively, these games are extreme points because their worths are extremal. Formally, let $\sigma \in \Sigma_{\lvert \Kc\rvert}$ and let us denote $\L_\sigma(S) = \{A \in \Kc \mid \sigma(A) < \sigma(S)\} \cup \K$ the set of all predecessors of $S$ under $\sigma$. Now, for $x \in \{0,1\}^{\Kc}$, let $(N,\vsigmax)$ be a cooperative game defined for $S \in \K$ as $\vsigmax(S) \coloneqq v(S)$ and for $S \in \Kc$ as
	\begin{equation}\label{eq:vsigmax}
		\vsigmax(S) \coloneqq \begin{cases}
			\min\limits_{A \in \L_\sigma(S) , S \subsetneq A}\vsigmax(A) & \text{if } x_S = 1,\\
			\max\limits_{A \in \L_\sigma(S), A \subsetneq S}\vsigmax(A) & \text{if } x_S = 0.\\
		\end{cases}
	\end{equation}
	The permutation $\sigma$ represents the ordering described in the above
        process and the value $x_S$ represents the decision to either set the worth of $S$ to the minimal or the maximal possible value.
	
	\begin{theorem}
		For a $\M$-extendable player-centered $(N,\K,v)$, it holds $(N,\vsigmax)$ defined in~\eqref{eq:vsigmax} are extreme games of $\M(v)$.
	\end{theorem}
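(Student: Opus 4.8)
The plan is to establish three facts about $(N,\vsigmax)$: that it is well defined, that it lies in $\M(v)$, and that it is not the midpoint of any nondegenerate segment contained in $\M(v)$, the last being exactly the extreme-point condition of Definition~\ref{def:extreme-point}. For well-definedness, observe that when the recursive rule \eqref{eq:vsigmax} is applied to a coalition $S\in\Kc$, the set $\L_\sigma(S)$ consists precisely of $\K$ together with all coalitions of $\Kc$ treated before $S$, so every coalition occurring in the $\min$ or the $\max$ already carries a value; moreover these sets are nonempty, since $i\notin S$ gives $S\cup i\in\K\subseteq\L_\sigma(S)$ with $S\subsetneq S\cup i$ (case $x_S=1$), while $\emptyset\in\K\subseteq\L_\sigma(S)$ with $\emptyset\subsetneq S$ (case $x_S=0$). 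Hence $\vsigmax$ is a well-defined cooperative game, and $\vsigmax(S)=v(S)$ for $S\in\K$ makes it an extension of $(N,\K,v)$.

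Next I would show $(N,\vsigmax)\in\M$, i.e. monotonicity. Since $\M$-extendability of $(N,\K,v)$ is equivalent to its monotonicity, $v$ is monotone on $\K$. I would then prove, by induction on $k=0,1,\dots,|\Kc|$, that the restriction of $\vsigmax$ to $\K\cup\{A\in\Kc : \sigma(A)\le k\}$ is monotone with respect to inclusion. The base case $k=0$ is monotonicity of $v$ on $\K$. For the step, let $S$ be the coalition with $\sigma(S)=k$, so that $\L_\sigma(S)$ is exactly the collection handled so far. If $x_S=1$ and $A^*$ realises the minimum in \eqref{eq:vsigmax}, then $\vsigmax(S)\le\vsigmax(D)$ for every previously handled $D\supsetneq S$ by definition of the minimum, and for every previously handled $B\subsetneq S$ we have $B\subsetneq A^*$, hence $\vsigmax(B)\le\vsigmax(A^*)=\vsigmax(S)$ by the induction hypothesis; the case $x_S=0$ is symmetric. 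Thus adding $S$ preserves monotonicity, and after the final step $\vsigmax$ is monotone on $2^N$, so $(N,\vsigmax)\in\M(v)$.

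Finally, extremality. Suppose $(N,w),(N,w')\in\M(v)$ satisfy $\vsigmax=\tfrac12(w+w')$; by Definition~\ref{def:extreme-point} it suffices to conclude $w=w'$. Every extension coincides with $v$ on $\K$, so $w(A)=w'(A)=v(A)$ there; I would then show $w(S)=w'(S)=\vsigmax(S)$ for all $S\in\Kc$ by induction on $\sigma(S)$, assuming at stage $S$ that this already holds on $\L_\sigma(S)$. If $x_S=1$, let $A^*$ attain the minimum in \eqref{eq:vsigmax}; monotonicity of $w$ and $w'$, together with $A^*\supsetneq S$ and the induction hypothesis $w(A^*)=w'(A^*)=\vsigmax(A^*)=\vsigmax(S)$, gives
\[
w(S)\le w(A^*)=\vsigmax(S), \qquad w'(S)\le w'(A^*)=\vsigmax(S),
\]
and since $\vsigmax(S)=\tfrac12\big(w(S)+w'(S)\big)$ both inequalities are equalities, i.e. $w(S)=w'(S)=\vsigmax(S)$. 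The case $x_S=0$ is symmetric, using the maximum and the reversed inequalities. This completes the induction, hence $w=w'$ and $(N,\vsigmax)$ is an extreme game of $\M(v)$.

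I expect the only delicate point to be the bookkeeping in the second step: one must keep track of the fact that $\L_\sigma(S)$ is exactly the set of coalitions processed before $S$, so the induction hypothesis applies to it, and one must use that the optimiser $A^*$ is a \emph{strict} super- or subset of $S$ to link the newly assigned value to the inclusion relations already established. Once monotonicity of $\vsigmax$ is secured, the extremality argument is the same induction run simultaneously on $w$ and $w'$, using only the defining identity \eqref{eq:vsigmax} and the monotonicity of $w,w'$, not any further property of $\vsigmax$. An alternative to the last step would be to exhibit $|\Kc|$ linearly independent binding constraints and invoke Theorem~\ref{thm:extreme-point-char}, but the midpoint argument above is shorter and avoids analysing the incidence structure of the monotonicity inequalities.
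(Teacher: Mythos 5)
Your proof is correct. The first two parts (well-definedness of the recursion and the induction showing that $(N,\vsigmax)$ restricted to $\L_\sigma(S_k)$ stays monotone at each step) coincide with the paper's argument, down to the use of the optimiser $A^*$ to compare a newly assigned value with previously handled subsets or supersets. Where you genuinely diverge is the extremality step. The paper observes that the recursion forces $\vsigmax(S)=v(T)$ for some $T\in\K$ for each $S\in\Kc$, and then invokes Theorem~\ref{thm:extreme-point-char} on these $\lvert\Kc\rvert$ binding constraints; you instead verify Definition~\ref{def:extreme-point} directly, writing $\vsigmax=\frac{1}{2}(w+w')$ with $w,w'\in\M(v)$ and propagating $w(S)=w'(S)=\vsigmax(S)$ along the order $\sigma$, using only monotonicity of $w,w'$ and the identity $w(S)+w'(S)=2\vsigmax(S)$ to squeeze both values against the bound coming from $A^*$. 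Your route buys you something real: Theorem~\ref{thm:extreme-point-char} requires the binding constraints to be \emph{linearly independent}, and the paper's one-line appeal does not verify this (the active constraints are chains of monotonicity inequalities, whose independence would still need an argument), whereas your midpoint induction sidesteps the incidence structure entirely and is self-contained. The paper's route, once made precise, has the mild advantage of exhibiting the active facets explicitly, which is reused implicitly elsewhere (e.g.\ in Lemma~\ref{lem:extreme-monotonic}); but as written, your argument is the more rigorous of the two.
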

	\begin{proof}
		The game $(N,\vsigmax)$ is clearly an extension. By induction on $k$, we show that $(N,\L_\sigma(S_{k}),\vsigmax)$ is monotonic. Suppose $(N,\L_\sigma(S_{k}),v)$ is monotonic and $x_{S_k}=1$. For $T \in  \L_\sigma(S_{k}), S_k \subsetneq T$, we have
		\[
		\vsigmax(S_{k}) \leq \min\limits_{A \in \L_\sigma(S_{k}), S_k \subsetneq A}\vsigmax(A) \leq \vsigmax(T).
		\]
		For $T \in \L_\sigma(S_{k})$, $T \subsetneq S_k$, let $S^* \in \L_\sigma(S_{k})$ such that $\vsigmax(S_k)=\vsigmax(S^*)$. Since $S_k \subsetneq S^*$, we have $T \subsetneq S^*$. From monotonicity of $(N,\L_\sigma(S_{k}),v)$, it follows
		\[
		\vsigmax(T) \leq \vsigmax(S^*)=\vsigmax(S_k).
		\]
		
		As the case where $x_S = 0$ can be handled in a similar manner, monotonicity of $(N,\L_\sigma(S_{k+1}),\vsigmax)$ follows. For the initial step of the induction, $(N,\L_\sigma(S_1),\vsigmax) = (N,\K,v)$, which is monotonic by assumption.

		Finally, we show extremality. From the recursive construction of $(N,\vsigmax)$, for every $S \in \Kc$ we have $T \in \K$ (notice also $\emptyset \in \K$) such that
		\[
		\vsigmax(S) = v(T).
		\]
		This gives us $\lvert \Kc \rvert$ equalities, thus by Theorem~\ref{thm:extreme-point-char}, $(N,\vsigmax)$ is an extreme point of the set of $\M$-extensions.
	\end{proof}
	To prove these are the only extreme games, we need the following lemma.
	\begin{lemma}\label{lem:extreme-monotonic}
		Let $(N,e)$ be an extreme game of the set of $\M$-extensions of a $\M$-extendable player-centered incomplete game. Then for every $S \in \Kc$, there is either $k \in S$ such that
		\[
		e(S) = e(S \setminus k)
		\]
		or there is $j \in N \setminus S$ such that
		\[
		e(S) = e(S \cup j).
		\]
	\end{lemma}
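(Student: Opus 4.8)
The plan is to prove the contrapositive: if, for some $S_0\in\Kc$, we have $e(S_0\setminus\{k\})\ne e(S_0)$ for every $k\in S_0$ \emph{and} $e(S_0\cup\{j\})\ne e(S_0)$ for every $j\in N\setminus S_0$, then $e$ is not an extreme game of $\M(v)$. The idea is that such an $S_0$ is ``free'': none of the monotonicity inequalities that involve the coordinate $w(S_0)$ is binding at $e$, so $e$ can be moved slightly in the $\pm w(S_0)$ direction and still lie in $\M(v)$, contradicting Definition~\ref{def:extreme-point}. Throughout I regard $\M(v)$ as a polyhedron in $\R^{\Kc}$: the coordinates indexed by $\K$ are fixed to the values $v(S)$, and the remaining coordinates $w(S)$, $S\in\Kc$, are constrained precisely by the monotonicity inequalities $w(A)\le w(B)$ for all $A\subseteq B\subseteq N$ (reading $w$ as $v$ on $\K$); this set is nonempty since $(N,\K,v)$ is $\M$-extendable.

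First I would upgrade the two ``$\ne$'' hypotheses to strict inequalities, using that $e$ is monotone: $e(S_0\setminus\{k\})< e(S_0)$ for every $k\in S_0$ and $e(S_0)< e(S_0\cup\{j\})$ for every $j\in N\setminus S_0$. Then I would check that every defining inequality of $\M(v)$ containing $w(S_0)$ is strict at $e$. Such an inequality is either of the form $w(A)\le w(S_0)$ with $A\subsetneq S_0$, or $w(S_0)\le w(B)$ with $S_0\subsetneq B$. In the first case, if $A=\emptyset$ then $e(\emptyset)\le e(S_0\setminus\{k\})< e(S_0)$ for any $k\in S_0$ (note $S_0\ne\emptyset$ since $\emptyset\in\K$); if $A\ne\emptyset$ then $A\in\Kc$, and choosing $k\in S_0\setminus A$ gives $e(A)\le e(S_0\setminus\{k\})< e(S_0)$. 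In the second case, if $B\in\Kc$, choosing $j\in B\setminus S_0$ gives $e(S_0)< e(S_0\cup\{j\})\le e(B)$; if $B\in\K$, then $i\in B$, so $S_0\cup\{i\}\subseteq B$ and $e(S_0)< e(S_0\cup\{i\})\le e(B)=v(B)$, where strictness uses the hypothesis with $j=i\in N\setminus S_0$. Hence every inequality touching $w(S_0)$ has strictly positive slack at $e$.

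To conclude, let $\chi_{S_0}\in\R^{\Kc}$ be the vector with a $1$ in coordinate $S_0$ and $0$ elsewhere. Since $\M(v)$ is defined by finitely many inequalities, since those not involving $w(S_0)$ are unaffected by perturbing coordinate $S_0$, and since those involving $w(S_0)$ have positive slack at $e$, there is $\varepsilon>0$ with $e+\varepsilon\chi_{S_0}\in\M(v)$ and $e-\varepsilon\chi_{S_0}\in\M(v)$; by Definition~\ref{def:extreme-point} this forces $\varepsilon\chi_{S_0}=0$, a contradiction. The only delicate point I anticipate is the bookkeeping of degenerate cases — when $|S_0|=1$ (so that $S_0\setminus\{k\}=\emptyset\notin\Kc$) and when a superset $B$ of $S_0$ lies in $\K$ rather than in $\Kc$ — but both are absorbed by always reducing to a single covering step and invoking monotonicity of $e$, so I do not expect a genuine obstacle here.
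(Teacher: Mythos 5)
Your proposal is correct and follows essentially the same route as the paper: both argue that if neither a subset-neighbor nor a superset-neighbor of $S_0$ ties with $e(S_0)$, then all monotonicity constraints touching the coordinate $w(S_0)$ have positive slack, so $e\pm\varepsilon\chi_{S_0}$ are both $\M$-extensions, contradicting Definition~\ref{def:extreme-point}. Your write-up is merely more explicit about the degenerate cases (supersets lying in $\K$, singletons) that the paper's shorter argument leaves implicit.
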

	\begin{proof}
		Suppose there is neither $k$ nor $j$ satisfying the assertion above for some $S$. It means there is $\varepsilon > 0$ such that
		\[
		e(S \setminus k ) \leq e(S) + \varepsilon \leq e(S \cup j)
		\]
		and
		\[
		e(S \setminus k ) \leq e(S) - \varepsilon \leq e(S \cup j)
		\]
		for every $k$ and $j$. Define $(N,e^+)$ where $e^+(S)\coloneqq e(S) + \varepsilon$ and $e^+(T)\coloneqq e(T)$ otherwise. Similarly, we can define $(N,e^-)$. Both games are $\M$-extensions, which is in contradiction with Definition~\ref{def:extreme-point}.
	\end{proof}
	
	\begin{theorem}\label{thm:mono-set-description}
		For a $\M$-extendable player-centered incomplete game $(N,\K,v)$, it holds
		\[
		\M(v) = conv\left\{\vsigmax \mid \sigma \in \Sigma_{\lvert \Kc \rvert}, x \in \{0,1\}^{ \Kc }\right\}.
		\]
	\end{theorem}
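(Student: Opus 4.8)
The plan is to prove the two inclusions separately. The inclusion $\supseteq$ is free: the preceding theorem already shows that each $(N,\vsigmax)$ is a monotonic extension of $(N,\K,v)$, hence lies in $\M(v)$, and $\M(v)$ is convex. For $\subseteq$, I would first observe that $\M(v)$ is bounded --- for any $\M$-extension $(N,w)$ and any $S\in\Kc$, monotonicity forces $0=w(\emptyset)\le w(S)\le w(S\cup i)=v(S\cup i)$ --- so $\M(v)$ is a polytope and equals the convex hull of its extreme points. It therefore suffices to show that every extreme point $(N,e)$ of $\M(v)$ coincides with some $(N,\vsigmax)$.

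To analyse such an $e$, I would introduce the graph $H_e$ on vertex set $2^N\setminus\{\emptyset\}$ whose edges join $S$ and $S\cup\{j\}$ exactly when $e(S)=e(S\cup\{j\})$, i.e.\ the tight immediate monotonicity constraints at $e$. Monotonicity makes $e$ constant on each connected component of $H_e$. The key structural step is the claim: \emph{every component of $H_e$ contains a coalition of $\K$}. This can be obtained from Theorem~\ref{thm:extreme-point-char} --- the binding constraints of $e$ are $w|_\K=v|_\K$ together with the $H_e$-equalities, and for these to have rank $2^n-1$ every component must be pinned by $\K$ --- or, in the spirit of the proof of Lemma~\ref{lem:extreme-monotonic}, by perturbing $e$ uniformly by $\pm\varepsilon$ on a hypothetical $\K$-free component $D$: every monotonicity inequality linking $D$ to its complement is strict (otherwise a chain argument would enlarge $D$), so the perturbed game stays in $\M(v)$ and contradicts extremality.

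Granting the claim, I would finish constructively. For each component $D$ of $H_e$, pick a root $r_D\in D\cap\K$ and a spanning tree of $D$ rooted there; for $S\in D\cap\Kc$ let $p(S)$ be its parent in that tree, an immediate neighbour of $S$ with $e(p(S))=e(S)$. Define $x_S\coloneqq 0$ if $p(S)\subsetneq S$ and $x_S\coloneqq 1$ if $S\subsetneq p(S)$, and let $\sigma$ be any ordering of $\Kc$ in which $p(S)$ precedes $S$ whenever $p(S)\in\Kc$ (possible because these precedence relations form a forest on $\Kc$, all roots lying in $\K$). Then $\vsigmax=e$ follows by induction along $\sigma$: if $\vsigmax$ already agrees with $e$ on $\K$ and on all coalitions of $\Kc$ preceding $S$, then $p(S)\in\L_\sigma(S)$ with $\vsigmax(p(S))=e(p(S))=e(S)$, while monotonicity of $e$ gives $e(A)\le e(S)$ for every $A\in\L_\sigma(S)$ with $A\subsetneq S$ and $e(A)\ge e(S)$ for every such $A$ with $A\supsetneq S$; hence the extremum prescribed by $x_S$ in~\eqref{eq:vsigmax} equals $e(S)$, so $\vsigmax(S)=e(S)$. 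This gives $e=\vsigmax$ and closes the inclusion.

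The hard part is the middle step. Lemma~\ref{lem:extreme-monotonic} by itself only guarantees that no vertex of $H_e$ is isolated, which is not enough: the naive digraph in which each $S\in\Kc$ points to a single tight neighbour need not be acyclic, and without acyclicity there is no way to match the recursion~\eqref{eq:vsigmax} to $e$. Upgrading ``no isolated vertex'' to ``every component meets $\K$'' via the dimension (or perturbation) argument is precisely what allows the spanning trees to be rooted inside $\K$ and makes the inductive verification go through.
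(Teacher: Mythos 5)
Your proposal follows essentially the same route as the paper's proof: the paper also fixes an extreme point $e$, grows the ``tight components'' outward from $\K$ layer by layer along covering equalities, rules out a component disjoint from $\K$ by the same $\pm\varepsilon$ perturbation, and reads off $\sigma$ and $x$ from the direction (superset or subset) in which each coalition is first reached. You make explicit two points the paper leaves implicit and that are genuinely needed: that $\M(v)$ is bounded, so it really is the convex hull of its extreme points, and that the parent relation must be acyclic (rooted in $\K$) for the recursion in~\eqref{eq:vsigmax} to be matched to $e$ by induction.

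There is one concrete slip. You take the vertex set of $H_e$ to be $2^N\setminus\{\emptyset\}$, and with that choice the key claim ``every component of $H_e$ meets $\K$'' is false: for an $i$-centered game the only nonempty member of $\K$ contained in a coalition of $\Kc$ does not exist (every nonempty $T\in\K$ contains $i$), so for instance at the extreme point $v_1$ the component of a singleton $\{j\}$, $j\neq i$, is $\{S\in\Kc : v_1(S)=0\}=\Kc$ and meets no nonempty member of $\K$; the perturbation argument also breaks there, since $e$ cannot be pushed below $e(\emptyset)=0$. The repair is immediate and costs nothing: include $\emptyset$ (which lies in $\K$ by the definition of a player-centered game) as a vertex of $H_e$, with an edge $\emptyset$--$\{j\}$ whenever $e(\{j\})=0$. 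Then every component does meet $\K$, a spanning-tree root may be $\emptyset$, and the rest of your construction goes through unchanged (when $p(S)=\emptyset$ you get $x_S=0$ and the maximum in~\eqref{eq:vsigmax} correctly returns $0$). This is exactly how the paper's proof handles it, since its initial layer is $\L_0=\K\ni\emptyset$.
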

	\begin{proof}
		Let $(N,e)$ be an extreme game of $\M(v)$. We construct $\sigma$ and $x$ such that $\vsigmax=e$. We proceed iteratively and in each step, we construct a set of coalitions $\L_i$. For the initial step, $\L_0 = \K$. Next, we collect all coalitions $S \in \Kc$ such that there is $S \setminus k \in \K$ or $S \cup j \in \K$ satisfying
		\[
		e(S) = e(S \setminus k) \text{ or } e(S) = e(S \cup j).
		\]
		To distinguish between two possible cases we denote by
		\[
		\L^+_1 = \{ S \in \Kc \mid \exists S \cup j \in \L_0: e(S) = e(S \cup j)\}
		\]
		and
		\[
		\L^-_1 = \{ S \in \Kc \setminus \L_1^+ \mid \exists S \setminus k \in \L_0: e(S) = e(S \setminus k)\}
		\]
		and $\L_1=\L^+_1 \cup \L^-_1$. Notice that it holds $\L^{-}_1 = \emptyset$. In each following step, we construct $\L_i$ in a similar manner, considering only coalitions $S$ not yet included by already constructed sets and coalitions $S \setminus k, S \cup j \in \L_{i-1}$. Formally,
		\[
		\L_i^+ = \{ S \in \Kc \setminus (\L_1\cup\dots\cup\L_{i-1}) \mid  \exists S \cup j \in \L_{i-1}, e(S) = e(S \cup j)\}
		\]
		and
		\[
		\L_i^- = \{ S \in \Kc \setminus (\L_1\cup \dots \L_{i-1}\cup \L_{i}^+)  \mid \exists S \setminus k \in \L_{i-1}, e(S) = e(S \setminus k)\}.
		\]
		We remark $\L_i^+ \cap \L_i^- = \emptyset$ and denote $\L = \cup_{i=0}^\infty\L_i$.
		
		First, we show that every $S \subseteq N$ is contained in $\L$.  For a contradiction, if $S \notin \L$, it must hold $S \in \Kc$. We construct sets $\L(S)$ and $\L_i(S)$ in a similar manner as $\L$ and $\L_i$ with the only distinction that $\L_0(S) = \{S\}$ and $\L_0=\K$. It holds for every $A,B \in \L(S)$ that $e(A) = e(S)$ and for every $X, Y \notin \L(S)$ such that $X \subsetneq A$ and $A \subsetneq Y$, it holds
		\begin{equation}\label{eq:6}
			e(X) < e(A)\hspace{0.2in}\text{and}\hspace{0.2in}e(A) < e(Y).
		\end{equation}
		Fix $\varepsilon > 0$ such that
		\begin{equation}\label{eq:7}
			e(X) < e(A) - \varepsilon\text{ and } e(A) + \varepsilon < e(Y).
		\end{equation}
		Such $\varepsilon$ exists, because for every $A \in \L(S)$, we have $X=\emptyset$ and $Y=N$ satisfying condition~\eqref{eq:7}. Define $(N,e^+)$, $(N,e^-)$ as
		\[
		(e^+)(T) \coloneqq \begin{cases}
			e(T) + \varepsilon & \text{if } T \in \L(S),\\
			e(T) &\text{if } T \notin \L(S),
		\end{cases}\hspace{0.2in}(e^-)(T) \coloneqq \begin{cases}
			e(T) - \varepsilon & \text{if } T \in \L(S),\\
			e(T) &\text{if } T \notin \L(S).
		\end{cases}
		\]
		These games are extensions because $\L(S) \subseteq \Kc$ and they are $\M$-extensions because they satisy~\eqref{eq:7} and because $(N,e)$ is monotonic. By definition of extreme points, $(N,e)$ is not an extreme game of $\M(v)$, a contradiction.
		
		Now, choose $\sigma \in \Sigma_{\lvert \Kc \rvert}$ satisfying
		\begin{equation}\label{eq:9}
			\sigma(S) < \sigma(T) \iff S \in \L_i, T \in \L_j\text{ where } 0 < i < j
		\end{equation}
		and construct $x \in \{0,1\}^{\Kc}$ such that
		\[
		x_S \coloneqq \begin{cases}
			1 &\text{if } S \in \bigcup_{i=1}^\infty\L_i^+,\\
			0 &\text{if } S \in \bigcup_{i=1}^\infty\L_i^-.\\
		\end{cases}
		\]
		Finally we show $\vsigmax = e$. For $S \in\K$, it holds
                $\vsigmax(S) = e(S) = v(S)$. Further, denote by
                $S_1,S_2,\dots,S_{\lvert \Kc \rvert}$ the ordering of coalitions
                in $\Kc$ according to $\sigma$. Now suppose that for $S_k$, it
                holds $\vsigmax(T)=e(T)$ for every $T \in \L(S)$. Further,
                suppose $S_k \in \L^+_i$ and $x_{S_k}=1$. This means that there exists $S_k \cup j \in \L_{i-1}$ such that
		\begin{equation}\label{eq:mono1}
			e(S_k)=e(S_k \cup j) = \vsigmax(S_k \cup j).
		\end{equation}
		and since $\L_{i-1} \subseteq \L(S_k)$, we have
		\begin{equation}\label{eq:mono2}
			\vsigmax(S_k) = \min\limits_{A \in \L_\sigma(S_k) , S_k \subsetneq A}\vsigmax(A) \leq \vsigmax(S_k \cup j).
		\end{equation}
		If $\vsigmax(S_k) < \vsigmax(S_k \cup j)$, there is $X  \in \L_\sigma(S_k)$
		such that $S_k \subsetneq X$ and
		\begin{equation}\label{eq:mono3}
			\vsigmax(S_k)=\vsigmax(X).
		\end{equation}
		As $\vsigmax(X)=e(X)$, we have
		\[
		e(X)=\vsigmax(S_k) < \vsigmax(S_k\cup j) = e(S_k),
		\]
		which contradicts $e(S_k) \leq e(X)$. Thus it must hold $\vsigmax(S_k)=e(S_k)$. For other cases, where $x_S=0$ and $S \in \L_{i}^-$, the analysis is similar.
	\end{proof}
	
	We remark multiple pairs $(\sigma,x)$ may describe the same extreme game. The ambiguity arises, e.g., from choosing permutation $\sigma$ according to~\eqref{eq:9}, however, this is not the only reason. 
	
	\section{Approximations of solution concepts}
	Imagine that we want to determine a solution $\mathcal{S}(v^*)$ of a
        cooperative game $(N,v^*)$. However we have only partial knowledge about
        the game which is represented by an incomplete game $(N,\K,v)$. From its nature, we know that $(N,v^*)$ lies in $C \subseteq
        \Gamma^n$. The set of $C$-extensions of $(N,\K,v)$ represents the set of
        possible candidates for $(N,v^*)$, and $(N,v^*)$ is among these games. This means that by computing $\mathcal{S}(w)$ for every $C$-extension $(N,w)$, we also compute $\mathcal{S}(v^*)$. The only problem is that we cannot distinguish which one of the $C$-extensions is $(N,v^*)$. Nevertheless, by considering the union of $\mathcal{S}(w)$ for all $C$-extensions $(N,w)$, we are sure that $\mathcal{S}(v^*)$ is contained in this set. Similarly, if we consider the intersection of all solutions $\mathcal{S}(w)$, we have a set of payoff vectors which is guaranteed to be a subset of $\mathcal{S}(v^*)$. The idea is formally captured by the following definition.
	\begin{definition}\label{def:weak-solution-concept}
		Let $(N,\K,v)$ be a $C$-extendable incomplete game and $\mathcal{S}\colon C \to 2^{\Rn}$ a solution concept on $C \subseteq \Gamma^n$. Then by \emph{weak solution} $\cup\mathcal{S}(C,\K) \colon C(\K) \to 2^{\Rn}$, we mean
		\[
		\cup\mathcal{S}(C,\K)(v) \coloneqq \bigcup_{w \in C(v)}\mathcal{S}(w).
		\]
	\end{definition}
	We write $\cup\mathcal{S}(C)$ instead of $\cup\mathcal{S}(C,\K)(v)$ whenever $\K$ and $v$ is apparent from the context. In a similar way, we can define the \textit{strong solution} $\cap\mathcal{S}(C,\K)$ where the union is replaced by the intersection. It is clear from the definition that
	\begin{equation}
		\cap\mathcal{S}(C,\K)(v) \subseteq \mathcal{S}(v^*) \subseteq \cup\mathcal{S}(C,\K)(v)
	\end{equation}
	and the difference between the sets depends heavily on both $C$ and $\mathcal{K}$. If for example $\K=2^N$, all three sets coincide. If on the contrary $\K=\{\emptyset\}$, for most of the standard solution concepts, the relations become $\emptyset \subseteq \mathcal{S}(v^*) \subseteq \mathbb{R}^n$. Also, the more restrictive $C$ is, the less $C$-extensions are considered, thus the stronger the approximations are. The ultimate goal is to find a compromise between information provided by $(C,\K)$ and the strength of the approximations.
	
	So far, these approximations were studied for minimal incomplete games,
        a class of incomplete games with $\K = \{\{i\} \mid i \in N\} \cup
        \{\emptyset,N\}$. It was shown for most of the standard solution
        concepts that unless we restrict to
        $\P$-extensions, we get bounds of the form $\emptyset \subseteq
        \mathcal{S}(v^*) \subseteq \mathbb{I}(v)$. Interestingly, the core did not get better approximations even when restricted to $\P$-extensions. In this section, we analyse the weak and strong variants of the core, and the weak variant of the Shapley value and the $\tau$-value for player-centered incomplete games.
	
	\subsection{The core}
	In the case of the core, by considering $\cup\mathcal{C}$ and
        $\cap\mathcal{C}$, we are 
        able to get nontrivial bounds on
        $\mathcal{C}(v^*)$. More interestingly, these sets are
        attained for specific extensions, meaning that without further
        information, they are in a sense the best possible approximations.
	\begin{theorem}\label{thm:approx-core}
		Let $(N,\K,v)$ be an $i$-centered incomplete game. If
		\begin{enumerate}
			\item $(N,\K,v)$ is $\M$-extendable, then 
			\begin{enumerate}
				\item $\cup\Core(\M) = \Core(v_1)$,
			\end{enumerate}
			\item $(N,\K,v)$ is $\P$-extendable, then 
			\begin{enumerate}
				\item $\cap\Core(\P) = \Core(v_0)$, and 
				\item $\cup\Core(\P) = \Core(v_1)$.
			\end{enumerate}
		\end{enumerate}
	\end{theorem}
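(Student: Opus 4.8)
The plan is to reduce the whole statement to one elementary property of the core: if $(N,w)$ and $(N,w')$ are games with $w(N)=w'(N)$ and $w\geq w'$ pointwise on $2^N$, then $\Core(w)\subseteq\Core(w')$, because every preimputation of $w$ is one of $w'$ and $x(S)\geq w(S)\geq w'(S)$ for all $S$. Every extension of the $i$-centered game $(N,\K,v)$ agrees with $v$ on $\K$, so all extensions share the value $v(N)$ and the core constraints indexed by $\K$; hence the property above applies whenever one extension dominates another pointwise. So the theorem will follow once we exhibit a pointwise-least extension in the relevant class, and, for $\cap\Core(\P)$, a pointwise-greatest positive extension, and check that these extremal games lie in the class.

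First I would establish the sandwich: $v_1\leq w$ for every $w\in\M(v)$, and $v_1\leq p\leq v_0$ for every $p\in\P(v)$. On $\K$ every extension equals $v$, hence equals $v_0$ and $v_1$. On $\Kc$ we have $v_1(S)=0$; a monotone extension $w$ satisfies $w(S)\geq w(\emptyset)=0$, and a positive extension $p$ satisfies $p(S)=\sum_{\emptyset\neq T\subseteq S}m^p(T)\geq 0$, giving $v_1\leq w$ and $v_1\leq p$. For the upper bound I would use the description of $\P(v)$ as $\mathrm{conv}\{v_x : x\in\{0,1\}^{\Kc}\}$ from~\eqref{eq:def-vx}, equivalently the representation in~\eqref{eq:positive-extensions-alternative} with coefficients $\beta_S\in[0,1]$: evaluating it at $S\in\Kc$ yields $p(S)=\sum_{T\in\Kc,\,T\subseteq S}\beta_T\,m^v(T\cup i)$, and since $(N,\K,v)$ is positive we have $m^v(T\cup i)\geq 0$, so $p(S)\leq\sum_{T\in\Kc,\,T\subseteq S}m^v(T\cup i)=v_0(S)$ (the bound is attained at $\beta\equiv 1$, i.e., at $v_0=v_x$ with $x\equiv 0$).

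Next I would verify the membership facts: $v_0,v_1\in\P(v)$ whenever $(N,\K,v)$ is positive, and $v_1\in\M(v)$ whenever it is monotone. For positivity, the Möbius transform of $v_1$ equals $m^v$ on $\K$ and $0$ off it, and that of $v_0$ equals $v(i)$ on $\{i\}$, $0$ on $\K\setminus\{i\}$, and $m^v(T\cup i)$ on $\Kc$; all entries are nonnegative since $(N,\K,v)$ is positive, and both games are extensions. For monotonicity of $v_1$ with $A\subseteq B$: the case $A\in\K$, $B\in\Kc$ cannot occur as $i\in A\subseteq B$; if $A,B\in\K$ then $v_1(A)=v(A)\leq v(B)=v_1(B)$; if $A\in\Kc$ then $v_1(A)=0\leq v_1(B)$, since $v_1(B)=0$ when $B\in\Kc$ and $v_1(B)=v(B)\geq v(\emptyset)=0$ when $B\in\K$. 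Recall $\P$-extendability is equivalent to positivity and $\M$-extendability to monotonicity of $(N,\K,v)$, so in each item of the theorem the required hypothesis is at hand.

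The conclusions then drop out. If $(N,\K,v)$ is $\M$-extendable, the sandwich and the observation on cores give $\Core(w)\subseteq\Core(v_1)$ for every $w\in\M(v)$, so $\cup\Core(\M)\subseteq\Core(v_1)$; and $v_1\in\M(v)$ yields the reverse inclusion, so $\cup\Core(\M)=\Core(v_1)$. Replacing $\M$ by $\P$ (using $v_1\leq p$ and $v_1\in\P(v)$) gives $\cup\Core(\P)=\Core(v_1)$. For the strong core, from $p\leq v_0$ we get $\Core(v_0)\subseteq\Core(p)$ for every $p\in\P(v)$, hence $\Core(v_0)\subseteq\cap\Core(\P)$; since $v_0\in\P(v)$ we also have $\cap\Core(\P)\subseteq\Core(v_0)$, so $\cap\Core(\P)=\Core(v_0)$. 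The only step beyond bookkeeping is the inequality $p\leq v_0$ on $\Kc$, i.e., that $v_0$ really is the pointwise maximum over all positive extensions; this rests on the convex-hull/extreme-point structure of $\P(v)$ and the nonnegativity of the marginal Möbius coefficients $m^v(S\cup i)$, and that is the part I would write out most carefully.
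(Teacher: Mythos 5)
Your proposal is correct and follows essentially the same route as the paper: the anti-monotonicity of the core in the values on $\Kc$, combined with identifying $v_1$ as the pointwise-least extension and $v_0$ as the pointwise-greatest positive extension. You in fact spell out the sandwich inequalities and the membership of $v_0,v_1$ in the respective classes more explicitly than the paper does, which merely asserts these facts.
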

	\begin{proof}
		For an extension $(N,w)$ of $(N,\K,v)$, the core can be expressed as
		\[
			\Core(w) =\left\{x \in \mathbb{X}(v) \mid x_i \geq v(i), x(S\cup i) \geq v(S \cup i), x(S) \geq w(S), \forall S \in \Kc\right\}.
		\]
		Therefore, the only distinction between the cores of two different extensions is based on conditions
		\[
		x(S) \geq w(S)\text{ for } S \in \Kc.
		\]
		If we consider two extensions $(N,w_1)$, $(N,w_2)$ which differ only in a value of one fixed coalition $S \in \Kc$ such that $w_1(S) > w_2(S)$, then it holds $\Core(w_1) \subseteq \Core(w_2)$. This means that if there is a $C$-extension $(N,\overline{v})$ such that for every $S \in \Kc$,
		\[
		\overline{v}(S) \geq \max_{w \in C(v)}w(S)
		\]
		then $\Core(\overline{v}) = \cap \Core(C)$. Similarly, a $C$-extension $(N,\underline{v})$ such that for every $S \in \Kc$,
		\[
		\underline{v}(S) \leq \min_{w \in C(v)}w(S)
		\]
		would satisfy $\Core(\underline{v}) = \cup \Core(C)$. For both                sets of $\M$-extensions and $\P$-extensions such games exist (if the sets are nonempty) and it holds that $\overline{v} = v_0$ and $\underline{v}=v_1$.
	\end{proof}
	
	As it always holds that $\P(v) \subseteq \Co_+(v) \subseteq \S_+(v) \subseteq \M(v)$, Theorem~\ref{thm:approx-core} also shows that
	\begin{enumerate}
		\item $\cap\Core(\Co_+) = \Core(v_0)$ and $\cup\Core(\Co_+) = \Core(v_1)$ if $(N,\K,v)$ is $\Co_+$-extendable,
		\item $\cap\Core(\S_+) = \Core(v_0)$ and $\cup\Core(\S_+) = \Core(v_1)$ if $(N,\K,v)$ is $\S_+$-extendable.
	\end{enumerate}

	\subsection{The Shapley value}
	
	In this section, we investigate the weak Shapley value. We show that for player $i$, the range of its possible payoffs does not depend on the class of $C$-extensions, however, for the rest of the players, the range may differ. For a player-centered game $(N,\K,v)$, we define its Shapley value $\phi(v)$ as
	\[
	\phi_k(v) \coloneqq \sum_{S \in \K, k \notin S}\gamma_S\left(v(S \cup k)
        - v(S)\right), \quad k\in N
	\]
	where $\gamma_S \coloneqq \frac{s!(n-s-1)!}{n!}$. For an extension $(N,w)$ of $(N,\K,v)$ and every $k \in N$, we have\\ $\phi_k(w) = \sum_{S \subseteq N \setminus k}\gamma_S \left(w(S \cup k) - w(S)\right)$, which can be rewritten for $k\neq i$ as
	\[
	\sum_{S \in \K, k \notin S}\gamma_S\left(v(S \cup k) - v(S)\right) + \sum_{S \in \Kc, k \notin S}\gamma_S\left(w(S \cup k) - w(S)\right).
	\]
	We see that the value of the first sum is always equal to $\phi_k(v)$. Thus, it remains to determine the range of the second sum when we consider all $C$-extensions.
	
	\begin{theorem}\label{thm:shapley-mono-k}
		For a $\M$-extendable player center incomplete game $(N,\K,v)$, it holds for every $k \in N \setminus i$ that
		\[
		\cup\phi_k(\M) = \left[\phi_k(v),\phi_k(v) + \sum_{S \in \K, k \notin S}\gamma_{S}v(S \cup k)\right].
		\]
	\end{theorem}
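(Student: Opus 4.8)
The plan is to separate from $\phi_k(w)$ the part that is already fixed by the known values, and then to optimise the remaining ``free'' part over the polyhedron $\M(v)$. Fix an $\M$-extension $(N,w)$ and $k\in N\setminus i$, and split $\phi_k(w)=\sum_{S\subseteq N\setminus k}\gamma_S\bigl(w(S\cup k)-w(S)\bigr)$ according to whether $i\in S$: when $i\in S$ both $S$ and $S\cup k$ lie in $\K$, so $w$ agrees with $v$ there and these terms contribute exactly $\phi_k(v)$, while the remaining ones, indexed by $S\subseteq N\setminus\{i,k\}$, form
\[
h(w)\coloneqq\sum_{S\subseteq N\setminus\{i,k\}}\gamma_S\bigl(w(S\cup k)-w(S)\bigr),
\]
so that $\phi_k(w)=\phi_k(v)+h(w)$ and $\cup\phi_k(\M)=\phi_k(v)+\{\,h(w)\mid w\in\M(v)\,\}$. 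Since $h$ is linear and $\M(v)$ is a convex polyhedron, nonempty because $(N,\K,v)$ is $\M$-extendable, this image is an interval, so it is enough to locate its two endpoints. For the left one, note that each summand $\gamma_S\bigl(w(S\cup k)-w(S)\bigr)$ is $\ge 0$ because $\gamma_S>0$ and monotonicity of $w$ gives $w(S)\le w(S\cup k)$; hence $h(w)\ge 0$, with equality at the extension $v_1$ of~\eqref{eq:min-max-vertices}, which is monotone (as $(N,\K,v)$ is) and vanishes on $\Kc\cup\{\emptyset\}$. So the left endpoint is $\phi_k(v)$.

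For the right endpoint I would first rewrite $h$, reindexing its positive part by $T\coloneqq S\cup k$ and using $w(\emptyset)=0$:
\[
h(w)=\sum_{\substack{T\in\Kc\\ k\in T}}\gamma_{T\setminus k}\,w(T)-\sum_{\substack{S\in\Kc\\ k\notin S}}\gamma_S\,w(S).
\]
Monotonicity of any $\M$-extension forces $0=v(\emptyset)\le w(S)\le w(S\cup i)=v(S\cup i)$ for every $S\in\Kc$, the coalition $S\cup i$ being the smallest known superset of $S$; since every $\gamma$ is positive, this gives $h(w)\le\sum_{T\in\Kc,\,k\in T}\gamma_{T\setminus k}\,v(T\cup i)$ for all $w\in\M(v)$. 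The bound is attained by the ``greedy'' extension $w^{\ast}$ defined on $\Kc$ by $w^{\ast}(T)\coloneqq v(T\cup i)$ if $k\in T$ and $w^{\ast}(S)\coloneqq 0$ if $k\notin S$: indeed $h(w^{\ast})=\sum_{T\in\Kc,\,k\in T}\gamma_{T\setminus k}\,v(T\cup i)$, and reindexing this sum over the coalitions $S\in\K$ with $k\notin S$ (via $S=(T\setminus k)\cup i$, so that $S\cup k=T\cup i$) recovers the expression displayed in the statement. Together with the lower endpoint this yields the claimed interval.

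The step needing real care is the feasibility of $w^{\ast}$, i.e.\ that it is monotone on all of $2^N$. This reduces to a handful of cases: on the down-set of coalitions of $\Kc$ not containing $k$ we have $w^{\ast}\equiv 0$, so monotonicity there is trivial; among the coalitions containing $k$, and for inclusions between $\Kc$ and $\K$, monotonicity of $w^{\ast}$ follows from monotonicity of $v$ (together with $v(\emptyset)=0\le v(i)$); and for the only ``mixed'' inclusion $A\subseteq B$ with $k\notin A$, $k\in B$ one has $w^{\ast}(A)=0\le v(B\cup i)=w^{\ast}(B)$ because $v(B\cup i)\ge v(i)\ge 0$. Once $w^{\ast}\in\M(v)$ is established, the optimality of $h(w^{\ast})$ comes for free from the coordinatewise bounds $0\le w(S)\le v(S\cup i)$; everything else — the splitting in the first step and the final reindexing — is routine bookkeeping.
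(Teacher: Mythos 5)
Your argument is in substance the paper's own: the same split of $\phi_k(w)$ into the part frozen by $\K$ plus a free part supported on $\Kc$, the same termwise bounds $0=w(\emptyset)\le w(S)\le w(S\cup i)=v(S\cup i)$, and the same extremal extension --- your $w^{\ast}$ is exactly the game $\vsigmax$ with $x_S=1$ iff $k\in S$ that the paper exhibits. You are in fact more careful on two points: you note that linearity of $h$ on the convex set $\M(v)$ is what makes the attained set an interval (the paper only produces the two endpoints), and you correctly attach the lower endpoint to $v_1$. The paper's proof claims $v_0(S\cup k)=v_0(S)$ for $S\in\Kc$, which is false ($v_0(S\cup k)=v(S\cup k\cup i)$ need not equal $v(S\cup i)$); it is $v_1$, vanishing on all of $\Kc$, that annihilates every term of the free sum, consistent with Theorem~\ref{thm:shapley-positive-k}.

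The one step of yours that does not hold as written is the final reindexing. Under $S=(T\setminus k)\cup i$ one has $|S|=|T|$, so the coefficient $\gamma_{T\setminus k}=\tfrac{(t-1)!\,(n-t)!}{n!}$ transports to $\gamma_{S\setminus i}$, \emph{not} to $\gamma_S=\tfrac{s!\,(n-s-1)!}{n!}$; hence your (correct) supremum $\sum_{T\in\Kc,\,k\in T}\gamma_{T\setminus k}\,v(T\cup i)$ equals $\sum_{S\in\K\setminus\{\emptyset\},\,k\notin S}\gamma_{S\setminus i}\,v(S\cup k)$ and is genuinely different from the displayed $\sum_{S\in\K,\,k\notin S}\gamma_S\,v(S\cup k)$ (take $n=3$, $i=1$, $k=2$: your value is $\tfrac13 v(12)+\tfrac16 v(123)$, the displayed one is $\tfrac16 v(12)+\tfrac13 v(123)$, and a direct computation confirms yours). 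The discrepancy originates in the statement and in the paper's own rewriting of \eqref{eq:shapley-mono} as \eqref{eq:shapley-mono2}, where $\gamma_{S\setminus k}$ is silently replaced by $\gamma_S$ in the positive part of the Shapley sum. So your derivation of the upper endpoint is the correct one; just do not assert that it ``recovers the expression displayed in the statement'' --- state your formula and flag the coefficient mismatch instead.
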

	\begin{proof}
		
		To determine the range of $\cup\phi_k(\M)$, it is enough to determine the range of 
		\begin{equation}\label{eq:shapley-mono}
			\sum_{S \in \Kc, k \notin S}\gamma_S\left(w(S \cup k) - w(S)\right)
		\end{equation}
		across all $\M$-extensions $(N,w)$. From monotonicity of $w$, $w(S \cup k) - w(S) \geq 0$, therefore sum in~\eqref{eq:shapley-mono} is bounded from below by $0$. For $(N,v_0)$, we have $v_0(S\cup k) = v_0(S)$ for every $S \in \Kc$, thus \[
		\sum_{S \in \Kc, k \notin S}\gamma_S\left(v_0(S \cup k) - v_0(S) \right) = 0.\]
		It therefore yields the lower bound of $\cup\phi_k(\M)$. For the upper bound, we can rewrite~\eqref{eq:shapley-mono} as
		\begin{equation}\label{eq:shapley-mono2}
			\sum_{S \in \Kc, k \in S}\gamma_Sw(S) - \sum_{S \in \Kc, k \notin S}\gamma_Sw(S).
		\end{equation}
		The maximum across all $\M$-extensions of the sum
                in~\eqref{eq:shapley-mono} is smaller or equal to the maximum of
                the first sum minus the minimum of the second sum
                in~\eqref{eq:shapley-mono2}. Both the maximum and the minimum
                in~\eqref{eq:shapley-mono2} can be attained at the same time by
                choosing the extreme game $(N,\vsigmax)$ where 
		\[
		x_S = \begin{cases}
			1 & \text{if } k \in S,\\
			0 & \text{if } k \notin S.\\
		\end{cases}
		\]
		Notice that any permutation $\sigma \in \Sigma_{\lvert \Kc\rvert}$ yields the same game,
		\[
		\vsigmax(S) = \begin{cases}
			v(S \cup i) & \text{if } k \in S,\\
			0 & \text{if } k \notin S.\\
		\end{cases}
		\]
		For this extension, expression~\eqref{eq:shapley-mono2} is equal to
		\[
		\sum_{S \in \Kc,k \in S}\gamma_Sv(S \cup i) \text{, or equivalently} \sum_{S \in \K, k \notin S}\gamma_Sv(S \cup k),
		\]
		which concludes the proof.
	\end{proof}
	For player $i$, we can express
	\begin{equation}\label{eq:shapley-mono-i}
		\phi_i(w) = \sum_{S \subseteq N \setminus i}\gamma_S\left(w(S \cup i ) - w(S)\right) = \sum_{S \in \K}\gamma_{S \setminus i}v(S) - \sum_{S \notin \Kc}\gamma_Sw(S).
	\end{equation}
	Yet again, the value of the first sum is fixed and all that remains is to determine the range of the second sum.
 
	\begin{theorem}\label{thm:shapley-mono-i}
		For a $\M$-extendable player-centered incomplete game $(N,\K,v)$, it holds
		\begin{equation*}
			\cup\phi_i(\M)(v) = \left[\phi_i(v_0), \phi_i(v_1)\right].
		\end{equation*}
	\end{theorem}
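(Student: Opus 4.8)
The plan is to exploit the decomposition already recorded in~\eqref{eq:shapley-mono-i}. Put $C \coloneqq \sum_{S \in \K}\gamma_{S \setminus i}v(S)$; this number depends only on $(N,\K,v)$, and for every $\M$-extension $(N,w)$ we have $\phi_i(w) = C - \sum_{S \in \Kc}\gamma_S w(S)$. So everything reduces to describing the range of the weighted sum $\sum_{S \in \Kc}\gamma_S w(S)$ as $(N,w)$ ranges over $\M(v)$.

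The first step is to bound this weighted sum coordinate by coordinate. For $S \in \Kc$ we have $S \cup i \in \K$, so monotonicity of $w$ forces $0 = w(\emptyset) \le w(S) \le w(S \cup i) = v(S \cup i)$. Since each $\gamma_S$ is strictly positive, summing these inequalities gives $0 \le \sum_{S \in \Kc}\gamma_S w(S) \le \sum_{S \in \Kc}\gamma_S v(S\cup i)$. The lower end is attained exactly by $(N,v_1)$ from~\eqref{eq:min-max-vertices} (all worths on $\Kc$ equal to $0$) and the upper end by $(N,v_0)$ (all worths on $\Kc$ equal to $v(S\cup i)$), and both of these are genuine $\M$-extensions: being $\M$-extendable, $(N,\K,v)$ is monotonic, which is precisely the condition making $v_0$ and $v_1$ monotone. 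Translating back through $\phi_i(w) = C - \sum_{S\in\Kc}\gamma_S w(S)$ shows $\phi_i(w) \in [\phi_i(v_0),\phi_i(v_1)]$ for every $(N,w)\in\M(v)$, with both endpoints realized; the ordering $\phi_i(v_0)\le\phi_i(v_1)$ itself drops out from $v(S\cup i)\ge 0$ (monotonicity of the game again).

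The last step is to see that the entire interval, not just its two endpoints, is covered. One clean way: $\M(v)$ is a polyhedron, hence convex, and $w \mapsto \phi_i(w)$ is affine by linearity of the Shapley value, so the image $\cup\phi_i(\M)(v)$ is a convex subset of $\R$, i.e.\ an interval, and by the previous paragraph it is contained in $[\phi_i(v_0),\phi_i(v_1)]$ and contains both endpoints, hence equals it. Equivalently and even more concretely, for $\lambda\in[0,1]$ the game $\lambda v_1 + (1-\lambda)v_0$ is again a monotone extension and has $\phi_i$-coordinate $\lambda\phi_i(v_1)+(1-\lambda)\phi_i(v_0)$, which sweeps out the whole interval. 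I do not anticipate a real obstacle here; the only points deserving a line of care are verifying that $v_0$ and $v_1$ are admissible $\M$-extensions and noting that it is the strict positivity of the coefficients $\gamma_S$ that lets the coordinatewise bounds be combined into a tight bound on the weighted sum.
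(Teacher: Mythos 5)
Your proposal is correct and follows essentially the same route as the paper: reduce $\phi_i(w)$ to a constant minus $\sum_{S\in\Kc}\gamma_S w(S)$, bound that sum coordinatewise by monotonicity ($0\le w(S)\le v(S\cup i)$), and observe the extremes are attained by $v_1$ and $v_0$. Your extra remark that convexity of $\M(v)$ plus linearity of $\phi$ fills in the whole interval is a detail the paper leaves implicit, but the argument is the same.
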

	\begin{proof}
		To maximise the Shapley value of player $i$, we have to minimise sum
		\begin{equation}\label{eq:shapley-mono-i2}
			\sum_{S \notin \Kc}\gamma_Sw(S)
		\end{equation}
		across the set of all $\M$-extensions. From monotonicity, the sum is clearly non-negative, and for $(N,v_1)$, it is equal to $0$. To minimise~\eqref{eq:shapley-mono-i}, we  maximise~\eqref{eq:shapley-mono-i2}. This can be done by taking $(N,v_0)$, where $v_0(S)=v(S \cup i)$. It follows,
		\[
		\phi_i(v_0) = \sum_{S \subseteq N \setminus i}\gamma_S\left(v_0(S \cup i) - v_0(S)\right) = v(i).
		\]
	\end{proof}
	
	In the proof of Theorem~\ref{thm:shapley-mono-i}, the bounds of $\cup\phi_i(\M)$ were attained for games $(N,v_0)$ and $(N,v_1)$, which are positive if $(N,\K,v)$ is positive. Therefore, the same argument can be derived for $\cup\phi_i(\P)$.
	
	\begin{theorem}\label{thm:shapley-positive-i}
		For a $\P$-extendable player-centered incomplete game $(N,\K,v)$, it holds
		\[
		\cup\phi_i(\P) = \cup\phi_i(\M).
		\]
	\end{theorem}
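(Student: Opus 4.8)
The plan is to sandwich $\cup\phi_i(\P)$ between the interval $\left[\phi_i(v_0),\phi_i(v_1)\right]$ and itself. The upper containment comes for free from $\P(v)\subseteq\M(v)$, and the lower one will follow once $v_0$ and $v_1$ are identified as positive extensions, by convexity of $\P(v)$ and linearity of the Shapley value.

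First I would record the easy inclusion. Since positive games are convex and hence monotone, $\P(v)\subseteq\M(v)$; in particular $\P$-extendability forces $\M$-extendability, so Theorem~\ref{thm:shapley-mono-i} applies and
\[
\cup\phi_i(\P)=\bigcup_{w\in\P(v)}\{\phi_i(w)\}\subseteq\bigcup_{w\in\M(v)}\{\phi_i(w)\}=\cup\phi_i(\M)=\left[\phi_i(v_0),\phi_i(v_1)\right],
\]
the last equality being Theorem~\ref{thm:shapley-mono-i}.

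Next I would verify $v_0,v_1\in\P(v)$. By the characterisation of $\P$-extendability proved earlier, $(N,\K,v)$ being $\P$-extendable means it is positive, i.e.\ $m^v(S)\ge0$ for all $S\in\K$. Reading off the Möbius transforms in~\eqref{eq:min-max-vertices}: $m^{v_1}(S)=m^v(S)\ge0$ for $S\in\K$ and $m^{v_1}(S)=0$ for $S\in\Kc$, so $(N,v_1)$ is positive; and $m^{v_0}(i)=v(i)=m^v(i)\ge0$, $m^{v_0}(S)=0$ for $S\in\K\setminus\{i\}$, while $m^{v_0}(S)=m^v(S\cup i)\ge0$ for $S\in\Kc$ because then $S\cup i\in\K$, so $(N,v_0)$ is positive too. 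Hence $\phi_i(v_0),\phi_i(v_1)\in\cup\phi_i(\P)$. Since $\P(v)$ is a polyhedron it is convex, and by linearity of the Shapley value~\eqref{eq:shapley-linearity} the map $w\mapsto\phi_i(w)$ is affine on $\P(v)$; therefore $\cup\phi_i(\P)$ is a convex subset of $\R$, i.e.\ an interval. An interval containing both $\phi_i(v_0)$ and $\phi_i(v_1)$ contains the whole segment between them, so $\left[\phi_i(v_0),\phi_i(v_1)\right]\subseteq\cup\phi_i(\P)$, and combining with the display above gives $\cup\phi_i(\P)=\left[\phi_i(v_0),\phi_i(v_1)\right]=\cup\phi_i(\M)$.

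The only step demanding actual computation is the check that $v_0$ and $v_1$ are positive; everything else is soft (positivity implies monotonicity, and a linear image of a convex set is an interval), so I do not expect a real obstacle — the content of the theorem is precisely that the two extremal monotone extensions realising the endpoints of $\cup\phi_i(\M)$ are already positive, so no better bounds can be obtained by restricting from monotone to positive extensions.
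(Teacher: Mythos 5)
Your proof is correct and takes essentially the same route as the paper: the paper's own argument is precisely that the two extensions $v_0$ and $v_1$ realising the endpoints of $\cup\phi_i(\M)$ are already positive, and you merely make explicit the (needed but glossed-over) step that convexity of $\P(v)$ together with linearity of $\phi$ makes $\cup\phi_i(\P)$ an interval. One cosmetic remark: the chain ``positive $\Rightarrow$ convex $\Rightarrow$ monotone'' is not valid (convex games need not be monotone); positivity implies monotonicity directly from the nonnegative M\"obius representation, and the inclusion $\P(v)\subseteq\M(v)$ is in any case already recorded in the paper's preliminaries.
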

	\begin{proof}
		Follows from the proof of Theorem~\ref{thm:shapley-mono-i}.
	\end{proof}
	
	It remains to determine $\cup\phi_k(\P)$. In this case, we cannot use the same argument as in proof of Theorem~\ref{thm:shapley-mono-k}, because the $\M$-extension attaining the upper bound is never positive. Therefore, we will proceed in a different manner, expressing the Shapley value of every $\P$-extensions, first. We denote by $(N,\valfa)$ a $\P$-extension defined as a convex combination of vertices of $\P(v)$, $\valfa = \sum_{x \in \{0,1\}^{\lvert \Kc\rvert}}\alpha_xv_x$ where $\alpha_x \geq 0 \forall x$ and $\sum_{x}\alpha_x=1$.
	
	\begin{lemma}\label{lem:shapley-pextension}
		Let $(N,\valfa)$ be a $\P$-extension of $i$-centered incomplete game $(N,\K,v)$. Then $\phi(\valfa)$ can be expressed as
		\[
		\phi_k(\valfa)=\sum_{S \in \Kc, k \in S}m^v(S\cup i) \left(\sum_{x: x_{S}=1}\frac{\alpha_x}{\lvert S \cup i\rvert} + \sum_{x:x_{S} = 0}\frac{\alpha_x}{\lvert S \rvert}\right)
		\]
		for every $k \in N \setminus i$ and
		\[
		\phi_i(\valfa) = m^v(i) + \sum_{S \in \K^c}m^v(S\cup i)\left(\sum_{x: x_{S}=1}\frac{\alpha_x}{\lvert S \cup i \rvert}\right).
		\]
	\end{lemma}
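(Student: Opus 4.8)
The plan is to read off $\phi(\valfa)$ directly from the Möbius-transform form of the Shapley value, $\phi_k(w)=\sum_{T\ni k}m^w(T)/|T|$, exploiting that the Möbius transform of $\valfa$ has already been computed in~\eqref{eq:positive-extensions-1}: namely $m^{\valfa}(\emptyset)=0$, $m^{\valfa}(\{i\})=m^v(i)$, and for every $S\in\Kc$ one has $m^{\valfa}(S)=m^v(S\cup i)\sum_{x:x_S=0}\alpha_x$ and $m^{\valfa}(S\cup i)=m^v(S\cup i)\sum_{x:x_S=1}\alpha_x$. (Equivalently, one may invoke linearity of $\phi$, see~\eqref{eq:shapley-linearity}, on the convex combination $\valfa=\sum_x\alpha_x v_x$ and compute each $\phi(v_x)$ from~\eqref{eq:def-vx}; the bookkeeping is identical.)

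First I would treat $k\in N\setminus i$. Since $\K=\{S\subseteq N\mid i\in S\}\cup\{\emptyset\}$, the complement $\Kc$ consists exactly of the nonempty subsets of $N\setminus i$. I would split $\sum_{T\ni k}m^{\valfa}(T)/|T|$ according to whether $i\in T$. The coalitions $T$ with $k\in T$ and $i\notin T$ are precisely the $S\in\Kc$ with $k\in S$, each contributing $m^v(S\cup i)\bigl(\sum_{x:x_S=0}\alpha_x\bigr)/|S|$; the coalitions $T$ with $k\in T$ and $i\in T$ are precisely the sets $S\cup i$ with $S\in\Kc$ and $k\in S$ (the bijection being $T\mapsto T\setminus i$), each contributing $m^v(S\cup i)\bigl(\sum_{x:x_S=1}\alpha_x\bigr)/|S\cup i|$. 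Summing the two contributions over $S\in\Kc$ with $k\in S$ gives the asserted formula for $\phi_k(\valfa)$.

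For $k=i$, the coalitions $T$ with $i\in T$ are $\{i\}$ together with the sets $S\cup i$, $S\in\Kc$. The term $T=\{i\}$ contributes $m^{\valfa}(\{i\})/1=m^v(i)$, and each $S\cup i$ contributes $m^v(S\cup i)\bigl(\sum_{x:x_S=1}\alpha_x\bigr)/|S\cup i|$; summing over $S\in\Kc$ yields $\phi_i(\valfa)=m^v(i)+\sum_{S\in\Kc}m^v(S\cup i)\bigl(\sum_{x:x_S=1}\alpha_x/|S\cup i|\bigr)$, as claimed.

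I do not expect a genuine obstacle here: everything reduces to careful indexing. The points that need care are the two bijections between subsets of $N$ and coalitions in $\Kc$ described above, the harmless observation that $\emptyset$ can be dropped since $m^{\valfa}(\emptyset)=0$ and it contains no player, and the interchange of the finite sums when rewriting $\sum_{T}$ as $\sum_{S\in\Kc}$ (equivalently, when passing the linearity of $\phi$ through the convex combination). The most error-prone step will be keeping the denominator $|S|$ attached to the $x_S=0$ piece and $|S\cup i|=|S|+1$ attached to the $x_S=1$ piece.
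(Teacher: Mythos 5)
Your proposal is correct and is essentially the paper's argument: the paper computes $\phi(v_x)$ from the Möbius form and then passes the convex combination through linearity, which, as you note yourself, is the same bookkeeping as applying $\phi_k(w)=\sum_{T\ni k}m^w(T)/|T|$ directly to $m^{\valfa}$ from~\eqref{eq:positive-extensions-1}. The case split on whether $i\in T$ and the resulting denominators $|S|$ versus $|S\cup i|$ match the paper's computation exactly.
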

	\begin{proof}
		First, we express the Shapley value of an extreme game $(N,\vx)$. For the sake of brevity, in the rest of the proof, we substitute $\sum_x$ for $\sum_{x \in \{0,1\}^{\lvert\Kc\rvert}}$. It follows for every player $k \in N \setminus i$ that $\phi_k(\vx) = \sum_{S \subseteq N, k \in S} \frac{m^\vx(S)}{\lvert S \rvert}$ which further equals to
		\[
		\sum_{S \in \Kc, k \in S}\left(\frac{m^\vx(S\cup i)}{\lvert S \cup i \rvert} + \frac{m^\vx(S)}{\lvert S  \rvert}\right) = \sum_{S \in \Kc, k \in S}\frac{m^{v}(S\cup i)}{\lvert S \rvert + x_S}.
		\]
		Now for $\P$-extensions $(N,\valfa)$, we have $\phi(\valfa) = \sum_{x}\alpha_x\phi(\vx)$, which can be expressed as
		\begin{equation}\label{eq:lem-pshapley-mod1}
			\sum_{x}\alpha_x\sum_{S \in \Kc, k \in S}\frac{m^{v}(S\cup i)}{\lvert S \rvert + x_S} = \sum_{S \in \Kc, k \in S}\sum_{x}\frac{\alpha_xm^v(S \cup i)}{\lvert S \rvert + x_S}
		\end{equation}
		or
		\begin{equation}\label{eq:lem-pshapley-mod2}
		\sum_{S \in \Kc, k \in S} m^v(S \cup i) \sum_{x}\frac{\alpha_x}{\lvert S \rvert + x_S}
		\end{equation}
		For player $i$, we have
		\[
		\phi_i(\vx) = m^v(i) + \sum_{S \in \Kc}\frac{m^\vx(S\cup i)}{\lvert S \cup i \rvert} = \sum_{S \in \Kc}x_S\frac{m^\vx(S\cup i)}{\lvert S \cup i\rvert}.
		\]
		For $(N,\valfa)$, by similar modifications as in~\eqref{eq:lem-pshapley-mod1},~\eqref{eq:lem-pshapley-mod2}, we conclude
		\[
		\phi_i(\valfa) = m^v(i) + \sum_{S \in \K^c}m^v(S\cup i)\left(\sum_{x: x_{S}=1}\frac{\alpha_x}{\lvert S \cup i \rvert}\right).
		\]
	\end{proof}
	\begin{theorem}\label{thm:shapley-positive-k}
		Let $(N,\K,v)$ be a $\P$-extendable $i$-centered incomplete cooperative game. Then it holds for every $k \in N \setminus i$ that
		\[
		\cup\phi_k(\P) = \left[\phi_k(v_1),\phi_k(v_0)\right].
		\]
	\end{theorem}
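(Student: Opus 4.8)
The plan is to build directly on Lemma~\ref{lem:shapley-pextension}, which expresses $\phi_k(\valfa)$ for an arbitrary $\P$-extension $(N,\valfa)=\sum_x \alpha_x v_x$ as
\[
\phi_k(\valfa)=\sum_{S \in \Kc, k \in S}m^v(S\cup i)\left(\sum_{x:x_S=1}\frac{\alpha_x}{\lvert S\cup i\rvert}+\sum_{x:x_S=0}\frac{\alpha_x}{\lvert S\rvert}\right).
\]
Since the range of $\phi_k$ over all $\P$-extensions is the image of the simplex $\{\alpha\ge 0,\ \sum_x\alpha_x=1\}$ under this linear map, and since $\cup\phi_k(\P)$ is the union of (convex, hence interval-valued) images of a connected set, it is itself an interval; so it suffices to identify the minimising and maximising extensions among the vertices $v_x$. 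First I would rewrite the bracketed factor using $\beta_S=\sum_{x:x_S=0}\alpha_x\in[0,1]$, so that the coefficient of $m^v(S\cup i)$ becomes $\tfrac{1-\beta_S}{s+1}+\tfrac{\beta_S}{s}$, which is affine and strictly increasing in $\beta_S$ on $[0,1]$ (its derivative is $\tfrac1s-\tfrac1{s+1}>0$). The key observation is that for $k\in N\setminus i$ the different coordinates $\beta_S$, $S\ni k$, $S\in\Kc$, can be chosen independently — pick $\beta_S=0$ for all such $S$, which is realised by $x=\mathbf 1$ (all entries $1$), giving $v_1$; pick $\beta_S=1$ for all such $S$, realised by $x=\mathbf 0$, giving $v_0$.

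Concretely, the steps are: (1) invoke Lemma~\ref{lem:shapley-pextension} and substitute $\beta_S$; (2) note $m^v(S\cup i)\ge 0$ for every $S\in\Kc$ by $\P$-extendability, so each summand is monotone increasing in $\beta_S$; (3) conclude the expression is minimised by setting every relevant $\beta_S=0$ and maximised by setting every relevant $\beta_S=1$, which correspond to the \emph{extensions} $v_1$ and $v_0$ respectively — and crucially these choices are simultaneously feasible because $v_1,v_0$ are themselves $\P$-extensions (the vertices indexed by $x=\mathbf 1$ and $x=\mathbf 0$); (4) evaluate: at $\beta_S=0$ the coefficient is $\tfrac1{s+1}$, so the minimum is $\sum_{S\in\Kc,k\in S}\tfrac{m^v(S\cup i)}{s+1}=\phi_k(v_1)$, and at $\beta_S=1$ it is $\tfrac1s$, giving the maximum $\sum_{S\in\Kc,k\in S}\tfrac{m^v(S\cup i)}{s}=\phi_k(v_0)$; (5) since $\phi_k$ is continuous (linear) on the connected set of $\P$-extensions, every intermediate value is attained, so $\cup\phi_k(\P)=[\phi_k(v_1),\phi_k(v_0)]$. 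One should double-check the identification of the endpoint values with $\phi_k(v_1)$ and $\phi_k(v_0)$ directly from~\eqref{eq:def-vx}: for $v_1$ we have $m^{v_1}(S\cup i)=m^v(S\cup i)$ and $m^{v_1}(S)=0$, so $\phi_k(v_1)=\sum_{S\in\Kc,k\in S}\tfrac{m^v(S\cup i)}{s+1}$, and symmetrically for $v_0$ with $m^{v_0}(S)=m^v(S\cup i)$, $m^{v_0}(S\cup i)=0$, giving $\phi_k(v_0)=\sum_{S\in\Kc,k\in S}\tfrac{m^v(S\cup i)}{s}$.

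The main obstacle, and the point deserving the most care, is step (3): one must be sure that the separate per-coalition optimisations are jointly achievable by a single $\P$-extension, rather than only by points not in $\P(v)$. Here the earlier structural results save us — because the extreme points of $\P(v)$ are exactly the games $v_x$ (one for each $x\in\{0,1\}^{\Kc}$), and because $\beta_S$ depends on $x$ only through the coordinate $x_S$, the constant vectors $x=\mathbf 0$ and $x=\mathbf 1$ simultaneously set all $\beta_S$ to $1$ and to $0$ respectively. There is no interaction between coordinates, so no feasibility conflict arises; this is precisely the phenomenon that does \emph{not} occur for the monotone class in Theorem~\ref{thm:shapley-mono-k}, where the optimal $\M$-extension fails to be positive. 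A minor secondary point is to state explicitly that the image of a simplex under an affine map is a (closed, bounded) interval when the target is $\R$, so that "union of intervals over a connected parameter set" collapses to the single interval with the computed endpoints; this is immediate but worth one sentence for rigour.
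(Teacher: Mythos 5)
Your proposal is correct and follows essentially the same route as the paper: both invoke Lemma~\ref{lem:shapley-pextension}, observe that the coefficient of each $m^v(S\cup i)\ge 0$ is monotone in how the weight $\alpha$ splits between $x_S=0$ and $x_S=1$ (your $\beta_S$-reparametrisation makes this explicit), and conclude that the extremes are attained at the constant vectors $x=\mathbf{1}$ and $x=\mathbf{0}$, i.e.\ at $v_1$ and $v_0$. Your additional remarks --- the explicit evaluation of the endpoints, the separability of the per-coalition optimisations, and the connectedness argument showing the union is a single interval --- are correct elaborations of steps the paper leaves implicit.
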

	\begin{proof}
		From Lemma~\ref{lem:shapley-pextension}, we see that for player $k \in N \setminus i$, the payoff under the Shapley value varies based on the sum
		\[
		\sum_{x: x_S=1}\frac{\alpha_x}{\lvert S \rvert + 1} + \sum_{x:x_S = 0}\frac{\alpha_x}{\lvert S \rvert}.
		\]
		The larger the value of the expression, the larger the
                payoff. It follows that the minimum is attained for $x=1$ and the maximum for $x=0$, which corresponds to games $(N,v_1)$ and $(N,v_0)$, respectively.
	\end{proof}
	
	Theorems~\ref{thm:shapley-mono-i},\ref{thm:shapley-mono-k},\ref{thm:shapley-positive-i},\ref{thm:shapley-positive-k}, imply 
	\[
	\cup\phi_i(\M)=\cup\phi_i(\S_+)=\phi_i(\Co_+)=\cup\phi_i(\P)
	\]
	for $\P$-extendable player-centered $(N,\K,v)$. For $k \in N \setminus i$, only
	\[
	\underline{\cup\phi_k}(\M) = \underline{\cup\phi_k}(\S_+) = \underline{\cup\phi_k}(\Co_+) = \underline{\cup\phi_k}(\P)
	\]
	holds for $\P$-extendable player-centered incomplete game, but
	\[
	\overline{\cup\phi_k}(\M) = \overline{\cup\phi_k}(\P) +  \sum_{S \in \K, k \notin S}v(S).
	\]
	We do not show an exact upper bound for $\cup\phi_k(\S_+)$ and
        $\cup\phi_k(\Co_+)$, however, from numerical experiments it seems that both bounds may differ from $\overline{\cup\phi_k}(\M)$ and $\overline{\cup\phi_k}(\P)$.
	
	\subsection{The $\tau$-value}
	In this section, we restrict to \emph{zero-normalised} $\P$-extensions, i.e., $\P$-extensions $(N,w)$ satisfying $w(k)=0$ for every $k \in N$. We denote the set of zero-normalised positive extensions by $\P_0$-extensions. An $i$-centered incomplete game $(N,\K,v)$ is $\P_0$-extendable if and only if $v(i)=0$ and $(N,\K,v)$ is positive. Further, if we modify~\eqref{eq:positive-extensions-alternative}, any $\P_0(v)$-extension $(N,w)$ can be expressed as
	\begin{equation}\label{eq:0positive-extensions-alternative}
	w = \sum_{S \in \Kc} m^v(S \cup i) \left[\beta_Su_S + (1-\beta_S)u_{S \cup i}\right]
	\end{equation}
	where $\beta_{\{k\}}=0$ for every $k \in N \setminus i$ and $\beta_S \in \left[0,1\right]$.
	Now, formula~\eqref{eq:tau-value-mobious} for the $\tau$-value simplifies for every $\P_0$-extension $(N,w)$ to
	\begin{equation}\label{eq:tau-zero-normalised}
		\tau_k(w) = \frac{\sum_{S \subseteq N}m^w(S)}{\sum_{S \subseteq N}\lvert S \rvert m^w(S)}\sum_{S \subseteq N, k \in S}m^w(S),
	\end{equation}
	or since $v(N)=w(N)$, to
	\begin{equation}\label{eq:tau-positive-2}
		\tau_k(w) = \frac{v(N)}{\sum_{S \subseteq N}\lvert S \rvert m^w(S)}\sum_{S \subseteq N, k \in S}m^w(S).
	\end{equation}
	\begin{theorem}\label{thm:tau-positive-k}
		For a $\P_0$-extendable $i$-centered incomplete game $(N,\K,v)$ it holds for every $k \in N$, $k \neq i$, that
		\[
		\cup\tau_k(v) = \left[\tau_k(v_1),\tau_k(v_0)\right].
		\]
	\end{theorem}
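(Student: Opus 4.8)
The plan is to parametrise the $\P_0$-extensions by the vector $\beta=(\beta_S)_{S\in\Kc}$ from~\eqref{eq:0positive-extensions-alternative} — so $\beta_{\{j\}}=0$ for $j\neq i$ and $\beta_S\in[0,1]$ otherwise — and to show that $\tau_k$, viewed as a function of $\beta$, is continuous and monotone in each coordinate. By~\eqref{eq:0positive-extensions-alternative} the extension $w$ corresponding to $\beta$ has M\"obius transform $m^w(S)=\beta_S\,m^v(S\cup i)$ and $m^w(S\cup i)=(1-\beta_S)\,m^v(S\cup i)$ for $S\in\Kc$, with $m^w(\{j\})=0$ for all $j$; since $w$ is zero-normalised, $\tau_k(w)$ is given by~\eqref{eq:tau-positive-2}. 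I would first evaluate the numerator $\sum_{T\ni k}m^w(T)$ for $k\neq i$: splitting the coalitions $T$ containing $k$ according to whether $i\in T$ and pairing each $S\in\Kc$ with $S\cup i$, each pair contributes $\beta_S m^v(S\cup i)+(1-\beta_S)m^v(S\cup i)=m^v(S\cup i)$, so the numerator equals the $\beta$-independent constant $A_k:=\sum_{S\in\Kc,\,k\in S}m^v(S\cup i)\ge 0$. This cancellation is the crux of the statement.

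Next I would compute the normalising sum in~\eqref{eq:tau-positive-2},
\[
\sum_{T\subseteq N}|T|\,m^w(T)=\sum_{S\in\Kc}\bigl[s\beta_S+(s+1)(1-\beta_S)\bigr]m^v(S\cup i)=\sum_{S\in\Kc}(s+1)m^v(S\cup i)-\sum_{S\in\Kc}\beta_S\,m^v(S\cup i),
\]
which is affine in $\beta$ and, by positivity ($m^v(S\cup i)\ge0$), nonincreasing in every $\beta_S$. It is also $\ge\sum_T m^w(T)=w(N)=v(N)$; if $v(N)=0$ positivity forces $w\equiv0$ and the claim is trivial, so assume $v(N)>0$, whence this sum stays strictly positive on the whole parameter box. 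Hence $\tau_k(w)=v(N)\,A_k/D(\beta)$ with $D$ the affine map above, so $\tau_k$ is continuous in $\beta$ and nondecreasing in each $\beta_S$; in particular $\tau_k(v_1)\le\tau_k(w)\le\tau_k(v_0)$, where $v_1$ is the extension with $\beta=0$ (the game of~\eqref{eq:min-max-vertices}) and $v_0$ is the one with $\beta_S=1$ for every $S$ with $|S|>1$ (the zero-normalised counterpart of the $v_0$ of~\eqref{eq:min-max-vertices}, differing from it only on singletons and on pairs containing $i$). This yields $\cup\tau_k(v)\subseteq[\tau_k(v_1),\tau_k(v_0)]$.

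For the reverse inclusion I would use continuity: the set of admissible $\beta$'s is a product of closed intervals, hence connected, so the image of $\beta\mapsto\tau_k(w_\beta)$ is an interval; since it contains $\tau_k(v_1)$ and $\tau_k(v_0)$, which are its minimum and maximum by the monotonicity above, the image is exactly $[\tau_k(v_1),\tau_k(v_0)]$. The only genuine work is the bookkeeping behind the constancy of $A_k$ and the positivity of $D(\beta)$; once $\tau_k(w)=v(N)A_k/D(\beta)$ is in hand, the rest is a one-line monotonicity-plus-intermediate-value argument. Intuitively, raising $\beta_S$ moves the surplus $m^v(S\cup i)$ from $S\cup i$ down to $S$, keeping the total M\"obius mass $A_k$ on coalitions containing $k$ fixed while shrinking the normaliser $\sum_T|T|m^w(T)$, hence increasing $\tau_k$; the extremes are $v_1$ (all surplus attributed to $i$, so $k$ gets least) and $v_0$ (as little as possible attributed to $i$, so $k$ gets most).
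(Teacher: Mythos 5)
Your proposal is correct and follows essentially the same route as the paper's proof: parametrise the $\P_0$-extensions by $\beta$, observe that the numerator $\sum_{T\ni k}m^w(T)$ collapses to the $\beta$-independent constant $\sum_{S\in\Kc,\,k\in S}m^v(S\cup i)$ while the denominator $\sum_{T}|T|m^w(T)$ is affine and nonincreasing in each $\beta_S$, and read off the extremes at $\beta\equiv 0$ and $\beta\equiv 1$. Your additions — the $v(N)=0$ edge case, the strict positivity of the denominator, the connectedness argument showing the image is the full interval, and the remark that the maximiser is really the zero-normalised counterpart of $v_0$ (since $\beta_{\{j\}}$ is pinned to $0$) — only make explicit points the paper leaves implicit.
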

	\begin{proof}
		Let $(N,w)$ be a $\P_0$-extension and $k \in N\setminus i$                fixed. From~\eqref{eq:0positive-extensions-alternative}, it
                follows 
		\[m^w(S) = \beta_S m^v(S\cup i)\hspace{0.2in}\text{ and }\hspace{0.2in}m^w(S \cup i) = (1-\beta_S) m^v(S \cup i)\]
		for every $S \in \Kc$. It follows that $m^{w}(S) + m^{w}(S\cup i) = m^v(S\cup i)$, thus
		\[
		\sum_{S \subseteq N, k \in S}m^{w}(S) = \sum_{S \in \Kc, k \in S}\left[m^{w}(S) + m^{w}(S \cup i)\right] = \sum_{S \in \K, k \in S}m^v(S).
		\]
		Further,
		\[
		\sum_{S \subseteq N} \lvert S \rvert \cdot m^w(S) = \sum_{S \in \Kc} \left[s \cdot m^w(S) + (s+1)\cdot m^w(S\cup i)\right]
		\]
		or equivalently,
		\begin{equation}\label{eq:sum-positive}
			\sum_{S \in \Kc} m^v(S\cup i)\left(s+1-\beta_S\right).
		\end{equation}
		We see that $\tau_k(w)$ varies over the set of $\P$-extensions based on the value of the sum in~\eqref{eq:sum-positive}.
		
		This sum is maximal if and only if $\beta_S=0$ for every $S \in \Kc$ and minimal if and only if $\beta_S=1$ for every $S \in \Kc$. Then $\tau_k$ is minimal if and only if the sum is maximal and vice versa. A $\P$-extension, where $\beta_S=0$ for every $S \in \Kc$ (minimal $\tau_k$) equals $(N,v_1)$ and if $\beta_S=1$ for every $S \in \Kc$ (maximal $\tau_k$), it is equal to $(N,v_0)$.
	\end{proof}
	
	\begin{theorem}
		For a $\P_0$-extendable $i$-centered incomplete game $(N,\K,v)$ it holds
		\[
		\cup\tau_i(v) = \left[\tau_i(v_0),\tau_i(v_1)\right].
		\]
	\end{theorem}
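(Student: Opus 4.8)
The plan is to mirror, almost verbatim, the argument used in the proof of Theorem~\ref{thm:tau-positive-k}, but tracking the coordinate of player $i$ rather than a coordinate $k\neq i$. First I would fix an arbitrary $\P_0$-extension $(N,w)$ and write it in the parametric form~\eqref{eq:0positive-extensions-alternative}, so that $m^w(S)=\beta_S\,m^v(S\cup i)$ and $m^w(S\cup i)=(1-\beta_S)\,m^v(S\cup i)$ for every $S\in\Kc$, with $\beta_S\in[0,1]$, and $m^w(\{i\})=0$ because the game is zero-normalised. Then I would substitute these into the simplified $\tau$-value formula~\eqref{eq:tau-positive-2}.

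The key computation is to evaluate the numerator and denominator of~\eqref{eq:tau-positive-2} as functions of $\beta$. The coalitions containing $i$ are exactly $\{i\}$ together with $\{S\cup i\mid S\in\Kc\}$, so the numerator is $\sum_{S\ni i}m^w(S)=\sum_{S\in\Kc}(1-\beta_S)\,m^v(S\cup i)=:n_\beta$. For the denominator, exactly as in the step leading to~\eqref{eq:sum-positive}, $\sum_{S\subseteq N}\lvert S\rvert\,m^w(S)=\sum_{S\in\Kc}m^v(S\cup i)(s+1-\beta_S)=a+n_\beta$, where $a:=\sum_{S\in\Kc}s\cdot m^v(S\cup i)\ge0$ by positivity of $(N,\K,v)$. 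Hence $\tau_i(w)=\dfrac{v(N)\,n_\beta}{a+n_\beta}$, and since $v(N)=\sum_{S\in\Kc}m^v(S\cup i)\ge0$ as well, the dependence of $\tau_i(w)$ on the parameters $\beta$ is entirely through the single nonnegative scalar $n_\beta$.

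The remaining ingredient is a monotonicity remark: the map $t\mapsto\dfrac{v(N)\,t}{a+t}$ is nondecreasing on $t\ge0$ (its derivative equals $v(N)\,a/(a+t)^2\ge0$), so $\tau_i(w)$ is extremized precisely when $n_\beta$ is. Since $n_\beta=\sum_{S\in\Kc}(1-\beta_S)\,m^v(S\cup i)$ with $\beta_S\in[0,1]$ and $m^v(S\cup i)\ge0$, the maximum of $n_\beta$ is attained at $\beta\equiv0$, which is the extension $(N,v_1)$, and the minimum at $\beta\equiv1$, which is $(N,v_0)$ (so that $\tau_i(v_0)=0$). As the set of $\P_0$-extensions is convex, hence connected, and $w\mapsto\tau_i(w)$ is continuous on it, the image is exactly the interval joining these two values, giving $\cup\tau_i(v)=[\tau_i(v_0),\tau_i(v_1)]$.

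The part requiring most care is the bookkeeping: making the denominator reduction agree with the one behind~\eqref{eq:sum-positive}, and keeping the signs $v(N)\ge0$, $a\ge0$ straight so that the monotonicity points the right way — note that here $\tau_i$ \emph{increases} with $n_\beta$, whereas in Theorem~\ref{thm:tau-positive-k} the coordinate $\tau_k$ decreased with it, which is exactly why the roles of $v_0$ and $v_1$ are swapped compared to that statement. One should also dispose of the degenerate case $a+n_\beta=0$ (equivalently $v(N)=0$), in which~\eqref{eq:tau-positive-2} is not directly applicable; there one reads $\tau_i(w)=b^w_i$ straight from the definition of the $\tau$-value and checks it is identically $0$, so the claimed interval degenerates to $\{0\}$.
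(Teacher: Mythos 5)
Your proof is correct and follows essentially the same route as the paper's: the same parametrization via~\eqref{eq:0positive-extensions-alternative}, the same numerator/denominator computation leading to~\eqref{eq:sum-positive}, and the observation that $\tau_i$ depends monotonically on the single scalar $\sum_{S\in\Kc}(1-\beta_S)m^v(S\cup i)$ --- you get the monotonicity from the derivative of $t\mapsto v(N)t/(a+t)$, whereas the paper compares two extensions differing in a single coordinate $\beta_T$, which is only a cosmetic difference. Your added remarks on connectedness (to see that the whole interval is attained) and on the degenerate case $v(N)=0$ tidy up points the paper leaves implicit.
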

	\begin{proof}
		We can rewrite~\eqref{eq:tau-positive-2} as
		\[
		\tau_i(w) = v(N) \frac{\sum_{S \subseteq N, i \in S}m^w(S)}{\sum_{S \subseteq N}\lvert S \rvert m^w(S)}.
		\]
		From~\eqref{eq:0positive-extensions-alternative}, it follows
		\[
		\sum_{S \subseteq N, i \in S}m^w(S) = \sum_{S \in \Kc}(1-\beta_S)m^v(S \cup i)
		\]
		and we already showed in the proof of Theorem~\ref{thm:tau-positive-k} that
		\[
		\sum_{S \subseteq N} \lvert S \rvert m^w(S) = \sum_{S \in \Kc} m^v(S\cup i)\left(s+1-\beta_S\right).
		\]
		By substituting $a=\sum_{S \in \Kc}(1-\beta_S)m^v(S \cup i)$ and $b=\sum_{S \in \Kc}s \cdot m^v(S \cup i)$, we get
		\[
		\tau_i(w)=\frac{a}{a+b}.
		\]
		Now, for another $\P_0$-extension $(N,u)$, denote by $\beta^u_S$ for every $S \in \Kc$ its coefficients with respect to~\eqref{eq:0positive-extensions-alternative}. Further, fix $T \in \Kc$ and choose $(N,u)$ such that $\beta_S=\beta^u_S$ for every $S \in \Kc \setminus \{T\}$ and set $\beta^u_T \neq \beta_T$. If we denote by $c = (\beta_T-\beta^u_T)m^v(T \cup i)$, we can express
		\[
		\tau_i(u) = \frac{a+c}{a+b+c}.
		\]
		Now $\tau_i(w) \leq \tau_i(u)$ if and only if
		\[
		\frac{a}{a+b} \leq \frac{a+c}{a+b+c}.
		\]
		As
		\[
		\frac{a}{a+b} = \frac{a^2 + ab + ac}{(a+b)(a+b+c)}\hspace{0.2in}\text{and}\hspace{0.2in}\frac{a+c}{a+b+c} = \frac{a^2 + ab + ac + bc}{(a+b)(a+b+c)},
		\]
		we see that $\tau_i(w) \leq \tau_i(u)$ if and only if $c > 0$
                which means $\beta_T > \beta^u_T$. This implies that the maximal
                $\tau_i$ is attained for $\beta_S=0$ for every $S \in \Kc$,
                which yields game $(N,v_1)$, while the minimal $\tau_i$ is
                reached for $\beta_S=1$ for every $S \in \Kc$, which yields game
                $(N,v_0)$. 
	\end{proof}
	
	\section{Conclusions}
        Based on our methods introduced in~\cite{Cerny2022}, we derived approximations of standard solution concepts of cooperative games with partial information. We focused on the case where known information is centered around one specific player and we investigated different sets of $C$-extensions and analysed approximations of the core, the Shapley value and the $\tau$-value. We showed there are two extensions $(N,v_0)$, $(N,v_1)$ (defined in~\eqref{eq:min-max-vertices}) playing the key role in the description of the approximations. Intuitively, they reflect contribution of the player around which the incomplete game is centered.
        There are two structures of $\K$, which are connected to the case studied in this paper, which we would like to address in the near future:
         \begin{enumerate}
            \item $\K \subseteq 2^N: S,T \in \K \implies S \cup T \in \K$,
            \item $\K = \{ S \subseteq N \mid i \notin S\}$ for a fixed $i \in N$.
         \end{enumerate}
         The first structure is a generalisation of $\K_i$ studied in this text and was studied under the scope of restricted games (see~\cite{Brink2023}). The second structure represents a game on $N \setminus i$. The question of $C$-extendability is then equivalent to adding player $i$ to the game in such a way that the game remains in $C$. We believe there might be close relations between games with this structure and games with $\K_i$.	
 
        The next big step regarding the approximations is to derive tools to further
analyse the strength of the approximations. So far, we considered only the
inclusion as a measure that one approximation is better than another. For
example, if there is a $C$-extension for which the solution concept is equal
exactly to the weak solution (see proof of Theorem 28), the approximation is
clearly the best possible we can get if we consider that the underlying game is from
$C$. But consider weak solutions of one-point solution concepts. Clearly, the smaller
the weak solution (in volume, range of values, ...), the better approximation
we are getting. We also want to address these questions in near future.

\section*{Acknowledgement}
 The first author was supported by the Charles University Grant Agency (GAUK 341721) and by the grant P403-22-11117S of the Czech Science Foundation.


\end{document}